\newcommand{\appendixbreak}{\addtocontents{toc}{\vspace{1ex}}}
\definecolor{teal}{RGB}{60, 100, 200}
\let\oldFootnote\footnote
\newcommand\nextToken\relax
\renewcommand\footnote[1]{\oldFootnote{#1}\futurelet\nextToken\isFootnote}
\newcommand\isFootnote{\ifx\footnote\nextToken\textsuperscript{,}\fi}
\newtheorem*{theorem*}{Theorem}
\newtheorem*{corollary*}{Corollary}
\newtheorem{observation}{Observation}
\newtheorem{corollary}{Corollary}
\newtheorem{lemma}{Lemma}
\newtheorem{theorem}{Theorem}
\newtheorem{definition}{Definition}
\newtheorem{proposition}{Proposition}
\newtheorem{question}{Question}
\newtheorem{fact}{Fact}
\theoremstyle{remark} 
\newtheorem{remark}{Remark}
\numberwithin{observation}{section}
\numberwithin{claim}{section}
\numberwithin{corollary}{section}
\numberwithin{lemma}{section}
\numberwithin{theorem}{section}
\numberwithin{definition}{section}
\numberwithin{notation}{section}
\numberwithin{note}{section}
\numberwithin{notice}{section}
\numberwithin{proposition}{section}
\numberwithin{fact}{section}
\definecolor{greenish}{RGB}{0, 140, 80}
\newcommand{\gcm}[1]{{\color{greenish} [#1]}}
\renewcommand{\ldots}{\ldotp\ldotp}
\newcommand{\al}{a}
\newcommand{\be}{b}
\newcommand{\aL}{a}
\newcommand{\bL}{b}
\newcommand{\pay}{\rho}
\title{Welfare and Beyond in Multi-Agent Contracts\thanks{First version: November 2024; this version: April 2025.}}
\author{Gil Aharoni\thanks{School of Computer Science, Tel Aviv University. Email: \texttt{gilaroni35@gmail.com}.} \and
Martin Hoefer\thanks{Dept.~of Computer Science, RWTH Aachen University. Email: \texttt{mhoefer@cs.rwth-aachen.de.}} \and
Inbal Talgam-Cohen\thanks{School of Computer Science, Tel Aviv University. Email: \texttt{inbaltalgam@gmail.com}.}}
\newcommand{\bfIf}{\textbf{if}}
\newcommand{\bfThen}{\textbf{then}}
\newcommand{\bfElse}{\textbf{else}}
\date{} 
\begin{document}
\pagenumbering{gobble}  
\begin{titlepage}

\maketitle
\begin{abstract}

A principal delegates a project to a team $S$ from a pool of $n$ agents. The project's value if all agents in $S$ exert costly effort is $f(S)$. To incentivize the agents to participate, the principal assigns each agent $i\in S$ a share $\rho_i\in [0,1]$ of the project's final value (i.e., designs $n$ linear contracts). The shares must be feasible---their sum should not exceed $1$. It is well-understood how to design these contracts to maximize the principal's own expected utility, but what if the goal is to coordinate the agents toward maximizing social welfare? 

We initiate a systematic study of multi-agent contract design with objectives beyond principal's utility, including  welfare maximization, for various classes of value functions $f$. Our exploration reveals an arguably surprising fact: If $f$ is up to XOS in the complement-free hierarchy of functions, then the optimal principal's utility is a constant-fraction of the optimal welfare. This is in stark contrast to the much larger welfare-utility gaps in auction design, and no longer holds above XOS in the  hierarchy, where the gap can be unbounded. 

A constant bound on the welfare-utility gap immediately implies that existing algorithms for designing contracts with approximately-optimal principal's utility also guarantee approximately-optimal welfare. The downside of reducing welfare to utility is the loss of large constants. To obtain better guarantees, we develop polynomial-time algorithms directly for welfare, for different classes of value functions. These include a tight $2$-approximation to the optimal welfare for symmetric XOS functions. 

Finally, we extend our analysis beyond welfare to the project's value under general feasibility constraints. Our results immediately translate to budgeted welfare and utility. 
\end{abstract}

\setcounter{tocdepth}{2} 
{\hypersetup{linkcolor=black} \vfill \tableofcontents}

\end{titlepage}

\pagenumbering{arabic}  
\section{Introduction}

In contract design, a principal delegates a project to one or more agents. The contract specifies the principal's payments to the agents according to the project's outcome.
These payments have two effects: (i)~They shape the agents' efforts, and thus, how much welfare is created from their work on the project; (ii)~They divide the welfare among the principal and agents. 

The rapidly-expanding body of research on \emph{combinatorial} contracts aims to find contracts with good performance in polynomial time, given settings with a combinatorial flavor (such as many possible agent combinations, or multi-dimensional project outcomes or agent actions). So far, the focus has been on the objective of maximizing the \emph{principal's utility} (also referred to simply as \emph{utility}). In combinatorial contract settings with a \emph{single} principal-agent pair~\cite{DuttingRT20,DuettingEFK21}, maximizing the principal's utility is indeed the main objective of interest. This is because the other natural objective of maximizing \emph{welfare} is straightforward to achieve: the contract need only specify that the entire {value} generated by the agent's effort will be transferred to the agent, and this is sufficient incentive for the agent to choose the welfare-maximizing effort.

Our starting point for this work is the observation that for multiple agents, this is no longer the case, as rational agents will attempt to free-ride on the efforts of their peers. Welfare maximization thus requires a coordination device---and contracts seem just the right tool to achieve this. However, the design of welfare-maximizing contracts has not been systematically addressed to date, possibly due to the triviality of maximizing welfare for a single agent (see the open question posed recently in \cite{dutting2024algorithmic}).
\emph{In this paper we aim to fill this gap, by studying contracts for (approximately) optimizing social welfare in multi-agent settings.} 

\paragraph{Setting.}

Our setting is the standard multi-agent contract setting~\cite{original-contract,multi-agents}, which we present in a slightly nonstandard way: There is a pool of $n$ agents, and a project with a \emph{value function} $f:2^{[n]}\to\mathbb{R_+}$, where $f(S)$ is the value of the project if the set of agents $S\subseteq [n]$ works on it as a team.%
\footnote{A more standard, equivalent interpretation of $f(S)$ (after normalization) is as the probability that the project succeeds if team $S$ works on it, generating a reward of $1$ for the principal if it does.} 
The value function $f$ is succinctly represented and accessed via value and/or demand oracles.
Each agent $i$ assigned to the project can choose between working and shirking, where the agent's cost for working is $c_i\ge 0$ and shirking costs nothing. The principal cannot directly observe each agent's choice, only the final value $f(S)$. This creates \emph{moral hazard}, the hallmark of contract design. 
To incentivize the agents to work, the principal designs a contract that allocates to each team member $i$ a share $\rho_i$ of the project's eventual value.%
\footnote{$\rho_i$ is also known as agent $i$'s linear contract. Linear contracts are without loss of generality in multi-agent settings.}
Allocating shares to incentivize effort is common practice, with applications ranging from start-up ventures to scientific research projects.

Given a team $S$, it is well-known how to design the share $\pay_i$ such that in equilibrium, agent $i\in S$ is incentivized to work: Let $\pay_i:=c_i/f(i: S)$, 
where $f(i:S)$ is the marginal contribution of $i$ to the value created by $S$.%
\footnote{Formally, for every $i\in[n]$ and $S\subseteq [n]$, $f(i:S):=f(S\cup i)-f(S\setminus i)$.}
That is, an agent's share is proportional to his cost and inverse-proportional to his marginal contribution. The latter may seem counterintuitive at first glance, but if an agent's work hardly makes a difference in terms of the project's value, his share of the value must be large for him not to shirk. In this case, it may not be beneficial to include the agent in $S$ to begin with.

Once $S$ is fixed, the utility of every agent $i\in S$ is $\pay_i f(S) - c_i$. The principal's utility $g(S)$ is $(1-\rho(S))f(S)$, where $\rho(S)=\sum_{i\in S} \pay_i$ is the total share of the project allocated to the agents. The welfare $w(S)$ is $f(S)-c(S)$, where $c(S)=\sum_{i\in S} c_i$ is the total cost of the agents in $S$ for exerting effort. Note that as usual in economics, the welfare is the sum of the players' utilities. 

\subsection{Research Questions} 

The main algorithmic challenge is in optimizing the team~$S$.
Importantly, not all sets of agents are admissible; a set must be \emph{feasible} (a.k.a.~\emph{budget balanced}), meaning that the sum of shares is bounded:
\begin{equation*}
    \rho(S)\le 1.
\end{equation*}
More generally, a set is said to be $b$-feasible if $\rho(S)\le b$.
This leads to an algorithmic question:

\begin{question}[Approximation]
\label{Q1}
Given a multi-agent contract setting with $n$ agents and value function~$f$, design a polynomial-time algorithm (with value and/or demand oracle access to $f$) that finds an approximately welfare-maximizing team $S$ among all feasible teams.
\end{question}

The objective of maximizing welfare subject to feasibility is well-motivated in scenarios where the principal's project is not for profit.
For example, in government contracts, maximizing the social welfare (rather than the principal's utility) is a primary concern (e.g.,~\cite{KangM22}). 
Cutting-edge work on AI agents explores the role of contracts in enhancing cooperation among the agents, with the goal of solving social dilemmas and increasing social good (e.g.,~\cite{ChristoffersenH23,Yocumetal23,ivanov2024principal}). 
The study of welfare-maximizing contracts is also an important complement to the intensive study of contract design for maximizing the principal's utility, which has been criticized as a framework that might be used by principals to exploit agents. 

When approaching Question~\ref{Q1}, we are in the interesting situation that contracts for maximizing the principal's utility in multi-agent settings have already been studied~\cite{original-contract,multi-agents,multi-multi,supermodular}.
In comparison, in auction design, maximizing welfare was understood much earlier than maximizing the auctioneer's utility (revenue)~\cite{Vickrey61,Clarke71,Groves73,Myerson81}.
This raises the question: Can we utilize the existing algorithms? This hinges on establishing a connection between the principal's utility and welfare, but in auction design, welfare is usually bounded away from revenue except for very special cases (like single-parameter bids drawn from monotone hazard rate distributions). Is this the case in contract design?

\begin{question}
\label{Q2}
In multi-agent contract settings, the principal's utility is trivially upper-bounded by welfare;
how large can the ratio between welfare and utility become, as the number of agents $n$ grows large and for different classes of value functions? 
\end{question}

We refer to the ratio in Question~\ref{Q2} as the \emph{welfare-utility gap}. Question~\ref{Q2} is mathematical/economic in nature (rather than algorithmic); the welfare-utility gap reflects how aligned the principal's interests are with those of a social planner. 
However, a constant gap would have algorithmic implications, since the approximation guarantees of existing algorithms for maximizing the principal's utility would expand to welfare with only a constant loss.

In \emph{single}-agent contract settings, the welfare-utility gap is fully characterized. It is known to coincide with the number of actions available to the agent~\cite{BalmacedaBCS16,DuttingRT20} (in our setting the available actions are work and shirk). For \emph{multiple} agents, the gap can potentially be much larger.
The best known upper bound for $n$ agents is $2n$, recently shown by \cite{multi-multi}. They compare the optimal principal's utility to $\max_{S\subseteq [n]} \{f(S)-c(S)\}$, which is the optimal \emph{unconstrained} welfare by a (possibly infeasible) team $S$.
They upper-bound this larger gap by the total number of actions summed across all agents, which is $2n$ in our multi-agent settings. 
The question of whether the gap for $n$ agents indeed grows with $n$ or not was open prior to our work. 

\subsection{Our Contribution}

We provide answers to Questions~\ref{Q1} and \ref{Q2} for a range of multi-agent contract settings.
Our investigation spans several classes of set functions to which the value function $f$ can belong, including the complement-free hierarchy of 
additive $\subsetneq$ submodular $\subsetneq$ XOS $\subsetneq$ subadditive (see, e.g.,~\cite{LehmannLN06}). We also consider supermodular value functions, which admit complementarities. 
Our main positive results are for XOS and submodular value functions, and are summarized in \cref{tab:tiny}. 

\begin{table}[tp]
\centering
\begin{tabular}{|l||c|c|c|c|}
\hline
\multirow{2}{*}{\bf\makecell{~~Upper\\~~~Bounds}} & \multicolumn{2}{c|}{\textbf{General agents}} & \multicolumn{2}{c|}{\textbf{Symmetric agents}} \\
& \textbf{Approx} & \textbf{Gap} & \textbf{Approx} & \textbf{Gap} \\
\hline
\hline
\textbf{XOS $f$} & \makecell{$188$\\ Thm.~\ref{1}} & \makecell{$188$\\Cor.~\ref{XOS:demand:ratio}} & \makecell{$2+o(1)$\\ Thm.~\ref{4}} & \makecell{$16+o(1)$\\ Thm.~\ref{3}} \\
\hline
\textbf{Submodular $f$} & \makecell{$468$\\ Cor.~\ref{XOS:submodular:ratio}} & \makecell{$5$\\ Prop.~\ref{submodular:UB}} & \makecell{$1$ (OPT)\\ Prop.~\ref{s-submod:binary}} & \makecell{$5$\\Prop.~\ref{submodular:UB}} \\
\hline
\end{tabular}
\caption{\footnotesize Our main positive results for value functions that are XOS (upper row) or submodular (lower row). The agents can be either heterogeneous (left-hand side) or symmetric (right-hand side). Under ``Approx'' we specify the approximation ratios that our algorithms guarantee for welfare, answering Question~\ref{Q1}, and under ``Gap'' we specify our upper bounds on the welfare-utility gaps, answering Question~\ref{Q2}. Our approximation algorithms assume value oracle access to submodular functions, and both value and demand oracle access to XOS functions. A value query to a \emph{general} function requires specifying the set $S\subseteq [n]$, and our algorithms run in time polynomial in the query size~$n$. A value query to a \emph{symmetric} function requires specifying only the set's size $|S|\in [n]$, and our algorithms run in time polynomial in the query size $\log n$.} 
\label{tab:tiny}
\end{table}


\paragraph{Results in \cref{tab:tiny} for general agents: Gaps.} 

In relation to Question~\ref{Q2}, our results include the first constant upper bounds on the welfare-utility gaps for submodular and XOS value functions. 
A constant gap means that the optimal principal's utility approximates the optimal welfare up to a constant factor. This is arguably surprising and has interesting implications: 
(i)~Any existing algorithm for finding a team (and the corresponding contract) such that the \emph{principal's utility} is approximately maximized, also approximately maximizes \emph{welfare}, with a loss of a constant multiplicative factor in the approximation guarantee. 
(ii)~In the other direction, a constant gap of~$r$ means that any lower bound for \emph{welfare} approximation exceeding $r$ implies a lower bound for approximating the \emph{principal's utility}. 

From a technical perspective, establishing a constant upper bound on the welfare-utility gap for XOS poses the following challenge: 
Consider two teams $T\subseteq S \subseteq [n]$ and an agent $i\in T$. What is the relation between $i$'s share $c_i/f(i: T)$ as part of team $T$, and $i$'s share $c_i/f(i: S)$ as part of the larger team $S$? 
If the value function $f$ is submodular, then $i$'s marginal contribution to the value created by $T$ is higher than to the value created by $S$, and so $i$'s share increases as part of the larger team.
This property is useful (e.g., it implies that feasibility of a team is downward-closed), and helps us establish an upper bound of $5$ on the welfare-utility gap for submodular settings (Proposition~\ref{submodular:UB}). 
For XOS, this property
is no longer guaranteed, complicating the analysis.
Nevertheless, we are able to establish a constant bound on the welfare-utility gap for XOS value functions in an \emph{algorithmic} way, as we now describe. 

\paragraph{Results in \cref{tab:tiny} for general agents: Approximations.} 

For XOS, we design a polynomial-time algorithm that computes a contract using value and demand queries, which guarantees the principal a constant fraction of the optimal welfare as 
their utility (see Theorem~\ref{1}). Thus, the optimal principal's utility is at least a constant fraction of the optimal welfare (see Corollary~\ref{XOS:demand:ratio}). Our algorithm's approximation ratio ($188$) is not small, but improves upon the previously best-known approximation guarantee for principal's utility, which was roughly $258$ (even though the previous guarantee holds with respect to a weaker benchmark). We anticipate that future work will further tighten the constants, making the theory more applicable.

For submodular functions, a $(1-1/e)$-approximate demand oracle can be simulated in polynomial time using only value queries~\cite{submodular-demand1, submodular-demand2}. 
Thus our algorithm for XOS also implies a polynomial-time $O(1)$-approximation algorithm for submodular value functions that uses only value queries, and also improves upon the previously known guarantee (that was roughly $642$) for principal's utility (see Corollary~\ref{XOS:submodular:ratio}). 

\paragraph{Lower bounds.} 

While they do not appear in \cref{tab:tiny}, we also establish lower bounds on welfare-utility gaps, summarized in \cref{symmetric_subadditive:prop:LB}. Our lower bounds show that XOS is the \enquote{limit} for a constant welfare-utility gap in the complement-free hierarchy of classes: The welfare-utility gap is lower-bounded by $\sqrt{n}-1$ for (even symmetric) subadditive value functions, and is unbounded for (even symmetric) supermodular value functions.%
\footnote{A value function $f$ is 
symmetric if $f(S)=f(|S|)$ for every $S\subseteq [n]$.}
Also, note that the smallest upper bound in \cref{tab:tiny} on any welfare-utility gap is $5$, holding for submodular value functions; we show a lower bound of $4$ on the welfare-utility gap for (even symmetric) \emph{additive} value functions.

\newcommand{\classP}{\textsf{P}}
\newcommand{\classNP}{\textsf{NP}}

We also establish a basic lower bound on the approximation ratio that can be achieved in polynomial time (with oracle access) under standard complexity assumptions. We show 
that for submodular value functions, no polynomial-time algorithm can approximate welfare within a factor of less than $\frac{e}{e-1}$ using only value queries, unless $\classP = \classNP$ (see \cref{e/e-1}).%
\footnote{To obtain this result we adapt a lower bound of \cite{inapproximability}.
Additional lower bounds from the literature on utility can be translated to welfare and value; see~\cite{supermodular, multi-agents, inapproximability}.}

\paragraph{Results in \cref{tab:tiny} for symmetric agents.} 

We initiate an investigation of \emph{symmetric agents}, where the value function is symmetric and the agents have a uniform cost $c$ for working.%
\footnote{Some of our results extend to the case where only the value function is symmetric but costs are non-uniform across agents, e.g., our results for symmetric XOS functions hold for this case.} 
This case is of practical interest as it captures central applications like crowdsourcing markets~\cite{crowdsource}, where crowdsourcing platforms can make it hard to distinguish between agents, and productivity is largely determined by the team's volume. 
It is also of theoretic interest, since we know from the algorithmic study of combinatorial \emph{auctions} that the understanding of symmetric settings (in particular, multi-unit rather than multi-item auctions) can shed light on the design problem~\cite{DobzinskiN10,nisan2015algorithmic}. 

We give an asymptotic upper bound of $16$ on the welfare-utility gap for symmetric settings with XOS value functions, which we refer to in short as \emph{sXOS} (Theorem~\ref{3}). As for approximation, the constant welfare-utility gap immediately implies that existing algorithms for approximating the principal's utility yield constant---but large---approximation ratios. Can the approximation guarantees be improved by a more direct design? 

To design an approximation algorithm for the symmetric case, we must revisit the question of how a symmetric instance is represented:
The representation size of symmetric, $n$-agent contract settings is of order $\log n$ rather than $n$.%
\footnote{E.g., we query $f(k)$ where $k\le n$ is the size of the team, rather than $f(S)$ where $S\subseteq [n]$, so we need $\log n$ rather than $n$ bits to describe the query.}
Thus, our algorithms for computing (approximately) optimal contracts for such settings (Theorem~\ref{4} and Proposition~\ref{s-submod:binary}) run in \enquote{logarithmic} time, just like the algorithms developed in the literature for multi-unit auctions (\enquote{logarithmic} is a slight misnomer as the running time is in polynomial in the representation size).
This is an advantage when the number of agents $n$ is very large, as is the case e.g.~in crowdsourcing.

For sXOS, our direct algorithm achieves a $(2+o(1))$-approximation using only value queries (Theorem~\ref{4}), which is tight (see \cref{2_is_tight}). For symmetric submodular, a simple binary search approach suffices to (exactly) maximize welfare (Proposition~\ref{s-submod:binary}).

\paragraph{Generalizations beyond welfare.}

In addition to welfare maximization, we extend our study of multi-agent contract design 
to maximizing the project's value $f(S)$ subject to feasibility constraints on $S$. Maximizing the value captures, e.g., an early-stage venture and public institutions aiming at maximizing output subject to financial constraints.%
\footnote{In economic terms this can be considered the \emph{revenue}, which becomes \emph{net income} after subtracting the agents' compensations.}
Unlike the welfare-utility gap, the value-welfare gap is unbounded even for a single agent (see \cref{Value-Walfare-Gap}). In Proposition~\ref{thm:XOS_value} we give our main result for this objective---a constant-factor approximation algorithm for XOS value functions using demand queries (and thus also for submodular value functions using only value queries). The approximation guarantee is a factor $4$ improvement relative to our welfare and utility approximation guarantee. 

Interestingly, this result immediately extends to approximating the value under any $b$-feasibility constraint, i.e., requiring that $\rho(S)=\sum_{i\in S}\frac{c_i}{f(i:S)}\le b$ where $b>0$. This is since, if we divide by $b$ both sides of the inequality constraint, we get the standard feasibility constraint (i.e., $\rho(S) \le 1$) for the scaled costs $c_i/b$, without altering the objective function $f$. Note that the same does not hold for the welfare and utility objective functions, where scaling the costs does alter the objective. In \cref{pro:XOS_transfer} we obtain a similar result for a constraint on the total \emph{transfer}: $\rho(S)f(S)\le B$ where $B>0$.%
\footnote{Essentially since such a constraint translates to $b$-feasibility for $b=\frac{B}{f(S)}$, and $f(S)$ for the optimal $S$ can be estimated.} 
In both cases, a small budget $b$ or $B$ (i.e., $b<1$ or $B<f(S)$, respectively) models a scenario in which the principal has limited flexibility in expending resources. A large budget ($b>1$ or $B>f(S)$, respectively) models a subsidized project. 

In \cref{subsection:value:corollaries}, we observe that for any $b$-feasible set $S$, since $\rho(S)\le b$, it holds that $w(S)\ge g(S)\ge (1-b)f(S)$.
Consequently, for $b=1-\Omega(1)$,
the value-welfare and value-utility gaps subject to $b$-feasibility are constant.
This holds for any class of value functions, and implies approximations for budgeted welfare and utility (see \cref{subsection:value:corollaries}).

Finally, for symmetric XOS value functions, for any $b=o(n)$ (even when $b>1$), the value-welfare gap diminishes as the team grows larger (provided it remains $b$-feasible).
This implies that our approximation results for welfare translate to value, for any class of value functions up to symmetric XOS, and thus generalize also to welfare and value subject to $b$-feasibility;
see \cref{value:symmatric}.

\paragraph{Paper organization.} 

We formally introduce the model in \cref{section:model}. 
In \cref{section:ratio} we focus on Question~\ref{Q2}, showing upper and lower bounds on the welfare-utility gap. The upper bounds are for general submodular value functions and for symmetric XOS, stopping short of general XOS value functions, which are deferred to \cref{section:XOS}. 
Our welfare-approximation algorithm for general XOS appears in \cref{section:XOS}, along with its immediate implications---including upper bounds of the welfare-utility gap for general XOS. In \cref{section:sublinear} we show our tight welfare-approximation algorithm for symmetric XOS.
\cref{section:value} includes an approximation algorithm for the alternative objective of value and its implications.
Some additional results and proofs are deferred to the appendices. 

\subsection{Related Work}

The theory of contracts is well-developed in economics~\cite{Nobel}. But modern contracts are increasingly applied in complex, computerized environments such as online labor markets, influencer marketing platforms, global afforestation programs, and sophisticated healthcare systems~\cite{crowdsource,socialmedia,deforestation2,LiIL21,medicare}. The growing complexity motivates the development of new computational approaches~\cite{original-contract,crowdsource,minmax1}. The algorithmic toolbox for handling complexity has already been successfully applied to several combinatorial aspects of modern contracts (e.g.,~\cite{DuttingRT20,DuettingEFK21}).

\paragraph{Relation to multi-agent contracts for utility.} 

In multi-agent settings, the fact that welfare maximization  has not yet been studied is intriguing given the body of knowledge on maximizing the principal's utility. The works of \cite{multi-agents,multi-multi,supermodular,inapproximability} give algorithms and lower bounds for approximating the utility-maximizing linear contract profile among all feasible profiles, in a range of settings accessed through value or demand queries. These include settings in which the project’s value function $f$ belongs to the complement-free hierarchy of set functions, as well as settings where it exhibits complementarities (supermodular $f$). 

To establish our constant welfare-utility gap for XOS in Section~\ref{section:XOS}, we repurpose two techniques developed in \cite{multi-agents} for utility. 
Recall that we design an algorithm (Algorithm~\ref{alg3}) that returns a team of agents, guaranteed to generate utility that is a constant approximation to the optimal welfare. First, our algorithm builds upon a useful and elegant scaling technique from \cite{multi-agents} (see Algorithm~\ref{alg1} included in Appendix~\ref{appendix:XOS} for completeness). Second, we generalize the approach in~\cite{multi-agents} to search for a suitable input to Algorithm~\ref{alg1}. Our generalization appears in Algorithm~\ref{alg2} and its stronger guarantees in Lemma~\ref{lemma:alg2}. 
Our Algorithm~\ref{alg3} guarantees an improved constant approximation factor to the optimal utility compared to the guarantee in~\cite{multi-agents}, and with respect to a stronger benchmark of the optimal welfare.
Interestingly, using our constant upper bound of $188$ on the welfare-utility gap for XOS, it immediately follows that the algorithm in \cite{multi-agents} designed for approximating optimal utility and guaranteeing an approximation factor of $258$, actually also approximates the optimal welfare up to a factor of $258\cdot 188 = 48,504$. Our Algorithm~\ref{alg3} guarantees a $188$-approximation to the optimal welfare. We do not know whether the analysis of either algorithm is tight. 

The work of \cite{feldman2025budget}, which has recently been made public, is closely related to ours; while many results overlap, their focus is on budgeted contracts, whereas we focus on welfare and value subject to feasibility (with some extensions to $b$-feasibility and $B$-transfer directly obtained from our analysis of welfare and value maximization). They also focus on a general reduction between objectives; while we study gaps as well as direct approximations, and obtain improved constant guarantees for welfare and value compared to currently known guarantees for utility, and tight guarantees in natural special cases. Like ours, the work of~\cite{feldman2025budget} focuses on XOS and submodular $f$, and they also consider $f$ that can be fed more than one action per agent (as in~\cite{multi-multi}). Our work lower bounds the gap beyond XOS, and also initiates a comprehensive study of symmetric $f$.
We view the combination of these two works as attesting to the timeliness of studying objectives beyond utility in contract design.

\paragraph{Additional works.} 

The works of \cite{CacciamaniEtAl24,hidden-contracts} study utility-maximizing multi-agent contracts, but under different multi-agent models in which the agents' performance is measured individually or in which the setting is represented explicitly.
Alon et al.~\cite{AlonLST23} study welfare-maximizing contracts, but in multi-\emph{principal} setting. While multiple principals pose entirely different challenges than multiple agents, their work nevertheless inspired our investigation of welfare in the context of multi-agent contracts.
The works of \cite{SaigTR23} and \cite{budget-contracts} design single-agent and multi-agent contracts, respectively, both subject to \emph{budget feasibility constraints}.
Their objectives are neither welfare nor principal's utility---rather, maximizing (a function of) the efforts---and their budget constraints are not tied to the project value. 
\section{Model}
\label{section:model}

\paragraph{Instance.} 

We define a \emph{multi-agent  
contract setting} with a single principal and $n$ agents as a pair $(f, c)$, where $f : 2^{[n]} \to \mathbb{R}_+$ is the \emph{value} function, and $c : 2^{[n]}\to \mathbb{R}_+$ is the \emph{cost} function. 
Let $S \subseteq [n]$ denote the subset of agents who choose to work, then the value generated for the principal is $f(S)$. The value function is monotonically increasing, meaning that for every $S \subseteq T \subseteq [n]$, $f(S) \le f(T)$. 
We assume, as standard in the literature, that $f(\emptyset)=0$, and further assume that for any single agent $i$, its value is positive and upper bounds its cost.
Denote by $f(i:S):=f(S\cup \{i\})-f(S\setminus \{i\})$ the marginal contribution of agent $i$ to the value of a given subset $S$.
The cost function is additive, i.e., there is an individual cost $c_i := c(\{i\})$ for agent $i$, and $c(S)=\sum_{i\in S}{c_i}$. Individual costs are assumed to be strictly positive.

\paragraph{Interpretation.} 

The following is the standard interpretation of our setting in the literature (this interpretation is not necessary for the analysis and the reader may feel free to skip it): 
The project has two possible outcomes --- success and failure, where success has reward $1$ for the principal and failure has reward $0$. The subset $S$ of agents who exert effort determines the probability of success. The principal does not observe $S$, only the outcome of the project.
After normalization such that $f([n])= 1$, the value $f(S)$ can then be interpreted as the \emph{expected} reward of the project for the principal, when the subset of agents working on it is $S$. This interpretation clarifies that our definition above of a multi-agent contract setting is an equivalent (and arguably cleaner) way of describing standard multi-agent contract settings, and why it falls within the classic principal-agent model~\cite{holmstrom1979moral,grossman1992analysis}. In the classic model, distributions are an inherent part, since an agent chooses a hidden action that \emph{stochastically} leads to a rewarding outcome for the principal. We can describe the setting without using distributions since it suffices to specify the expected reward for each team.

\paragraph{Contract, utilities and welfare.}

Our focus is on \emph{linear} contracts, which is without loss of generality for multi-agent contract settings~\cite{dutting2024algorithmic}.
A (linear) contract $\rho$ maps every agent $i$ to $\rho(i)$, which is the fraction of the project's value (minus the base value $f(\emptyset$)) that will be allocated to agent $i$.%
\footnote{The standard (and equivalent) interpretation of $\rho(i)$, often denoted by $\alpha_i$ in the literature, is as the fraction of the principal's reward which is transferred to agent $i$.}    
In other words, agent $i$ is allocated a stake or \emph{share} $\rho(i)$ in the project. 
The principal keeps the remaining share $1-\sum_{i\in [n]}\rho(i)$ of the project. 

Given a contract $\rho$, let $S=S_\rho$ be the subset of agents incentivized to work. 
Agent $i$'s utility $\mu_i(S, \rho)$ is $\rho(i)f(S) - c_i$ if $i\in S$, and $\rho(i)f(S)$ otherwise, i.e., his fraction of the project's value, minus the cost of his effort if he is among the working agents. 
The principal's utility $\mu_p(S, \rho)$ is $(1 - \sum_{i\in [n]}\rho(i))f(S)$, i.e., the fraction of value retained after allocating the agents their share. 
The \emph{welfare} is
$$
w(S):=f(S) - c(S), 
$$
i.e., the project's value minus the total cost of work.
Notice that the welfare does not depend directly on the contract $\rho$---only through the team $S$ of agents incentivized to work---and that it is equal to the sum of all players' utilities: $w(S) = \mu_p(S,\rho) + \sum_{i\in[n]}\mu_i(S,\rho)$.

\paragraph{Incentivizing a team $S.$}
Since agents receive the same project share regardless of their (hidden) effort, each agent will opt to work only if the increase in their share's value as a result of their contribution outweighs their cost of exerting effort.
Fix a contract $\rho$ and consider the subset $S$ of agents with nonzero shares. This is the set of agents the contract wishes to incentivize to work (an agent $i\notin S$ will not work since they have positive cost but their share is zero).
We say that $\rho$ \emph{incentivizes} $S$ if every agent $i \in S$ weakly%
\footnote{We assume, as is standard in the literature, that an agent whom the principal wants to work and thus allocates a positive share to
will \emph{tie-break} in favor of the principal. That is, such an agent will exert effort if they receive nonnegative utility from working.
}
prefers to work: $\mu_i(S, \rho) \ge \mu_i(S \setminus {i}, \rho)$. 
This simplifies to $\rho(i)\ge c_i/f(i:S)$.
Let $\rho_S$ be the contract that incentivizes $S$ using the minimum share $\rho(i)=c_i / f(i:S)$ per agent.
We denote by $\rho(S)=\rho_S(S)$ the minimum total share needed to incentivize $S$.
We define the \emph{utility} function 
$$
g(S):=\mu_p(S,\rho_S)=(1-\rho(S))f(S)
$$ 
to be the principal's maximum utility under contract $\rho_S$, and use $g(i)$ for agent $i$'s utility under the same contract.

\paragraph{Feasibility.}

A contract is \emph{feasible} (a.k.a.~\emph{budget-balanced}) if $\sum_{i\in [n]}\rho(i) \le 1$, i.e., the agents' fractions sum up to at most $1$. We can generalize the feasibility constraint to $b$-feasibility by replacing $1$ with $b$ for some positive $b$. We say a set $S$ is $b$-feasible if $\rho(S)\le b$ (that is, if $\rho_S$ is $b$-feasible). 
A different feasibility constraint is a bound on the total transfer of value to the agents: We say a set $S$ satisfies the \emph{$B$-transfer} constraint for $B\ge 0$ if $\rho(S)f(S)\le B$. Equivalently, a (non-empty) set $S$ satisfies the $B$-transfer constraint if it is $b$-feasible for $b=\frac{B}{f(S)}$. 

\paragraph{Benchmarks and gaps.}

For any instance $(f,c)$ we denote by $OPT_{(f,c)}(w)$ the optimal welfare over all feasible sets given the multi-agent contract instance $(f, c)$. 
Similarly, let $OPT_{(f,c)}(g)$ and $OPT_{(f, c)}(f)$ denote the optimal principal's utility and the optimal value, respectively, over all feasible sets.
We omit $(f,c)$ from the notation where clear from context.

\begin{definition}[Welfare-utility gap]
\label{definition:ratio}
For any family $I$ of multi-agent contract instances, the \emph{welfare-utility gap} of $I$ is defined as 
$$
\sup\limits_{(f, c)\in I} \frac{OPT_{(f, c)}(w)}{OPT_{(f, c)}(g)}. 
$$
\end{definition}

\noindent The \emph{value-utility gap} of $I$ is similarly defined by replacing $OPT_{(f, c)}(w)$ with $OPT_{(f, c)}(f)$, and likewise we can define the \emph{value-welfare gap}. 

\paragraph{Symmetric agents.} 
We say a multi-agent contract setting is \emph{symmetric} if its value function is a symmetric set function that depends only on the number of working agents and not on their identities. Formally, for any two agent subsets $S,T$ such that $|S| = |T|$, it holds that $f(S) = f(T)$. 
If $f$ is symmetric we can use the simplified notation $f(k)$ for $k\in[n]$ (and similarly for $c(k)$ and $\rho(k)$ if the cost and/or contract are symmetric), and denote by $m:[n]\to \mathbb{R}_+$ the marginal contribution $m(k := f(k) - f(k - 1)$. 
We assume uniform costs $c(k)=c\cdot k$ for some $c>0$. In \emph{partially} symmetric settings, agent costs are heterogeneous, and we assume without loss of generality that the agents are ordered by increasing cost $c_1 \le ... \le c_n$, so that the first $k$ agents are always the best $k$-sized team. 

\paragraph{Value functions and oracle access.} 

Standard definitions of classes of value functions appear in \cref{Appendix: perliminaries} for completeness.
Access to a value function $f$ is via two types of oracles, \emph{value} and \emph{demand}.
A value oracle takes as input a set $S$ and returns its value $f(S)$. A demand oracle takes as input a cost vector $c$ and returns a \emph{demand set} $S$ that maximizes $w(S)=f(S)-c(S)$, 
and is not necessarily feasible. For an approximation parameter $a\in (0, 1]$, an $a$-demand oracle returns an $a$-demand set $S$ such that $w(S)\ge a\cdot f(S)-c(S)$ for every alternative set $T$ (see \cite{submodular-demand1, submodular-demand2, multi-agents}).

Consider a set of agents $A$. For every set function $f$ with domain $2^A$ (all subsets of $A$), for every subset $B\subseteq A$, we denote by $f_{|B}$ the \emph{domain restriction} of $f$ to $2^B$.  Similarly, a function $h$ is said to be a \emph{domain extension} of $f$, if $f=h_{|C}$ for some $A\subseteq C$.

\section{Welfare-Utility Gaps: sXOS, Submodular, and Lower Bounds}
\label{section:ratio}

In this section, we study the ratio between the optimal welfare and the optimal principal's utility. We give two upper bounds for the following classes of value functions: an upper bound of $16+o(1)$ for symmetric XOS (\emph{sXOS} for short; \cref{sub:ratio-ub-sxos}), and an upper bound of $5$ for general submodular (\cref{sub:ratio-ub-submod}). In \cref{sub:ratio-lbs} we establish three lower bounds, which apply even to symmetric functions: we show a lower bound of $4$ for the class of (symmetric) additive value functions, and demonstrate that the welfare-utility gap is no longer bounded by a constant beyond XOS by showing lower bounds for subadditive and supermodular functions.

\subsection[sXOS: An Upper Bound of 16+o(1)]{sXOS: An Upper Bound of $16+o(1)$}
\label{sub:ratio-ub-sxos}

\begin{theorem}\label{3}
    For any $n$-agent setting $(f, c)$ with an sXOS value function, the welfare-utility gap is upper bounded by $16+o(1)$.
\end{theorem}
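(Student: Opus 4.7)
The proof rests on the signature structural property of symmetric XOS value functions: the average value $\phi(k) := f(k)/k$ is non-increasing in $k$. This follows from fractional subadditivity applied to the uniform fractional cover of a size-$k$ set by all of its size-$k'$ subsets (each with weight $1/\binom{k-1}{k'-1}$), giving $f(k) \le (k/k') f(k')$. Two immediate consequences drive the analysis: (i) $m(k) \le \phi(k) = f(k)/k$, and (ii) every feasible team of size $k$ satisfies $f(k) \ge k^2 c$ (since $kc \le m(k) \le f(k)/k$).

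Let $k^*$ be a feasible team size maximizing welfare, so $OPT_{(f,c)}(w) = f(k^*) - k^* c$. The plan is to exhibit a feasible team $k$ whose utility satisfies $g(k) = f(k)(1-\rho(k)) \ge f(k^*)/(16+o(1))$; since $w(k^*) \le f(k^*)$, this bounds the welfare-utility gap. I split into two cases based on the share $\rho(k^*) = k^* c/m(k^*)$. In \emph{Case 1}, $\rho(k^*) \le 1/2$, and then $g(k^*) \ge f(k^*)/2 \ge w(k^*)/2$, so the gap is at most $2$.

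\emph{Case 2} is $\rho(k^*) > 1/2$. Here I locate a smaller feasible team by a dyadic averaging argument. Partition $[1,k^*]$ into blocks $I_i = (k^*/2^{i+1},\, k^*/2^i]$ for $i=0,1,\ldots,\lfloor\log_2 k^*\rfloor$, and call $I_i$ \emph{good} if it contains some $k$ with $m(k) \ge 2kc$ (equivalently, $\rho(k) \le 1/2$). For a $k$ witnessing $I_i$ good, property~(i) and $\phi(k) \ge \phi(k^*)$ give $f(k) \ge k\phi(k^*) \ge f(k^*)/2^{i+1}$, hence $g(k) \ge f(k)/2 \ge f(k^*)/2^{i+2}$. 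Thus the task reduces to showing that the smallest good index $i^*$ is bounded by a constant. I prove this by contradiction: if $I_0,\ldots,I_{L-1}$ are all bad, summing $m(k) < 2kc$ over each block yields
\[
\sum_{k\in I_i} m(k) \;<\; 2c \cdot |I_i| \cdot \max_{k\in I_i} k \;\le\; k^{*2} c \cdot 2^{-2i},
\]
and telescoping over $i$, together with $f(\lceil k^*/2^L\rceil) \ge f(k^*)/2^L$ from sXOS, gives $f(k^*)(1-2^{-L}) < \tfrac{4}{3} k^{*2} c$. Combined with the feasibility lower bound $f(k^*) \ge k^{*2}c$ this forces $i^* = O(1)$.

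The main obstacle is the careful chasing of constants needed to land on $16+o(1)$. The bound is driven by balancing two regimes: when $f(k^*) \gg k^{*2} c$ the telescoping quickly terminates and $i^*$ is tiny, whereas when $f(k^*)$ is close to $k^{*2} c$ the welfare $w(k^*) = f(k^*) - k^* c$ is itself controllable (of order $k^{*2} c$) so that a candidate at $k\approx k^*/2$---guaranteed to exist outside a short bad prefix---already achieves $g(k) = \Omega(k^{*2} c)$. Combining the factor of $2$ from the share threshold, the factor of $2^{i^*}$ from the dyadic level, and the additional factor from $w(k^*) \le f(k^*)$ produces the stated constant, with the $o(1)$ term absorbing the rounding of $k^*/2^i$ to integers and the geometric-series tail as $k^*\to\infty$.
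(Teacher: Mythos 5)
Your decomposition is genuinely different from the paper's (which runs a single pass from $k=n/a$ downward with the threshold $m(k)\le\tfrac{a}{b}m(n)$ calibrated to the marginal of the full team), but the key step in your Case~2 does not hold. First, the inequality $f(\lceil k^*/2^L\rceil)\ge f(k^*)/2^L$ goes the \emph{wrong way} for your telescoping: it gives $f(k^*)-f(\lceil k^*/2^L\rceil)\le f(k^*)(1-2^{-L})$, which does not combine with $f(k^*)-f(\lceil k^*/2^L\rceil)<\tfrac43 k^{*2}c$ to produce $f(k^*)(1-2^{-L})<\tfrac43 k^{*2}c$. Second, even if that inequality did hold, dividing by $f(k^*)$ and using $f(k^*)\ge k^{*2}c$ yields only $1-2^{-L}<\tfrac43$, which is true for every $L$ --- there is no contradiction and hence no bound on $i^*$. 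In fact $i^*$ can be as large as $\Theta(\log k^*)$: with $c=1$, $f(n)=n^2$, $m(k)=n$ on $(n/2,n]$, $m(k)=n/2$ on $(n/4,n/2]$, and $m(k)$ just below $2k$ on each lower dyadic scale (and $m$ constant once you leave the bad prefix), one checks that $\phi=f(k)/k$ is non-increasing, $[n]$ is exactly feasible and welfare-maximizing, and every $I_0,\ldots,I_{L-1}$ is bad for arbitrarily large $L$. Since your per-level accounting $g(k)\ge f(k^*)/2^{i^*+2}$ decays as $2^{-i^*}$, it cannot land on $16$ once $i^*>2$, so the proof as written does not establish the theorem.

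The strategy is salvageable, but you must account for the bad prefix cumulatively rather than per-level. The telescoped losses over $I_1,\ldots,I_{L-1}$ form a geometric series summing to less than $c\big((k^*/2)^2-(k^*/2^L)^2\big)\le ck^{*2}/4\le f(k^*)/4$, and combined with the independent sXOS bound $f(\lfloor k^*/2\rfloor)\ge (1-o(1))f(k^*)/2$ this gives the uniform lower bound $f(\lfloor k^*/2^L\rfloor)>(1/4-o(1))f(k^*)$ for \emph{every} $L$ in the bad prefix. The witness $k\in I_{i^*}$ then satisfies $f(k)>\tfrac12 f(\lfloor k^*/2^{i^*}\rfloor)>(1/8-o(1))f(k^*)$ and $g(k)\ge f(k)/2>(1/16-o(1))f(k^*)$ regardless of how large $i^*$ is, recovering $16+o(1)$. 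That replacement is where the real content of the bound lives, and it is missing from your write-up.
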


\begin{remark}
    The upper bound in \cref{3} holds even for \emph{partially} symmetric XOS settings (with heterogeneous agent costs). 
    It is exactly $16$ when $n$ is a multiple of $4$.
\end{remark}

To prove \cref{3}, we first use two fundamental properties of sXOS functions to establish \cref{sXOS:lem:algorithm} below. 
Recall that for symmetric functions, $m:[n]\to\mathbb{R}_+$ is the marginal contribution function mapping $k$ to the marginal contribution of the $k$th agent $f(k) - f(k - 1)$.
The \emph{marginals property} in \cref{sXOS:marginals} states that 
$$
m(k)\le \frac{1}{k} \cdot f(k)
$$ 
for every $k\ge 1$. The \emph{concavity property} in \cref{sXOS:concave} states that for every $k\in[n]$ and parameter $a\in [0,1]$ such that $ak\in [n]$, it holds that 
$$
a \cdot f(k) \le f(a k).
$$

Also, recall that in the partially symmetric setting we assume that the agents are ordered w.r.t.\ their costs in non-decreasing order $c_1\le ...\le c_n$. 
Thus, we can restrict attention to sets of agents of the form $[k]$ such that $k\in [n]$ (i.e., the first $k$ agents, who have the lowest cumulative cost among all subsets of $k$ agents). We write $w(k)=f(k)-c(k)$, where $c(k)=c([k])$. 
To bound the maximum ratio between optimal principal utility and optimal welfare, we assume, without loss of generality, that the set that maximizes welfare is $S^* = [n]$.%
\footnote{Otherwise, we can restrict the instance to $f_{|S^*}$, where $S^*$ is the set that maximizes welfare. The optimal welfare stays the same, while the optimal utility can only get smaller by restricting the domain.} 

\begin{lemma}
\label{sXOS:lem:algorithm}
   Consider a procedure that gets as input arguments $b>a\ge 1$ where $a$ divides $n$, initializes $k\gets n/\aL$, and decreases $k$ as long as the marginal contributions satisfy $m(k)\le \frac{\aL}{\bL}\cdot m(n)$. Upon termination, for the value $k > 0$ of the last iteration the two following properties hold:
   \begin{enumerate}
       \item $f(k)> (\aL^{-1}-\bL^{-1})f(n)$;
       \item $g(k)> (\aL^{-1}-\bL^{-1})(1-\bL\cdot\aL^{-2})f(n)$.
   \end{enumerate}
\end{lemma}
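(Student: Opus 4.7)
The plan is to first extract the properties that hold at termination and then prove the two items in order. Since the loop decreases $k$ as long as $m(k)\le \tfrac{a}{b}m(n)$ and, by hypothesis, $k>0$ at termination, the stopping condition must be a failure of the inequality at the current $k$; hence $m(k)>\tfrac{a}{b}m(n)$. Moreover, every $j\in\{k+1,\dots,n/a\}$ was visited during the execution, so $m(j)\le \tfrac{a}{b}m(n)$ for each such $j$. I will also use the section's standing assumption that the welfare-maximizing set is $[n]$, which forces $[n]$ to be feasible and yields $\rho([n])=c(n)/m(n)\le 1$, i.e., $c(n)\le m(n)$.

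For item 1, I telescope the marginals and invoke the two sXOS properties stated just before the lemma:
\[
f(n/a)-f(k) \;=\; \sum_{j=k+1}^{n/a} m(j) \;\le\; (n/a-k)\cdot \tfrac{a}{b}\,m(n) \;\le\; (n/a-1)\cdot \tfrac{a}{b}\,m(n),
\]
where the last step uses $k\ge 1$. The marginals property gives $m(n)\le f(n)/n$, so substituting yields $f(n/a)-f(k)\le \tfrac{n-a}{bn}\,f(n) < \tfrac{1}{b}f(n)$ (strict because $a\ge 1$). Combining with the concavity property applied at size $n/a$, namely $f(n/a)\ge \tfrac{1}{a}f(n)$, I obtain $f(k)>(a^{-1}-b^{-1})f(n)$.

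For item 2, the symmetry of $f$ ensures that every agent in $[k]$ has marginal contribution $m(k)$ to the team, so $\rho([k])=c(k)/m(k)$. Because agents are ordered by nondecreasing cost, a standard averaging argument gives $c(k)\le (k/n)c(n)$, and with $k\le n/a$ this becomes $c(k)\le c(n)/a$. Combining with the termination lower bound on $m(k)$ and the feasibility bound $c(n)\le m(n)$:
\[
\rho(k) \;=\; \frac{c(k)}{m(k)} \;<\; \frac{c(n)/a}{(a/b)\,m(n)} \;=\; \frac{b}{a^{2}}\cdot \frac{c(n)}{m(n)} \;\le\; \frac{b}{a^{2}}.
\]
Hence $1-\rho(k)>1-b/a^{2}$, and (assuming $b<a^{2}$ so that this factor is positive, which is the regime in which the lemma is meaningful) multiplying by the strictly positive $f(k)$ and then invoking item 1 gives
\[
g(k) \;=\; (1-\rho(k))\,f(k) \;>\; (1-\tfrac{b}{a^{2}})\,f(k) \;>\; (1-\tfrac{b}{a^{2}})(a^{-1}-b^{-1})\,f(n).
\]

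The main obstacle I foresee is bookkeeping the strict inequalities: the strictness in item 1 relies on the one slack $(n/a-k)\le(n/a-1)$ afforded by $k\ge 1$, and that strictness must be carried through the multiplications in item 2. The other easily overlooked ingredient is that the feasibility of $[n]$---guaranteed by the standing WLOG assumption that $[n]$ is welfare-optimal---is exactly what lets the ratio $c(n)/m(n)$ be absorbed and produces the clean $b/a^{2}$ bound on $\rho(k)$.
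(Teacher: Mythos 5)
Your proof is essentially the same as the paper's for the two claimed inequalities: you telescope the marginals to bound $f(n/a)-f(k)$, apply the marginals property $m(n)\le f(n)/n$ and the concavity property $f(n/a)\ge f(n)/a$, then combine the stopping condition $m(k)>\tfrac{a}{b}m(n)$, the ordered-cost bound $c(k)\le c(n)/a$, and feasibility of $[n]$ (giving $c(n)/m(n)\le 1$) to bound $\rho(k)<b/a^2$ and hence $g(k)$. The bookkeeping of strict inequalities is also handled correctly, and you rightly flag that the final step implicitly needs $1-b/a^2>0$, which is the regime in which the lemma is deployed ($a=4$, $b=8$).

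The one piece of the paper's proof you omit is the argument that the procedure in fact terminates at some $k>0$ (i.e., it never runs down to $k=0$). You sidestep this by reading $k>0$ as a hypothesis, but the phrase ``for the value $k>0$ of the last iteration'' is being asserted, not assumed --- and the downstream proof of Theorem~\ref{3} uses $g(k)$ without any caveat, so termination at a positive $k$ must be established. The paper does this with a short contradiction: if all $m(j)\le\tfrac{a}{b}m(n)$ for $j\in\{1,\dots,n/a\}$, then $f(n/a)=\sum_{j\le n/a}m(j)\le\tfrac{n}{b}m(n)\le\tfrac{1}{b}f(n)<\tfrac{1}{a}f(n)\le f(n/a)$, which is impossible. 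You already have all the ingredients for this (the marginals property, concavity, and $b>a$), so it is a quick addition rather than a change in strategy --- but it should not be skipped.
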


Before we prove \cref{sXOS:lem:algorithm}, let us show how to use it to prove \cref{3}:

\begin{proof}[Proof of \cref{3}]
    From \cref{sXOS:lem:algorithm} and since $f(n)\ge w(n)$, it holds that
    \begin{align*}
        g(k) > (\aL^{-1}-\bL^{-1}) (1-b\cdot \aL^{-2}) w(n).
    \end{align*}
    Since $OPT(g)\ge g(k)$ and $OPT(w)=w(n)$, we see that (the inverse of the gap) $OPT(g)/OPT(w) > r(\aL, \bL)$, where $r(\aL, \bL):=(\aL^{-1}-\bL^{-1})(1-\bL\cdot \aL^{-2})$. To minimize the upper bound on the gap, we maximize the lower bound $r(\aL, \bL)$ on the inverse.
    Standard calculations show that for any $\aL\ge 1$, substituting $\bL=\aL^{3/2}$ maximizes the expression.
    Using Lagrange multipliers, we find that the maximum is reached when $\aL=4$ and $\bL=8$. For these parameters, we have that $r(a,b)\le \frac{1}{16}$. 
    The upper bound of $16$ is exact for $\aL=4$ for any $n$ divisible by $4$.  For $n\le 16$, the bound of 16 trivially holds since the singleton agent is an $n$-approximation, an observation that we discuss in \cref{appendix:subadditive:ub} for subadditive functions. For all other values of $n$, the bound becomes $16+o(1)$ since
    in this case we can choose the parameter $a$ for which $\frac{n}{a}=\lfloor \frac{n}{4}\rfloor$, and, since $n>16$, this value is in the range between $4$ and $4+o(1)$.
\end{proof}

It remains to prove the lemma:

\begin{proof}[Proof of \cref{sXOS:lem:algorithm}]
    We first show that the procedure indeed stops at some $k>0$. Otherwise, we get the following contradiction:
        \begin{alignat*}{1}
        f(n/\aL) \; & = \; f(n/\aL)-f(0) \; = \; m(n/\aL)+...+m(1)
        \\&\le \; \frac{n}{\aL}\cdot\frac{\aL}{\bL}\cdot m(n) \; = \; \frac{n}{\bL} \cdot m(n)
        \\&\le \; \frac{n}{\bL}\cdot\frac{f(n)}{n}
         \; = \; \frac{1}{\bL} f(n)
        \\& <  \; \frac{1}{\aL} f(n)
         \; \le \; f(n/\aL).
    \end{alignat*}
    The first inequality follows since the algorithm is assumed towards contradiction to stop at $k=0$, the second inequality follows from the {marginals property}, the third inequality is by definition since $b>a$, and the fourth inequality is from the {concavity property}.
        
    Turning to property (1) of the lemma, since $k>0$ we have that
    \begin{align*}
        f(n/\aL)-f(k)&= m(n/\aL) + ... + m(k+1)
        \\& \le \left(\frac{n}{a}-k\right)\cdot \frac{a}{b} \cdot m(n)
        \\&<\frac{n}{a}\cdot\frac{a}{b}\cdot m(n)\; = \; \frac{1}{b}f(n).
    \end{align*}
    The first inequality is similar to the contradiction case: by the definition of the algorithm $m(k)\le \frac{a}{b}m(n)$ and we have ($\frac{n}{a} - k$) such marginals that we add.
    The second inequality holds since the algorithm stops at $k$.
    Thus, $ f(k)> f(n/\aL)-\frac{1}{\bL}f(n) \ge \frac{1}{\aL}f(n)-\frac{1}{\bL}f(n)$ where the second inequality is by the concavity property.
    
    Towards establishing property (2) of the lemma:
    \begin{align*}
        g(k)&= f(k)\left(1-\frac{c(k)}{m(k)}\right)
        \\&\ge f(k)\left(1-\frac{c(n)}{\aL}\cdot\frac{1}{m(n)}\right)
        \\&\ge f(k)\left(1-\frac{c(n)}{\aL}\cdot\frac{\bL}{\aL}\cdot\frac{1}{m(n)}\right)
        \\&  = f(k)\left(1-\frac{\bL}{\aL^2}\cdot\frac{c(n)}{m(n)}\right)
        \\&\ge f(k)\left(1-\frac{\bL}{\aL^2}\right)
        \\&> (\aL^{-1}-\bL^{-1})f(n)(1-\bL\cdot\aL^{-2}),
    \end{align*}
where the first inequality follows since agent costs are ordered and since $k\le n/\aL$, the second inequality is due to the marginals property, the third inequality holds because $[n]$ is feasible, and the last inequality follows from property (1) shown above.
\end{proof}

\subsection[Submodular: An Upper Bound of 5]{Submodular: An Upper Bound of $5$}
\label{sub:ratio-ub-submod}

\begin{proposition} \label{submodular:UB}
    For any $n$-agent setting $(f, c)$ with a submodular value function, the welfare-utility gap is upper bounded by $5$.
\end{proposition}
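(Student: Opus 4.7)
Let $S^*$ denote a welfare-maximizing feasible set, and write $\alpha := \rho(S^*) \in (0,1]$, $\rho_i^* := c_i/f(i:S^*)$, and $w^* := w(S^*) = f(S^*) - c(S^*)$. The workhorse property of submodular $f$ used throughout is that for every $T \subseteq S^*$ and $i \in T$, $f(i:T) \ge f(i:S^*)$, and consequently $\rho^T(T) \le \sum_{i\in T}\rho_i^* \le \alpha \le 1$; that is, feasibility is downward closed within $S^*$. This is the feature that separates the submodular case from general XOS.

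The proof splits on the magnitude of $\alpha$. When $\alpha \le 4/5$, the grand coalition already suffices: $g(S^*) = (1-\alpha) f(S^*) \ge f(S^*)/5 \ge w^*/5$, since $w^* \le f(S^*)$. The harder case is $\alpha > 4/5$, where $g(S^*)$ can be arbitrarily close to zero and one must extract a strict subset $T \subsetneq S^*$ with $g(T) \ge w^*/5$.

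In the hard case, the plan is to order the agents of $S^*$ by $\rho_i^*$ in increasing order and take $T$ to be the longest prefix with $\sum_{i \in T}\rho_i^* \le 1/2$. The workhorse property gives $\rho^T(T) \le 1/2$, whence $g(T) \ge f(T)/2$, so it suffices to prove $f(T) \ge (2/5)\, w^*$. The bound on $f(T)$ will be obtained by combining two submodular inequalities: (i)~telescoping marginals yield $f(T) \ge \sum_{i\in T} f(i:S^*) = \sum_{i\in T} c_i/\rho_i^*$; (ii)~by the sort, the removed tail $R := S^* \setminus T$ has $\min_{i \in R} \rho_i^* \ge \max_{j \in T}\rho_j^*$ and $\sum_{i \in R}\rho_i^* > \alpha - 1/2 > 3/10$. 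These, together with the bound $c(S^*) \le \alpha\, f(S^*)$ (from $c_i = \rho_i^* f(i:S^*) \le \rho_i^* f(S^*)$), drive the needed inequality.

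\textbf{Main obstacle.} The technical heart is the quantitative inequality $f(T) \ge (2/5)\, w^*$, and the worst case is when $R$ is very small---potentially a single agent $j \in R$ that carries a large share of $f(S^*)$. In that subcase, the prefix $T$ alone may fall short; the fix is to take the maximum of $g(T)$ and the singleton utility $g(\{j\}) = f(\{j\}) - c_j$, using $f(\{j\}) \ge f(j:S^*)$ by submodularity, in the spirit of the two-candidate analysis for sXOS in Section~\ref{sub:ratio-ub-sxos}. Balancing the prefix threshold $1/2$ against the ensuing submodular losses yields the factor~$5$; closing the residual gap to the lower bound of~$4$ (established in Section~\ref{sub:ratio-lbs}) is not achieved by this approach.
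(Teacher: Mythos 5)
Your proposal is a plan rather than a complete proof, and it takes a genuinely different route from the paper's. The easy case $\alpha\le 4/5$ is correct. The hard case is where the gap lies: you explicitly identify the required inequality $f(T)\ge \tfrac{2}{5}w^*$ as the obstacle, concede it can fail (it does: take $S^*=\{1,2\}$ with $\rho^*_1=0.4$, $\rho^*_2=0.5$, $c_1=1$, $c_2=10$, $f$ additive with $f(\{1\})=2.5$, $f(\{2\})=20$---then $f(T)=2.5 < 4.6 = \tfrac{2}{5}w^*$), and then only sketch the fix, asserting that ``balancing \ldots yields the factor $5$'' without carrying out the computation.

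The sketched fix also misses a case. You write that the problematic regime is ``$R$ very small,'' but in fact $R$ can be arbitrarily large: when $\sum_{i\in T}\rho^*_i$ is close to $1/2$, the threshold $\rho^*_{\ell+1}>1/2-\sum_T\rho^*_i$ is tiny and $|R|$ can be on the order of $1/\rho^*_{\ell+1}$. In that regime, bounding $w(R)$ by subadditivity gives $w(R)\le\sum_{i\in R} g(\{i\}) \le |R|\cdot\max_{i\in R} g(\{i\})$, and the max-singleton candidate is too weak when $|R|$ is large. What rescues the argument is exactly the insight you do not invoke: after sorting, pull out a single ``bridge'' agent $j\in R$ (the first one excluded from $T$) so that $\rho_{S^*}(T\cup\{j\})>1/2$, and then the remainder $C:=R\setminus\{j\}$ automatically satisfies $\rho_{S^*}(C)\le \rho_{S^*}(S^*)-\rho_{S^*}(T\cup\{j\}) < 1/2$, hence $\rho^C(C)\le 1/2$ by submodularity and $w(C)\le 2g(C)$. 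This three-way split $T, \{j\}, C$ is precisely the content of the paper's Lemma~\ref{splitting_lemma}, and combined with $w(S^*)\le w(T)+w(\{j\})+w(C)\le 2g(T)+g(\{j\})+2g(C)$ it yields the factor $5$ cleanly (with the paper's Case~1, where some $\rho_{S^*}(i)>1/2$, handled separately and giving factor $3$). So the missing step is not a computational ``balancing'' but a structural observation: you must split the tail $R$ once more, separating a singleton from a set whose \emph{share within $S^*$} is below $1/2$.
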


To prove \cref{submodular:UB}, we use the following lemma.

\begin{lemma}\label{splitting_lemma}
    For any $n$-agent setting $(f, c)$ with a submodular value function, for every feasible set $S$ satisfying $\rho_S(i)\le \frac{1}{2}$ for every $i\in S$, there exists a partition of $S$ into three disjoint sets $A$, $B$ and $C$, such that $|B|=1$, and for each set $T\in \{A, B, C\}$ it holds that $\rho(T)\le \frac{1}{2}$.
    
\end{lemma}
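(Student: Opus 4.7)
The plan is to exploit submodularity to reduce the lemma to an elementary combinatorial partitioning problem about scalar weights. The key observation is that for any $T \subseteq S$ and any $i\in T$, applying submodularity to $T\setminus\{i\}\subseteq S\setminus\{i\}$ and element $i$ gives
$f(i:T)=f(T)-f(T\setminus\{i\})\ge f(S)-f(S\setminus\{i\})=f(i:S)$.
Therefore $\rho_T(i)=c_i/f(i:T)\le c_i/f(i:S)=\rho_S(i)$, and summing over $i\in T$ yields $\rho(T)\le \sum_{i\in T}\rho_S(i)$. So it suffices to partition $S$ into $A,B,C$ with $|B|=1$ such that $\sum_{i\in T}\rho_S(i)\le \tfrac12$ for each $T\in\{A,B,C\}$. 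The hypotheses provide nonnegative weights $\rho_S(i)\le \tfrac12$ summing to at most $1$.

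For the combinatorial step, I would enumerate the elements of $S$ as $i_1,\ldots,i_k$ in any order. If $\sum_{\ell}\rho_S(i_\ell)\le \tfrac12$, take $B=\{i_1\}$, $A=\{i_2,\ldots,i_k\}$, and $C=\emptyset$; the claim holds trivially. Otherwise let $j$ be the smallest index with $\sum_{\ell\le j}\rho_S(i_\ell)>\tfrac12$, and set $A=\{i_1,\ldots,i_{j-1}\}$, $B=\{i_j\}$, $C=\{i_{j+1},\ldots,i_k\}$. By the minimality of $j$, the sum over $A$ is at most $\tfrac12$; by hypothesis the singleton $B$ has weight at most $\tfrac12$; and since $\sum_{\ell\le j}\rho_S(i_\ell)>\tfrac12$ while the total sum is at most $1$, the sum over $C$ is strictly less than $\tfrac12$. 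Combined with the submodularity-based reduction, this yields $\rho(T)\le \tfrac12$ for every $T\in\{A,B,C\}$.

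The main obstacle is conceptual rather than technical: recognizing that submodularity downward-bounds $\rho_T(i)$ by $\rho_S(i)$, thereby turning a statement about marginal-contribution ratios inside arbitrary sub-teams into a plain statement about prefix sums of nonnegative reals. The permission to reserve a singleton $B$ is exactly what makes the partition always possible; without it the statement would be false, as seen e.g.\ for three shares $0.4,0.4,0.2$ summing to $1$, which cannot be split into two subsets each of weight at most $\tfrac12$, but can be once a single element is set aside.
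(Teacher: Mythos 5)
Your proof is correct and follows essentially the same strategy as the paper's: observe via submodularity that $\rho_T(i)\le\rho_S(i)$ for $T\subseteq S$, reducing the claim to a prefix-sum partition of the scalar weights $\rho_S(i)$, and then split at the first index where the running sum exceeds $\tfrac12$. Your explicit handling of the trivial case (total $\le\tfrac12$) by still reserving a singleton for $B$ is a small cleanup over the paper's procedure, which would leave $B=\emptyset$ there, but the two arguments are otherwise the same.
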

\begin{proof}
    Consider the following procedure: Add the agents of $S$ to the set $A$ one by one as long as $\rho_S(A)\le\frac{1}{2}$. Then, add the next agent to the set $B$, and put the rest in $C$.
    We claim that the resulting three sets satisfy the requirements of the lemma. 
    
    First, clearly these sets represent a partition of $S$. Second, by construction, $\rho_S(A)\le \frac{1}{2}$, and by assumption $\rho_S(B)\le \frac{1}{2}$. 
    To show that $\rho_S(C)\le \frac{1}{2}$, notice that from additivity of $\rho_S$ it holds that $\rho_S(S)=\rho_S(A)+\rho_S(B)+\rho_S(C)$. Since $S$ is feasible it holds that $\rho_S(S)\le 1$.
    Assuming $C$ is not empty, $B$ must be non-empty. This means that $\rho_S(A\cup B)=\rho_S(A)+\rho_S(B)>\frac{1}{2}$, since otherwise the procedure would have added the agent from $B$ to the set $A$ instead, and thus $\rho_S(C)<\frac{1}{2}$.
    Finally, by submodularity, for every set $T\subseteq S$ it holds that $\rho_S(T)\ge \rho_T(T)$, since $f(i:S)\le f(i:T)$ for every $i\in T$. This concludes the proof.
\end{proof}

\begin{proof}[Proof of \cref{submodular:UB}]
    Consider the set $S$ maximizing welfare. We have two cases: either there is a single agent $i\in S$ for which $\rho_S(i)> \frac{1}{2}$, or for all $i\in S$ it holds that $\rho_S(i)\le \frac{1}{2}$.
    
    \paragraph{Case 1.}
    In this first case, since $1\ge \rho_S(S)=\rho_S(S\setminus i)+\rho_S(i)$ and since $\rho_{S\setminus i}(S\setminus i)\le \rho_{S}(S\setminus i)$ we have $\rho_{S\setminus i}(S\setminus i)\le \frac{1}{2}$. 
    Note that for any $S$ with $\rho_S(S) \le \frac 12$ we have
    $g(S)=(1-\rho(S))\cdot f(S) \ge \frac{1}{2}\cdot w(S)$ and thus, 
    \begin{equation}
        \label{eq:2bound}
        w(S) \le 2 g(S).
    \end{equation}
    This implies
        $$ w(S) 
        \; \le \; w(i) + w(S\setminus i)
        \; = \; g(i) + w(S\setminus i)
        \; \le \; g(i) + 2g(S\setminus i)
        \; \le \; 3\;{OPT}(g),
        $$
    where the first inequality holds by the subadditivity of $w$ inherited from $f$, the equality since $w(i)=g(i)$, and the second inequality follows from Equation~\eqref{eq:2bound}.

    \paragraph{Case 2.}
    In the second case, we use \cref{splitting_lemma} and consider the partition of $S$ into three disjoint sets $A$, $B$ and $C$, where $|B|=1$ and for each set the total share is upper bounded by $\frac{1}{2}$. Thus, 
    $$ 
    w(S) \; \le \; w(A)+w(B)+w(C) \; \le \; 2g(A) + g(B) + 2g(C) \le \; 5\; OPT(g),
    $$
    where the first inequality is by subadditivity of $w$, and the second is from Equation~\eqref{eq:2bound} and since $B$ is a singleton.
\end{proof}

\subsection{Lower Bounds}
\label{sub:ratio-lbs}
\label{additive:LB:4}
\label{supermodular:ratio}

We present three lower bounds on the welfare-utility gap, for subadditive, additive and supermodular value functions. They apply even in the symmetric setting with uniform costs.

\begin{proposition}
\label{symmetric_subadditive:prop:LB}
$ $
\begin{enumerate}
    \item For every $n\ge 5$, there exists an $n$-agent setting with a symmetric subadditive value function and uniform costs, for which the ratio between optimal welfare and optimal utility is at least $\sqrt n - 1$.
    
    \item There exists an $n$-agent setting with a symmetric additive value function and uniform costs for which the ratio between optimal welfare and optimal utility is at least $4-\frac{4}{n}$.

    \item For every $n\ge 2$, there exists an $n$-agent setting with a symmetric supermodular value function and uniform costs, for which the ratio between optimal welfare and optimal utility is unbounded.

\end{enumerate}
\end{proposition}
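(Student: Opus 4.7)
The plan is to prove each of the three parts by exhibiting an explicit symmetric $(f,c)$ and then directly comparing $OPT(w)$ to $OPT(g)$ by optimizing over team sizes $k \in \{0,1,\ldots,n\}$. For each part I need to (i)~verify that $f$ belongs to the claimed class, (ii)~identify the welfare-maximizing feasible team, and (iii)~show that no feasible team yields utility close to $OPT(w)$.

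For the additive bound in (2), take $f(k)=k$ with uniform cost $c=1/n$. Every team $[k]$ is feasible since $\rho([k])=k/n\le 1$, so $w(k)=k(1-1/n)$ is maximized at $k=n$ giving $OPT(w)=n-1$, while $g(k)=k(1-k/n)$ is maximized near $k=n/2$ with value at most $n/4$ (for odd $n$ the maximum is $(n^2-1)/(4n)\le n/4$). The resulting ratio is at least $(n-1)/(n/4)=4-4/n$.

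For the supermodular bound in (3), the key trick is to exploit non-decreasing marginals so that the only feasible welfare-producing team has principal share exactly~$1$. With uniform cost $c=1$ and $n\ge 2$, define $f(k)=k$ for $k\le n-1$ and $f(n)=2n-1$. The marginals $1,1,\ldots,1,n$ are non-decreasing, so $f$ is supermodular and $f(1)=1\ge c$. Every team $[k]$ with $1\le k\le n-1$ has $\rho([k])=k$, which is feasible only for $k=1$ and yields $g([1])=0$; team $[n]$ has $\rho([n])=n/n=1$, so $g([n])=0$. Hence $OPT(g)=0$ while $w([n])=(2n-1)-n=n-1>0$, giving an unbounded ratio.

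For the subadditive bound in (1), the target is to escape the $16+o(1)$ ceiling of \cref{3} for sXOS, so the witness must be symmetric subadditive but \emph{not} XOS; since symmetric XOS already behaves essentially like a concave profile in $k$, the construction has to use a non-monotone marginal sequence. The idea is to pick $c$ very small relative to $v:=f(1)$ (of order $v/(\sqrt n-1)$) and to arrange $f$ so that a feasible team $[k^{*}]$ of size $\Theta(\sqrt n)$ attains $\rho([k^{*}])=1$ (so $g([k^{*}])=0$) with value $f(k^{*})=\Theta(v\sqrt n)$, while every other feasible team (in particular $[1]$, which always gives $g([1])=v-c$) produces utility at most of order $v$. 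The welfare-to-utility ratio then reaches $\sqrt n-1$.

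The main obstacle is part (1): subadditivity imposes $m(k)\le v$ and $f(k)\le kv$ for every $k$, so the construction must carefully balance the marginal profile to simultaneously (a) keep a large team feasible, (b) tune its share to be exactly $1$, and (c) prevent any intermediate feasible team from producing utility larger than $\Theta(v)$, all while preserving the subadditivity inequality $f(a+b)\le f(a)+f(b)$ for every pair $(a,b)$. Parts (2) and (3) reduce to one-line verifications once the constructions above are written down.
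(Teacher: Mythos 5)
Your parts (2) and (3) are correct. Part (2) is the paper's construction up to a global rescaling of $f$ and $c$ by $n$; part (3) is arguably cleaner than the paper's instance (the paper's polynomial $f(k)=k^2+k-k/n$ has the right shape but the stated marginal $m(n)$ appears to be off by lower-order terms, whereas your $1,1,\ldots,1,n$ marginal sequence is transparent and immediately verifiable). So no issues there.

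Part (1) is a genuine gap, and moreover the plan you sketch cannot be completed in the form you describe. You want a team $[k^{*}]$ of size $\Theta(\sqrt n)$ with $\rho([k^{*}])=1$ and $f(k^{*})=\Theta(v\sqrt n)$ while \emph{every other} feasible team achieves utility $O(v)$. But subadditivity forces $f(k^{*})\le k^{*}v$, so $f(k^{*})=\Theta(v\sqrt n)$ with $k^{*}=\Theta(\sqrt n)$ means $f$ is near-additive on $[k^{*}]$ --- all marginals $m(1),\ldots,m(k^{*})$ are $\Theta(v)$. Then $\rho([k])=kc/m(k)=\Theta(k/\sqrt n)$ for every $k\le k^{*}$, so every such $[k]$ is feasible, and $g(k)=(1-\rho(k))f(k)\approx\bigl(1-\tfrac{k}{\sqrt n}\bigr)\cdot kv$, which at $k\approx\sqrt n/2$ is $\Theta(\sqrt n\,v)$ --- the same order as the welfare you are trying to beat, so the ratio collapses to $\Theta(1)$, not $\sqrt n-1$. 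Your condition (c) is incompatible with (a) and (b) under a near-additive profile.

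The paper's construction makes the \emph{full} team $[n]$ the welfare maximizer (with $\rho(n)=1$ exactly) and uses a non-monotone marginal profile rather than a near-additive one: $m(1)\approx\sqrt n$, $m(n)=\sqrt n$, and all intermediate marginals $m(k)\approx 1$ (formally, $f(k)\approx\sqrt n+k$ for $1\le k\le n-1$, $f(n)=n$, cost $c=1/\sqrt n$). The large initial value $f(1)\approx\sqrt n$ serves two roles: it keeps $f$ subadditive despite the tiny middle marginals, and it keeps $g(1)$ relatively large ($\approx\sqrt n$). Meanwhile the tiny middle marginals make $\rho(k)\approx k/\sqrt n$, so intermediate teams $[k]$ with $k\lesssim\sqrt n$ are feasible but their utility $g(k)\approx(1-k/\sqrt n)(\sqrt n+k)\approx\sqrt n-k^{2}/\sqrt n$ is \emph{decreasing} in $k$, so $g(1)$ is optimal. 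Then $w(n)/g(1)\approx(n-\sqrt n)/\sqrt n=\sqrt n-1$. To fix your proof you should adopt this "large anchor at $k=1$, flat middle, spike at $k=n$" profile (and actually carry out the subadditivity check $f(a)+f(b)\ge f(a+b)$, which holds because the $\sqrt n$-additive-constant dominates) rather than the near-additive short-team approach.
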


\begin{proof}
We prove each of the three lower bounds stated in the proposition. 

\paragraph{Subadditive lower bound.}

Let $n\ge 5$. We begin by defining a \emph{convex interpolation}: We say that a function $A:\{0,\ldots,n-1\}\to [a, b]$ is a convex interpolation if $A(k)=\left(1-\frac{k}{n-1}\right) a + \frac{k}{n-1}b$. We will consider $A:\{0,\ldots,n-1\}\to[\sqrt{n}, n-\sqrt{n}]$, which becomes
    \begin{align*}
        &A(k)=\left(1-\frac{k}{n-1}\right)\sqrt{n}+\frac{k}{n-1}\cdot (n-\sqrt{n})
        =\sqrt{n}+k\cdot \frac{n-2\sqrt{n}}{n-1}.
    \end{align*}
    
    Let us define a value function $f$ as follows: $f(0):=0$, $f(n):=n$, and for every other $k$, $f(k):=A(k)$.
    For the cost we set $c =  \frac{1}{\sqrt{n}}$.
    
    The function $f$ is subadditive, as for any $m\in [n]$ and any $k\in [n-1]$,
    $f(k) + f(m-k)=2\sqrt{n}+m\frac{n-2\sqrt{n}}{n-1}>  f(m)$. We observe that $f$ is monotone increasing: Its marginals are $m(1) = m(n)=\sqrt{n}$, and for every other $k$, $m(k) = \frac{n-2\sqrt{n}}{n-1}$ is positive since $n\ge 5$.
    
    Notice that the sets $[1]$ and $[n]$ are both feasible as $\rho(1)=\frac{c}{m(1)} = \frac{1}{n}$
    and $\rho(n)=n\cdot \frac{c}{m(n)} = 1$.
    We claim that the optimal utility is achieved by set $[1]$. To see this, observe that $g(n)=0$ and for any $k\in [n-1]$ larger than $1$:
    \begin{align*}
        g(k) &= \left(1-\frac{k}{\sqrt{n}} \cdot \frac{n-1}{n - 2\sqrt{n}}\right) \cdot \left(\sqrt{n} + k\cdot\frac{n-2\sqrt{n}}{n-1} \right) \\
        &< \left(1-\frac{k}{\sqrt{n}}\right) \cdot \left(\sqrt{n} + k \right) \\
        &< \sqrt{n} - \frac{k^2}{\sqrt{n}} 
        \; \le  \; \sqrt{n} - \frac{1}{\sqrt{n}} 
        \\ &= \left(1-\frac{1}{n}\right)\sqrt{n}\;  = \; g(1).
    \end{align*}
    Thus, the welfare-utility gap is lower bounded by 
    $$w(n)/g(1) = (n-\sqrt{n})/(\sqrt{n} - 1/\sqrt{n}) = (\sqrt{n} - 1)/(1-1/n) \ge \sqrt{n}-1.$$

\paragraph{Additive lower bound.}
    Let $f$ be such that for every $k$, $f(k)=k\cdot n$, and let $c=1$.
    The marginals are $m(k)=n$ and the share is $\rho(k)=\frac{k}{n}$.
    Note that the welfare function, $w(k)=k\cdot n-k$, is increasing for any $k$, and that $[n]$ is feasible as $\rho(n)=1$. Thus, optimal welfare is reached when $k=n$ and $OPT(w)=w(n)=n^2-n$.
    Towards optimal utility, note that $g(k)=(1-\frac{k}{n})\cdot k\cdot n$ for any $k$ which is maximized by $k=\frac{n}{2}$. Therefore, $OPT(g)=g(n/2)=\frac{n^2}{4}$. The ratio between $OPT(w)$ and $OPT(g)$ is $4\cdot \frac{n^2-n}{n^2}=4-\frac{4}{n}$, which approaches $4$ when $n\to \infty$.

\paragraph{Supermodular lower bound.}
    Consider the function $f:[n]\to\mathbb R_+$ with $f(k)=k^2+k-\frac{k}{n}$ for $k<n$ and $f(n)=n^2+n$. Note that the marginals are increasing for every $k$, so $f$ is supermodular. For the uniform cost, we set $c=2$. Now for any $0<k<n$, the set $[k]$ is infeasible. The marginals are $m(k)=2k-\frac{1}{n}$, and thus $\rho(k)=\frac{2k}{2k-\frac{1}{n}}>1$. In contrast, $m(n)=2k$, so $\rho(n)=1$ and $[n]$ is feasible. This shows that $OPT(g) = g(n) = 0$, and $OPT(w)=w(n)>0$.
\end{proof}
\section{Welfare Approximation and Welfare-Utility Gap for XOS}
\label{section:XOS}

In this section we prove \cref{1}, which asserts that for XOS value functions, there exists an $O(1)$-approximation polynomial-time algorithm for optimal welfare. Furthermore, this algorithm returns a set whose \emph{utility} approximates the optimal welfare. 

\begin{theorem} \label{1}
    There exists an algorithm that outputs a set whose utility is an $O(1)$-approximation 
    of the optimal welfare, for any $n$-agent setting with an XOS value function $f$. The algorithm runs in time polynomial in $n$ using value and $a$-demand oracles.
\end{theorem}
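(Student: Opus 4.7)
The plan is to construct the required algorithm (Algorithm~3 in the paper) as a composition of two pieces: the scaling procedure Algorithm~1, adapted from the multi-agent utility literature, which from a suitable input set extracts a feasible subset with a constant-fraction utility guarantee; and an outer search Algorithm~2 that sweeps over inputs to Algorithm~1 obtained from $a$-demand queries with parametrized scaled costs. The novelty relative to the utility-maximizing version of this recipe is that the benchmark is now $OPT(w)=w(S^*)$ rather than $OPT(g)$, so the analysis must absorb the welfare-to-utility gap inside the algorithm itself rather than invoke it as a separate bound.

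The first structural step is to exploit the XOS property at the welfare optimum. Let $S^*$ be a welfare-maximizing feasible set, and let $(\beta_i)_{i\in S^*}$ be an additive clause supporting $S^*$, so that $f(S^*)=\sum_{i\in S^*}\beta_i$ and $f(T)\ge\sum_{i\in T}\beta_i$ for every $T\subseteq S^*$. I would show, by bucketing agents of $S^*$ into a constant number of geometric buckets indexed by the ratio $\beta_i/c_i$, that a constant fraction of $w(S^*)=\sum_{i\in S^*}(\beta_i-c_i)$ is carried by a single subset $A\subseteq S^*$ on which $\beta_i/c_i$ lies in a fixed range. For such an $A$, running an $a$-demand query with costs scaled by the appropriate factor $\alpha$ yields a set $T_\alpha$ with $a\,f(T_\alpha)-c(T_\alpha)/\alpha \ge a\,f(A)-c(A)/\alpha$; the XOS inequality $f(A)\ge\sum_{i\in A}\beta_i$ together with the bucket range turns this into $a\,f(T_\alpha)-c(T_\alpha)\ge\Omega(w(S^*))$. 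Since the correct bucket (and hence the correct $\alpha$) is not known a priori, Algorithm~2 tries a polynomial-size grid of $\alpha$'s and keeps the best result.

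The second step is to feed $T_\alpha$ into Algorithm~1: iteratively remove from $T_\alpha$ any agent whose share $c_i/f(i:T)$ exceeds a fixed threshold and return what remains, which is feasible by construction. To handle XOS I would anchor the analysis to a single supporting clause $(\eta_i)$ of the initial set $T_\alpha$, and use the inequality $f(T)\ge\sum_{i\in T}\eta_i$ for $T\subseteq T_\alpha$ to pass between the additive proxy and the true value. One then shows that the retained set $T'$ still has $f(T')=\Omega(f(T_\alpha))$ while $\rho(T')\le 1-\Omega(1)$, so $g(T')=(1-\rho(T'))f(T')=\Omega(w(S^*))$. The algorithm runs in time polynomial in $n$ using value and $a$-demand queries, since both the outer grid and the number of removals are polynomial.

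The main obstacle is the scaling step for XOS rather than submodular $f$. For submodular $f$, removing an agent can only increase the remaining marginals $f(i:T)$, so shares cannot blow up and Algorithm~1 walks monotonically toward feasibility; for XOS the supporting clause of the current set can switch when a single agent is dropped, and a share $c_i/f(i:T)$ that was small can suddenly explode. Anchoring the analysis to the fixed initial clause $(\eta_i)$ of $T_\alpha$ sidesteps this, at the cost of paying a constant factor each time the XOS inequality is invoked. Balancing the three resulting losses---the geometric bucketing at the optimum, the $a$-demand approximation, and the anchored scaling---yields the final constant, which the paper reports as $188$; I would leave that numerical tuning to the full proof.
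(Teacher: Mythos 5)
Your high-level architecture matches the paper's — Algorithm~3 is indeed an outer search that feeds scaled-cost $a$-demand sets into a scaling routine (Algorithm~1) via Algorithm~2 — but the internal mechanics you describe diverge from the paper's proof in ways that leave real gaps. First, your Algorithm~1 (``iteratively remove any agent whose share $c_i/f(i:T)$ exceeds a threshold'') is not the scaling algorithm the paper imports from \cite{multi-agents}. That algorithm never inspects costs or shares at all; given a target $\Psi$, it peels off agents so as to preserve \emph{marginals relative to the initial set}, and returns a $U$ satisfying $\tfrac12\Psi \le f(U) \le \Psi + x(f)$ and $f(i{:}U) \ge \tfrac12 f(i{:}T_0)$ for all $i\in U$. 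Feasibility of $U$ is \emph{not} ``by construction''; it is derived afterwards from the square-root cost scaling $\tilde c_i = 2\sqrt{c_i A y}$ and a Cauchy--Schwarz argument (the $\kappa$-boundedness lemma), which is precisely the step your proposal does not reproduce. Your share-threshold version has no guarantee that $f(T')=\Omega(f(T_\alpha))$ — dropping high-share agents can discard most of the value — and ``anchoring to a fixed clause of $T_\alpha$'' only converts the XOS instance to an additive lower bound, it does not control how much value is lost during removals.

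Second, the geometric bucketing of $\beta_i/c_i$ does not yield a constant number of buckets: the ratios can span a range that is polynomial (or worse) in the input, so one bucket captures only a $1/\mathrm{polylog}$ fraction of $w(S^*)$, not a constant. The paper instead does a doubling search over the \emph{target value} $y$, and for each $y$ restricts the domain to ``light'' agents $D_y=\{i: f(i)\le \frac{A}{m}\cdot\frac{y}{M}\}$ so that $x(f_{|D_y})$ is small relative to $y$. The crucial case analysis your proposal lacks is what happens when the optimal set $S^*$ contains a \emph{heavy} agent $i\notin D_y$: if $i$ is $\delta$-cheap the singleton $\{i\}$ already gives a constant fraction; if $i$ is $\delta$-expensive, then since $\rho(S^*)\le 1$ all other agents are $(1-\delta)$-cheap, and one recurses on $S^*\setminus\{i\}$ using the \emph{subadditivity of $w$} to argue that either $w(\{i\})$ or $w(S^*\setminus\{i\})$ carries a constant fraction of $w(S^*)$. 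This heavy/light, cheap/expensive decomposition — not bucketing by $\beta_i/c_i$ — is what makes the constant work out, and it is entirely absent from your proposal. To repair the proposal you would need to replace the bucketing with a value-doubling search plus domain restriction, and replace your share-removal subroutine with the actual marginal-preserving Algorithm~1 together with the $\kappa$-boundedness/Cauchy--Schwarz feasibility argument.
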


\cref{1} implies two corollaries.

\begin{corollary}\label{XOS:demand:ratio}
  For any $n$-agent setting with an XOS value function, the welfare-utility gap is a constant.
\end{corollary}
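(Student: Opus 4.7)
The plan is to derive the corollary as a one-line consequence of Theorem~\ref{1}. That theorem supplies, for every $n$-agent XOS instance $(f,c)$, a (polynomial-time computable) feasible set $\widehat S$ whose principal's utility satisfies
\[
g(\widehat S) \;\geq\; \tfrac{1}{C}\cdot OPT_{(f,c)}(w)
\]
for some absolute constant $C$ (namely $C=188$ in the authors' accounting, but all that matters here is that $C$ is independent of $n$, $f$ and $c$). Since $\widehat S$ is feasible, it competes in the maximization defining $OPT_{(f,c)}(g)$, so $OPT_{(f,c)}(g)\geq g(\widehat S)$.

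Chaining these two inequalities yields $OPT_{(f,c)}(g)\geq \tfrac{1}{C}\,OPT_{(f,c)}(w)$, equivalently
\[
\frac{OPT_{(f,c)}(w)}{OPT_{(f,c)}(g)}\;\leq\; C.
\]
Taking the supremum over the family of all XOS instances, as in \cref{definition:ratio}, the welfare-utility gap is bounded by $C$, a constant.

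Because this is purely a corollary, there is no real technical obstacle; the only thing worth checking explicitly is that the set returned by Theorem~\ref{1} is feasible (so that it is a legitimate witness for $OPT(g)$), which I would verify by a quick glance at the algorithm's output guarantee. Everything else is just substitution into \cref{definition:ratio}. If one wished to be concrete, one would record the constant $C$ from the proof of Theorem~\ref{1} and state the bound as ``welfare-utility gap $\leq 188$'' rather than merely $O(1)$.
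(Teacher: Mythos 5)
Your argument is exactly the paper's: the authors explicitly note that \cref{XOS:demand:ratio} follows immediately from \cref{1} by the chain $OPT(g)\ge g(\widehat S)\ge \frac{1}{C}\,OPT(w)$, where $\widehat S$ is the (feasible) set output by the algorithm. Your side remark about checking feasibility of the returned set is also resolved in the paper: every set placed in the collection $\mathcal C$ is either a singleton (feasible since $f(i)\ge c_i$) or an output of Algorithm~\ref{alg2} with $\rho(U)\le 1/m\le 1$, so the final $\arg\max_{T\in\mathcal C} g(T)$ is indeed a legitimate witness for $OPT(g)$.
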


Moreover, for submodular value functions, an $a$-demand oracle with $a=(1-\frac{1}{e})$ can be simulated efficiently using only value queries \cite{submodular-demand1, submodular-demand2}. The following corollary can be obtained. 

\begin{corollary}\label{XOS:submodular:ratio}
    There exists an algorithm that outputs a set whose utility is an $O(1)$-approximation of the optimal welfare, for any $n$-agent setting with a submodular value function. The algorithm runs in time polynomial in $n$ using value oracle.
\end{corollary}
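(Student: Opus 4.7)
The plan is to derive \cref{XOS:submodular:ratio} as a direct consequence of \cref{1}, together with the fact that for submodular value functions an approximate demand oracle can be simulated by a polynomial number of value queries. There is no new technical content beyond composing two ingredients, so the proposal is really a verification that the composition goes through without breaking either the running time or the $O(1)$ guarantee.

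First, I would note that every submodular set function is XOS (this is one of the standard containments in the complement-free hierarchy recalled in the Introduction). Therefore \cref{1} already applies to any $n$-agent setting $(f,c)$ with submodular $f$, and produces in time polynomial in $n$ a set $S$ whose utility $g(S)$ is a constant-factor approximation of $OPT(w)$. The caveat is that the algorithm of \cref{1} uses, in addition to value queries, an $a$-demand oracle for $f$.

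Next, I would invoke the results cited as \cite{submodular-demand1, submodular-demand2}: for any submodular $f$, an $a$-demand oracle with $a = 1-1/e$ can be implemented by a subroutine that makes only polynomially many value queries to $f$ and runs in polynomial time. I would then simply replace every call that the algorithm of \cref{1} makes to its $a$-demand oracle by a call to this simulation, with the constant $a = 1-1/e$. The resulting algorithm accesses $f$ solely through a value oracle and still runs in polynomial time in $n$, since the simulation overhead is polynomial and is invoked a polynomial number of times.

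The only point that requires attention is that fixing $a = 1-1/e$ preserves the $O(1)$ approximation ratio of \cref{1}. This is immediate once one tracks the dependence of the approximation guarantee of \cref{1} on $a$: since $a$ is a fixed positive constant, any ratio of the form $O(1/a)$ promised by \cref{1} remains $O(1)$. There is no genuine obstacle; the whole corollary is a black-box consequence of \cref{1} and a known oracle reduction, and the only cost is a worse (but still constant) approximation factor, matching the value $468$ reported in \cref{tab:tiny}.
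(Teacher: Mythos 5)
Your proposal is correct and matches the paper's own derivation: Theorem~\ref{1} applied to the submodular (hence XOS) function $f$, together with the $(1-1/e)$-demand oracle simulation via value queries from \cite{submodular-demand1, submodular-demand2}, with the resulting constant worked out in Lemma~\ref{approximation_factors} to be $468$. Nothing further is needed.
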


See the approximation factors our algorithm obtains in \cref{approximation_factors}.
The rest of the section is organized as follows. In \cref{sub:XOS-approx-alg} we describe and analyze our algorithm (\cref{alg3}), which establishes \cref{1} and \cref{XOS:submodular:ratio} (\Cref{XOS:demand:ratio} follows immediately from \cref{1}). The analysis uses the guarantees of \cref{alg3} as summarized in \cref{fundamental}, which we prove in \cref{sub:Alg3-lemma}. In \cref{sub:XOS-stronger-benchmark} we discuss extensions of our results to $b$-feasibility.

\subsection{Approximation Algorithm}
\label{sub:XOS-approx-alg}

Our approximation algorithm (\cref{alg3}) uses two other algorithms. The first is \cref{alg1} from \cite{multi-agents}, 
which for any XOS function $f$ (with value and $a$-demand oracle access) and target value $0 \le y \le f([n])$, finds a set $U$ such that (1)~for all $i \in U$, the marginal $f(i: U)$ is approximately $f(i:[n])$; and (2)~$f(U)$ is close to $y$. We say that $(y,f,c)$ is \emph{a valid input} for this algorithm if $0 \le y \le f([n])$. Details appear in \cite{multi-agents} and, for completeness, in \cref{appendix:alg1}.

\paragraph{Algorithm~\ref{alg2}.} 

The second algorithm is a generalized form of an algorithm introduced by~\cite{multi-agents} (also named Algorithm~2). 
While \cref{alg1} encapsulates basic properties of XOS functions, \cref{alg2} connects these to near-optimal contract design. 
It attempts to construct a valid input for \cref{alg1} by scaling costs, restricting domains, and setting an appropriate target value. 
Our \cref{alg2} appears in \cref{appx:alg2-and-proof} and uses several new ideas, e.g., we use a property we call $\kappa$-boundedness, which concerns domain extensions of domain-restricted functions. 

The following lemma, which we prove in \cref{appendix:XOS}, captures the guarantees of \cref{alg2}.
Let us introduce some notation:
Let $x$ be a function that returns the maximum single-agent value of $f$; that is,
$x(f):=\max_{i\in [n]}f(\{i\})$. 
For a given value $a\in (0, 1]$ we define $A:=(a/4)^2$; subsequently, $1/A$ will be the loss from the use of an $a$-demand oracle. 
Let $M(f,y) := \frac{A}{m}\cdot \frac{y}{x(f)}$, where $m \ge 1$ is part of the input to \cref{alg2}. We also generalize our notation above, and denote by $\rho_T^{(h)}$ the function mapping a set to its total share while minimizing the total share needed to incentivize the set $T$, assuming the value function is $h$.

\begin{lemma}\label{lemma:alg2}
    Consider a $n$-agent setting $(f,c)$ with an XOS value function, a target value $y\ge 0$. 
    There exists a polynomial-time algorithm with $a$-demand oracle access to $f$ that guarantees the following. Given an input $(f, c, y, m)$ that satisfies the following validity conditions:
    
    \begin{enumerate}[nosep,noitemsep]
        \item There exists a set $S$ such that $y\le f(S)$; 
        \item There exists an XOS domain extension $h$ of $f_{|S}$ to domain $T$, s.t. $\rho^{(h)}_T(S)\le \left(2-\frac{1}{8m}\right)^2$; 
        \item $M(f, y)\ge 1$;
    \end{enumerate}
    the algorithm returns a set $U \subseteq D$ that satisfies
    \begin{enumerate}[nosep,noitemsep]
        \item $\frac{A}{m}\cdot y\ge f(U)\ge \left(1-\frac{1}{M(f, y)}\right)\cdot \frac{A}{2m}\cdot y$;
        \item $\rho(U)\le \frac{1}{m}$.
    \end{enumerate}
\end{lemma}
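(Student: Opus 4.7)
The plan is to design Algorithm~\ref{alg2} as a transformation that converts the input $(f,c,y,m)$ into a valid input for Algorithm~\ref{alg1}, and then to invoke Algorithm~\ref{alg1} on the transformed instance. The two knobs I would use are cost scaling (to turn the target share bound $1/m$ into the normalized unit budget that Algorithm~\ref{alg1} expects) and a domain restriction to a subset $D$ of agents whose individual cost-to-value ratio is well-controlled. The target value fed to Algorithm~\ref{alg1} will be $y':=\frac{A}{m}\cdot y$, matching the upper part of the value guarantee in the lemma.

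First I would set up the restricted instance. I would put into $D$ only those agents $i$ whose individual marginal $f(\{i\})$ is not too small relative to $x(f)$, with the threshold calibrated so that any $U\subseteq D$ returned by Algorithm~\ref{alg1} automatically has share $\rho(U)\le 1/m$: once Algorithm~\ref{alg1} guarantees that $f(i:U)$ is within a constant factor of $f(i:[n])$ for every $i\in U$, the domain restriction bounds each ratio $c_i/f(i:U)$, and summing over $i\in U$ together with validity condition~2 caps the total by $1/m$. This is the step where the precise threshold $\left(2-\tfrac{1}{8m}\right)^2$ on $\rho^{(h)}_T(S)$ enters: it tells us how aggressively we may prune while still keeping the XOS structure intact.

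Second I would verify that $(y', f_{|D}, c')$ is a valid input for Algorithm~\ref{alg1}, i.e., $y'\le f(D)$. Validity condition~1 produces a set $S$ with $f(S)\ge y$. Using the XOS extension $h$ of $f_{|S}$ from condition~2, I would invoke a $\kappa$-boundedness argument to show that most of the value $f(S)$ is concentrated on agents that survive the restriction to $D$; the bound on $\rho^{(h)}_T(S)$ controls exactly how much value can escape to ``bad'' agents with poor cost-to-marginal ratios. Condition~3 ($M(f,y)\ge 1$) ensures that the target $y'$ lies in the regime where Algorithm~\ref{alg1}'s approximation guarantee is nontrivial, and specifically gives the factor $\left(1-\tfrac{1}{M(f,y)}\right)$ in the lower bound on $f(U)$.

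Finally, I would run Algorithm~\ref{alg1} on $(y',f_{|D},c')$ and return its output $U$. The value inequalities in the conclusion come directly from Algorithm~\ref{alg1}'s guarantee that $f(U)$ is close to the target $y'=\tfrac{A}{m}y$ (losing a factor of roughly $2$ below and nothing above), and the share bound $\rho(U)\le 1/m$ follows by combining the marginal-preservation property with the domain restriction as described above. The principal obstacle I anticipate is the $\kappa$-boundedness step: because $h$ lives on a larger domain than $f$, we must translate its bounded-share property back to $f_{|D}$ without losing more than the constant factor already built into $M(f,y)$, and this is where the exact form of the threshold $(2-\tfrac{1}{8m})^2$ has to be squeezed to match the two constants $A/m$ and $A/(2m)$ appearing in the conclusion.
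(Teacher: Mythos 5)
Your high-level plan (transform the input, invoke Algorithm~\ref{alg1}, use $\kappa$-boundedness to validate the call, and use marginal preservation to control the share) is in the right family, but two of your concrete design choices diverge from what the paper does, and in a way that would break the argument.

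First, the domain restriction. You propose to build $D$ by keeping agents $i$ with $f(\{i\})$ not too small relative to $x(f)$, and you claim that this threshold, combined with Algorithm~\ref{alg1}'s marginal-preservation property, ``bounds each ratio $c_i/f(i:U)$.'' It does not: a lower bound on $f(\{i\})$ says nothing about $c_i$, and for XOS functions $f(i:U)$ can be far below $f(\{i\})$, so no per-agent share bound follows. The paper instead does something quite different and specific: it defines a \emph{square-root-modified cost} $\tilde c_i = 2\sqrt{c_i A y}$, queries the $a$-demand oracle for $(f,\tilde c)$, and then takes a welfare-minimal subset $Z$ of the demand set. Welfare-minimality of $Z$ for $(f,\tilde c)$ guarantees $f(i:Z)\ge \tilde c_i$ for every $i\in Z$. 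Running Algorithm~\ref{alg1} on $f_{|Z}$ then preserves marginals up to a factor of $2$, so $f(i:U)\ge \frac12 \tilde c_i = \sqrt{c_i A y}$. Squaring this, using the upper bound $f(U)\le \frac{A}{m}y$, and summing via the XOS sum-of-marginals property $\sum_{i\in U}f(i:U)\le f(U)$ is precisely the chain that yields $\rho(U)\le 1/m$. The quadratic relationship (square-root in the modified cost, squaring in the analysis) is essential and your linear ``cost scaling'' story cannot produce it. Relatedly, Algorithm~\ref{alg1} takes no cost input at all, so there is no ``normalized unit budget'' that cost scaling would target; the costs only enter through the demand query and the welfare-minimal subset in Algorithm~\ref{alg2}.

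Second, the target value. You propose feeding $y' = \frac{A}{m}y$ to Algorithm~\ref{alg1}. Algorithm~\ref{alg1}'s guarantee is $\frac12\Psi \le f(U) \le \Psi + x(f_{|Z})$, so with $\Psi = \frac{A}{m}y$ you would only get $f(U)\le \frac{A}{m}y + x(f)$, which both violates the lemma's stated upper bound and invalidates the $f(U)\le\frac{A}{m}y$ step used in the share calculation above. The paper instead feeds $\Psi = \psi(f,y) = \frac{A}{m}y - x(f)$, which is exactly calibrated so that the $+x(f)$ loss is absorbed. Your $\kappa$-boundedness intuition (``control how much value escapes to bad agents'') is also off from the paper's use: $\kappa$-boundedness is the bound $\sum_{i\in S}\sqrt{c_i}\le \kappa\sqrt{f(S)}$ obtained from condition~2 by Cauchy--Schwarz, and its role is to show that the $a$-demand set for $\tilde c$ has value at least $\psi(f,y)$, i.e., that the call to Algorithm~\ref{alg1} is valid. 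So the overall shape you describe is recognizable, but the mechanism you give for the share bound is missing, the target value is wrong, and the role of the demand oracle (which is where the $a$-demand access in the lemma is actually used) is absent from your sketch.
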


To interpret \cref{lemma:alg2}, think of $M(f, y)$ as sufficiently large. Then,  
Property (1) of the output implies that $f(U)$ approximates $y$. 
Property (2) of the output implies that $U$ is feasible. 
Since $g(U)=(1-\rho(U))f(U)$, for any constant $m>1$ it holds that $g(U)$ also approximates $y$. Moreover, since $w(U)\ge g(U)$, it holds that if $y$ is an estimate of the optimal welfare value, then so are $w(U)$ and $g(U)$.
Finally, notice that the ratio between $f(U)$ and $y$ depends on $M(f, y)=\frac{A}{m}\cdot \frac{y}{x(f)}$. 
Thus, for any target value $y$, assuming $A$ and $m$ are given constants, the ratio depends on the size of $x(f)$ relative to $y$, where $x(f)$ is determined by the agents in the domain $D$ of $f$.
Ideally, the maximum value $x(f)$ should be small compared to $y$. 
The key idea is to use \cref{lemma:alg2} with a restricted domain $D\subseteq [n]$ to reduce $x(f)$ to $x(f_{|D})$, then consider the original function $f$ as the domain extension $h$ of $f_{|S}$.%
\footnote{We remark that the $a$-demand oracle for $f$ extends to any of its domain restrictions, since we are allowed to put $p_i=\infty$ for any $i\notin D$ where $D$ is the domain restriction of $[n]$.}

\paragraph{Algorithm~\ref{alg3}.} 

Our main routine, \cref{alg3}, repeatedly invokes \cref{alg2} to ensure a good approximation for welfare. 
To obtain a good approximation, it must ensure three main conditions for \cref{alg2}: (1) $y$ is a target value close to the optimal welfare, (2) the largest agent value of $f$ is comparatively small, and (3) the parameters yield a valid input for \cref{alg1}. 
\cref{alg3} is designed to address these three issues.

First, since we do not know the value of the optimal welfare set in advance, \cref{alg3} systematically tests estimates $y$ within a given range by performing a doubling search. 

Second, to ensure that the ratio between $x(f)$ and $y$ remains small, \cref{alg3}  applies \cref{alg2} with a domain-restricted function $f_{|D_y}$ where $D_y$ excludes agents with a large value relative to $y$. 
Specifically, an estimate $M$ is chosen in advance, and for every $y$ the domain of $f$ is restricted to $D_y=\{i: f(i)\le \frac{A}{m}\cdot \frac{y}{M}\}$, which ensures that $M(f_{|D_y}, y)=\frac{A}{m}\cdot \frac{y}{x(f_{|D_y})}$ is at least as large as $M$. 

The third issue is that the reduced domain $D_y$ might not include the optimal welfare set $S$ as per the validity requirements of \cref{lemma:alg2}. \cref{alg3} ignores this issue and uses parameters that \emph{a priori} do not guarantee a valid input, and this issue will be dealt with in the analysis.

\setcounter{algocf}{2}
\begin{algorithm}[ht]
\DontPrintSemicolon
\caption{Constant-factor approximation for XOS value functions}
\label{alg3}

\KwInput{An instance $(f,c)$ where $f$ is a XOS function accessible through a value oracle and an $a$-demand oracle,
a doubling parameter $\gamma\in (0, 1)$, and parameters \gcm{$m\ge 1$}, $M>1$.}
\KwOutput{A collection of sets $\mathcal C$.}
\medskip
Let ${\mathcal C} \gets \left\{ \{i\} : i \in [n]\right\}$ and $y \gets \max_{i\in [n]}w(i)$.\;

\While{$y \le f([n])$}
{
Let $D_y\gets \left\{i\in [n]: f(\{i\})\le \frac{A}{m}\cdot \frac{y}{M}\right\}$. 

Run \cref{alg2} on $(f_{|D_y},\:c_{|D_y},\:y,\;m)$; add output to ${\mathcal C}$.

$y \gets y \cdot 2^\gamma$.
}
\Return ${\mathcal C}$

\end{algorithm}

We characterize the output of \cref{alg3} in the following lemma, which we prove in \Cref{sub:Alg3-lemma}:
\begin{lemma}
\label{fundamental}
     Let $\mathcal C$ be the collection returned by \cref{alg3}.
     For any feasible set $S$, the collection $\mathcal C$ contains at least one of the following sets:
    \begin{enumerate}[nosep,noitemsep]
        \item A set $U\subseteq [n]$ such that $\rho(U)\le \frac{1}{m}$ and $f(U)$ is a constant-factor approximation of $w(S)$.
        \item A singleton consisting of agent $i\in S$ 
        such that $g(\{i\})$ is a constant-factor approximation of $w(S)$.
    \end{enumerate}
\end{lemma}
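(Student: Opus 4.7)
The approach is to analyze $S$ by a case split on whether $S$ contains a singleton whose utility is comparable to $w(S)$. If so, condition~(2) of the lemma holds for free, since $\{i\}\in \mathcal C$ for every $i\in [n]$ by the initialization of Algorithm~\ref{alg3}. Otherwise, I invoke Lemma~\ref{lemma:alg2} at an iteration whose target $y$ is close to $w(S)$ and show that the returned $U$ meets condition~(1). Throughout, fix a constant $C_1\gg 1$ to be calibrated at the end, and abbreviate $\alpha:=A/(mM)$.

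Case~B is the easy one: if some $i\in S$ has $g(\{i\})\ge w(S)/C_1$, then $\{i\}\in\mathcal C$ already witnesses condition~(2) with the constant $C_1$.

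Case~A assumes $g(\{i\})<w(S)/C_1$ for every $i\in S$. I first show that for any iteration $y$ with $y\ge w(S)/2^{2\gamma}$, the set $S$ is almost contained in $D_y$. Indeed, for $j\in S\setminus D_y$ we have $f(\{j\})>\alpha y$, while $g(\{j\})<w(S)/C_1$ implies $c_j/f(\{j\})>1-w(S)/(C_1\alpha y)$. Combining this with the XOS subadditivity bound $f(j:S)\le f(\{j\})$ yields $\rho_S(j)>1-w(S)/(C_1\alpha y)\ge 1-2^{2\gamma}/(C_1\alpha)$. Summing over $j\in S\setminus D_y$ against the feasibility inequality $\rho(S)\le 1$ forces $|S\setminus D_y|$ to be at most $1$ once $C_1\ge 2^{2\gamma+1}/\alpha$. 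Setting $S^\star:=S\cap D_y$, the subadditivity of $w$ on the disjoint union $S=S^\star\sqcup(S\setminus D_y)$ gives $w(S)\le w(S^\star)+w(S)/C_1$, so $f(S^\star)\ge w(S^\star)\ge (1-1/C_1)w(S)$.

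Next, invoke Lemma~\ref{lemma:alg2} on $(f_{|D_{y'}},c_{|D_{y'}},y',m)$ with witness $S^\star$ at the iteration $y'$ one step below the largest iteration not exceeding $w(S)$; so $y'\in[w(S)/2^{2\gamma},w(S)/2^{\gamma}]$. Validity~(1) reads $f(S^\star)\ge y'$, which follows from $f(S^\star)\ge (1-1/C_1)w(S)\ge 2^\gamma(1-1/C_1)y'$ once $2^\gamma(1-1/C_1)\ge 1$. For validity~(2) take the trivial XOS extension $T:=S$, $h:=f_{|S}$: then $\rho_T^{(h)}(S^\star)\le\rho(S)\le 1\le (2-1/(8m))^2$ for $m\ge 1$. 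Validity~(3) holds by the very definition of $D_{y'}$, which forces $M(f_{|D_{y'}},y')\ge M>1$. Lemma~\ref{lemma:alg2} then returns $U\in\mathcal C$ with $\rho(U)\le 1/m$ and $f(U)\ge (1-1/M)\cdot\frac{A}{2m}\cdot y'=\Omega(1)\cdot w(S)$, establishing condition~(1).

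The main obstacle will be the joint calibration of $C_1$, $\gamma$, $m$, $M$: $C_1$ must be large relative to $2^{2\gamma}/\alpha=2^{2\gamma}mM/A$ so that $|S\setminus D_y|\le 1$; the pair $(\gamma,C_1)$ must satisfy $2^\gamma(1-1/C_1)\ge 1$ so that validity~(1) holds at $y'$; and the final constant-factor bound $(1-1/M)\cdot A/(2m\cdot 2^{2\gamma})$ must remain bounded away from zero. The interplay between the doubling discretization of $y$ and the single exceptional agent of $S$ that may lie outside $D_y$ is what drives all these constraints.
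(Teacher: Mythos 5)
Your decomposition is genuinely different from the paper's and arguably cleaner. The paper anchors the doubling estimate $y$ to $f(S)$, does a case split on whether the single heavy agent is $\delta$-cheap or $\delta$-expensive, and in the expensive case recurses once with $S_i=S\setminus\{i\}$ and a second estimator $y_i\approx f(S_i)$, finally stitching the pieces via subadditivity of $w$. You instead anchor to $w(S)$, assume all singletons inside $S$ have small utility, and use the feasibility budget $\rho(S)\le 1$ to show $|S\setminus D_{y'}|\le 1$ directly; the remainder then fits Lemma~\ref{lemma:alg2} with the domain-extension trick $h:=f_{|S}$, $T:=S$ (which you use correctly). Avoiding the recursion is a nice structural simplification.

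However, there is a genuine gap: you implicitly assume the doubling search ever visits the window you need. Algorithm~\ref{alg3} initializes $y\gets \max_i w(i)$ and only \emph{increases} $y$; it never probes below $\max_i w(i)$. You require a tested iteration $y'$ with $y' \le (1-1/C_1)\,w(S)$ (that is what makes validity~(1) of Lemma~\ref{lemma:alg2} hold, since $f(S^\star)\ge (1-1/C_1)w(S)$ is the only lower bound you have for $f(S^\star)$). For the welfare-optimal $S$, the only guarantee is $\max_i w(i)\le w(S)$; it can well be that $\max_i w(i)\in \bigl((1-1/C_1)\,w(S),\,w(S)\bigr]$, in which case the desired window lies strictly below the first iteration and is never reached. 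In Case~A this cannot happen with the argmax inside $S$ (that would put you in Case~B), but the argmax can be achieved by some $i^*\notin S$, and then $\{i^*\}$ is a good set in $\mathcal C$ but does not satisfy your clause~(2) (which demands $i\in S$), nor necessarily clause~(1) (you have no control on $\rho(\{i^*\})$). The paper sidesteps this at the first level by anchoring $y$ to $f(S)\ge w(S)\ge \max_i w(i)$, which \emph{is} always reachable. So the gap is not a calibration issue; your choice of $y'\approx w(S)$ is too low relative to where the doubling search starts, and you would need either to re-anchor $y$ to $f(S^\star)$ (which in turn you only lower-bound by $w(S)$), or to add an explicit sub-case handling $\max_i w(i)>(1-1/C_1)\,w(S)$ and argue why some set in $\mathcal C$ still meets one of the two clauses.
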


The constant-factor approximation guaranteed by \cref{fundamental} is determined by the parameters $\gamma,m,M$ fed into \cref{alg3}, by the quality of the $a$-demand oracle it is provided with, and by a parameter $\delta\in (0, 1)$ that is used for case analysis in the proof. 
In more detail, we say that an agent is \emph{$\delta$-cheap} if its share when working alone is smaller than $(1-\delta)$, and otherwise, we say it is $\delta$-expensive. 
Determining the cutoff $\delta$ between these agent types, together with the additional four parameters $\gamma,m,M,a$, results in the approximation guarantee of \cref{fundamental} (see \cref{sub:Alg3-lemma} for details). 
In \cref{approximation_factors} we give two instantiations of these parameters, one for an exact demand oracle ($a=1$) and one for an approximate one ($a=1-\frac{1}{e}$), to derive \cref{1} and \Cref{XOS:submodular:ratio}, respectively.

\begin{proof}[Proof of \cref{1}]
        \cref{1} follows immediately from \cref{fundamental} since we are able to obtain a polynomial-sized collection of sets which is guaranteed to have at least one ``good" set. Specifically, consider the optimal welfare set $S$. Since $S$ is feasible, in both cases of \cref{fundamental}, we are guaranteed that the collection $\mathcal C$ contains a set $U$ whose utility is a constant approximation of $w(S)$, provided that $m$ is bounded away from $1$ (an even stronger guarantee holds for the first case, since $g(U)=(1-\rho(U))f(U)$, and so we get an approximation to the value---see \cref{section:value}).
        Therefore, the largest utility of a set $T\in \mathcal C$ is guaranteed to be a constant-factor approximation to the optimal welfare. 
        Hence, to obtain our desired approximation we output a set in $\arg \max_{T\in \mathcal C} g(T)$. 
\end{proof}

\subsection[Proof of Lemma \ref{fundamental}]{Proof of \cref{fundamental}}
\label{sub:Alg3-lemma}

\paragraph{Overview.}

    Consider a feasible set $S$, and consider the estimate $y$ of $S$; that is, the value $y$ which is in the range $f(S)\cdot 2^{-\gamma}<y\le f(S)$. 
    For simplicity, assume that $y$ is always precise, i.e., $y=f(S)$.
    We are interested in finding a set whose utility approximates $f(S)$.
    Observe that if the value $f(i)$ of every agent $i\in S$ is sufficiently small compared to $y$, then $S$ is contained in $D_y$. 
    Since $S$ is feasible, the conditions of \cref{lemma:alg2} hold with $h:=f_{|S}$. 
    Therefore, with input $f_{|D_y}$ \cref{alg2} will return a set $U$, whose utility $g(U)$ provides the desired constant-factor approximation of $y$.
    Otherwise, $S$ must contain an agent $i$ with a large value $f(i)$ relative to $y$. If the share of this agent is small, then agent $i$ alone provides the desired approximation of $y$ (since $g(i)=\left(1-\rho(i)\right)f(i)$). 
    
    Recall that by definition of $\rho$, for any set $S$, and any two agents $i, j\in S$, it holds that $\rho(i)+\rho(j)\le \rho(S)$. Thus, since $\rho(S)$ is bounded by 1, if $\rho(i)$ is large, then $\rho(j)$ must be small.
    This implies that for the set $S_i=S\setminus \{i\}$, and for the corresponding estimator $y_i$ of $f(S_i)$, one of the following must hold: (1) $S_i$ contains only individuals with small value relative to $y_i$, in which case $S_i\subseteq D_{y_i}$; (2) there exists an agent $j\in S_i$ with a large value and a small share. In both cases, we obtain a constant-factor approximation of $f(S_i)$.

    A natural way to proceed would have been to show that a set whose utility is a constant-factor approximation of $f(S_i)$ is also a constant-factor approximation of $f(S)$. However, this is not necessarily true: Consider the case where $S$ is a singleton. Then $S_i=\emptyset$, and thus $f(S_i)=0$.
    We show a weaker guarantee: either $f(S_i)$ is a constant-factor approximation of $w(S)$, or $g(i)$ is a constant-factor approximation of $w(S)$. This follows by subadditivity of $w$, together with the fact that $f(S_i)\ge w(S_i)$ and $g(i)=w(i)$.
    Either way, we obtain a set whose utility is a constant-factor approximation of $w(S)$.

\begin{proof}[Proof of \cref{fundamental}]
    Consider a feasible set $S$. 
    By construction, \cref{alg3} tests an estimate of this set's value up to a factor of $2^{-\gamma}$. 
    We say $y$ with $f(S)\cdot2^{-\gamma} < y \le f(S)$ is the \emph{estimate of $S$}.
    
    Recall that $D_y=\left\{i: f(i)\le \frac{A}{m}\cdot \frac{y}{M}\right\}$. We think of $D_y$ as the set of \emph{light} agents, where an agent is light (relative to $y$) if $f(i)\le \frac{A}{m}\cdot \frac{y}{M}$.
    Observe that if a set $S$ is composed of light agents and is a subset of a feasible set, then by \cref{lemma:alg2}, for an estimate $y$ of $S$, we are guaranteed to obtain
    a set $U$ with share $\rho(U)\le \frac{1}{m}$ and value $f(U)\ge \left(1-\frac{1}{M}\right)\cdot \frac{A}{2m}\cdot y$, and hence
    $$g(S)\ge \left( 1-\frac{1}{m} \right) \left( 1-\frac{1}{M} \right) \frac{A}{2m} \cdot  f(S) \cdot 2^{-\gamma}.$$
    
    If $S$ is not a subset of $D_y$, then it must contain a \emph{heavy} agent; that is, there is an agent $i\in S$ such that $f(i)> \frac{A}{m}\cdot \frac{y}{M}$. 
    Observe that a heavy agent's utility approximates the value of the entire set if and only if $\rho(i)$ is smaller than $1$ by a constant. This is since $g(i)=\left(1-\rho(i)\right)f(i)$. 
    
    Recall that we say an agent is \emph{$\delta$-cheap} if its share is smaller than $(1-\delta)$, and otherwise, we say it is $\delta$-expensive. 
    For a given $\delta\in (0, 1)$, if $i$ is $\delta$-cheap, then since $y\ge f(S)\cdot 2^{-\gamma}$ it holds that $$g(i)\ge \delta\cdot  f(i)>\frac{A}{m}\cdot \frac{\delta}{M}\cdot f(S)\cdot 2^{-\gamma}.$$
    
    If $i$ is \emph{$\delta$-expensive}, then since $\rho(S)\le 1$, every other agent in $S$ is $(1-\delta)$-cheap. Let us consider the set $S_i=(S\setminus i)$, and let $y_i$ be the estimator of $f(S_i)$.
    Since all agents in $S_i$ are $(1-\delta)$-cheap, there must be a set whose utility approximates $f(S_i)$: If every agent $j\in S_i$ is light (relative to $y_i$), \cref{lemma:alg2} guarantees a set $U'$ with $\rho(U')\le \frac{1}{m}$ for which
    $$g(U')\ge  \left( 1-\frac{1}{m} \right) \left( 1-\frac{1}{M} \right) \frac{A}{2m}  \cdot f(S_i)\cdot  2^{-\gamma}.$$
    In the other case, there exists a heavy agent $j\in S_i$. Then, since $\rho(j)\le \delta$ it holds that

    $$g(j)\ge \frac{A}{m}\cdot \frac{1-\delta}{M}\cdot f(S_i)\cdot 2^{-\gamma}.$$

Observe that in the last two cases we obtained an approximation of $f(S_i)$, while we aim for an approximation of $w(S)$.
By the subadditivity of $w$, either $w(S_i)$ or $w(i)$ is a constant-factor approximation of $w(S)$. Thus, either $g(i)$ (which is equal to $w(i)$) is larger than $\frac{A}{m}\cdot \frac{\delta}{M}\cdot w(S)$, or 
$$f(S_i)\ge w(S_i)\ge w(S)-w(i)\ge \left(1-\frac{A}{m}\cdot \frac{\delta}{M}\cdot 2^{-\gamma}\right)w(S)\ge \left(1-\frac{A}{m}\cdot \frac{\delta}{M}\right)w(S)\cdot 2^{-\gamma}.$$

\end{proof}

\begin{lemma}\label{approximation_factors}
   There exist parameters for \cref{alg3} for which the approximation factor obtained by the algorithm \cref{1} is 188 for XOS value functions with exact demand oracle and 468 for submodular value functions with only value oracle.
\end{lemma}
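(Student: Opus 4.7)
The plan is to extract from the proof of \cref{fundamental} a closed-form lower bound on $g(T^*)/w(S^*)$ in terms of the four tunable parameters $(\gamma, m, M, \delta)$ and the oracle quality $a$, and then to tune those parameters for the two settings of interest. Setting $A=(a/4)^2$, we have $A=1/16$ for XOS with an exact demand oracle ($a=1$), while for submodular $f$ accessed only through value queries we invoke the $(1-1/e)$-approximate demand oracle of \cite{submodular-demand1,submodular-demand2}, so $A=(1-1/e)^2/16$.

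The first step is to collect one bound per case of the proof of \cref{fundamental}. Writing $p=2^{-\gamma}$, the algorithm's output $T^*\in\arg\max_{T\in\mathcal{C}}g(T)$ satisfies $g(T^*)/w(S^*)\ge \min\{r_1, r_{2a}, r_{2b}, r_{\mathrm{sub}}\}$, where
\[
r_1 = \bigl(1-\tfrac{1}{m}\bigr)\bigl(1-\tfrac{1}{M}\bigr)\tfrac{A}{2m}\,p,\qquad r_{2a} = \tfrac{A\delta}{mM}\,p,\qquad r_{2b} = \tfrac{A(1-\delta)}{mM}\bigl(1-\tfrac{A\delta}{mM}\bigr)\,p,\qquad r_{\mathrm{sub}} = \tfrac{A\delta}{mM}.
\]
Here $r_1$ comes from Case 1 of the proof (every agent of $S$ is light); $r_{2a}$ from Case 2a (heavy, $\delta$-cheap agent in $S$); $r_{2b}$ from Case 2b-ii combined with the subadditivity dichotomy; and $r_{\mathrm{sub}}$ from the $w(i)$-dominant branch of that dichotomy, which is $p$-free because $g(i)=w(i)$ does not depend on the doubling estimate.

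The second step is to balance. Take $\delta=\tfrac12$ (so that the $\delta$- and $(1-\delta)$-weighted bounds match), and enforce $r_1=r_{2a}$, which reduces to the single equation $(1-\tfrac1m)(1-\tfrac1M)M=1$; a natural integer solution is $m=2,\,M=3$. For these parameters $r_1=r_{2a}=A/(2mM)\cdot p=Ap/12$, while $r_{2b}$ is smaller only by the benign factor $1-A/(2mM)$, and $r_{\mathrm{sub}}=A/(2mM)$ ceases to be binding as soon as $p\ge 1-A/(2mM)$, which holds for any sufficiently small $\gamma$. Substituting $a=1$ gives $A=1/16$ and the balanced minimum evaluates to roughly $p/193$, so the approximation factor is close to $193$ and a small joint adjustment of $(\delta,M,\gamma)$ tightens the constant to the stated $188$. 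For the submodular case the only change is the replacement $A\mapsto(1-1/e)^2/16$, which scales every bound above by $(1-1/e)^2$ and therefore inflates the factor by $(1-1/e)^{-2}\approx 2.5$, producing the stated $468$; the polynomial running time follows because \cref{alg3} performs $O(\gamma^{-1}\log(f([n])/\max_i w(i)))$ outer iterations, each a polynomial-time invocation of \cref{alg2}.

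The main obstacle is the bookkeeping around the Case 2b subadditivity dichotomy: it carves off a $w(i)$-heavy branch whose bound $r_{\mathrm{sub}}$ does not carry the doubling loss $p$, so shrinking $\gamma$ aggressively cannot improve the ratio below $r_{\mathrm{sub}}$. One must therefore verify that under the chosen parameters this branch matches rather than undercuts the $p$-carrying bounds, which reduces to the trivially checkable inequality $p\ge 1-A/(2mM)$. Once this is verified, the announced constants $188$ and $468$ follow from direct arithmetic.
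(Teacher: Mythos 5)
Your approach mirrors the paper's exactly: extract the per-branch parametric lower bounds from the proof of \cref{fundamental}, take the minimum, and tune $(m,M,\delta,\gamma)$. That is precisely what the paper does (the paper does not show the optimization either; it simply states the optimizing parameters and the resulting constant). However, there are three slips worth flagging.

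First, a bookkeeping error: $r_{2b}$ should carry $p^2$, not $p$ --- the branch passes through two estimators ($y$ for $f(S)$, then $y_i$ for $f(S_i)$), and the paper explicitly writes $2^{2\gamma}\cdot 187.82$ to reflect this. Relatedly, your explicit list omits the sub-case where all of $S_i$ is light, which gives a fifth bound $r_1 (1-\tfrac{A\delta}{mM}) p^2$; you refer to it implicitly but it belongs in the minimum. Neither affects the $\gamma\to 0$ limit, so the constants survive.

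Second, your worry about $r_{\mathrm{sub}}$ being binding is unfounded and the condition $p \ge 1 - A/(2mM)$ is irrelevant. Since $r_{\mathrm{sub}} = \tfrac{A\delta}{mM}$ and $r_{2a} = \tfrac{A\delta}{mM}\,p$, we have $r_{\mathrm{sub}} \ge r_{2a}$ for any $p \le 1$, so $r_{\mathrm{sub}}$ is \emph{never} the minimum regardless of $\gamma$; the $p$-free branch is automatically dominated.

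Third, and this is the substantive gap: the lemma is an existence claim (``there exist parameters $\ldots$''), and you do not actually exhibit a witness for $188$. Your integer choice $(m,M,\delta)=(2,3,\tfrac12)$ correctly yields $\approx 193$ by your own arithmetic, but the jump to $188$ is asserted, not demonstrated, and the paper's actual parameters are $m\approx 1.707$, $M\approx 3.414$, $\delta\approx 0.495$ --- closer to $(\sqrt{2},2\sqrt{2})$ than to $(2,3)$, so it is not a ``small joint adjustment.'' Likewise, the submodular constant $468$ is not literally $188\cdot (1-1/e)^{-2}\approx 470.5$; the $(1-A\delta/(mM))$ correction breaks pure linearity in $A$, and the paper re-tunes $(m,M,\delta)$ slightly for the submodular case. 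To close the proof you should plug in the parameters $m\approx 1.71$, $M\approx 3.41$, $\delta\approx 0.495$ (and $m\approx 1.713$, $M\approx 3.399$, $\delta\approx 0.498$ for the submodular case) and evaluate the minimum directly.
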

\begin{proof}
    For the case of $a=1$ demand oracle, i.e., exact demand oracle, when setting the parameters to $m=1.707$, and $M=3.414$, for $\delta = 0.495$, we achieve an approximation factor of $2^{2 \gamma}\cdot 187.82$, which is bounded by $188$ for sufficiently small $\gamma$. 
    This implies a sub-188 ratio between optimal welfare and optimal utility for any $n$-agent setting with XOS value function.

    Recall that for a submodular value function, an $a$-demand oracle with $a=1-\frac{1}{e}$ can be efficiently simulated using only value queries \cite{submodular-demand1, submodular-demand2}.
    For the case of $a=1-\frac{1}{e}$, when setting the parameters to $m=1.713$, and $M=3.399$, for $\delta = 0.4982$, we achieve an approximation factor of $2^{2\gamma}\cdot 467.98$, which is bounded by $468$ for sufficiently small $\gamma$. 

\end{proof}

\subsection{Improving the Benchmark}
\label{sub:XOS-stronger-benchmark}

We have shown a utility approximation algorithm under the strengthened benchmark of optimal welfare. 
We now observe it is possible to provide a constant approximation of utility even relative to the benchmark of optimal $b$-feasible welfare for any $b=2-\Theta(1)$. 
We note that the resulting constant bound on the welfare-utility gap does not extend to arbitrary values of $b$. This is since, as shown in \cref{appendix:subadditive:ub}, for $b=\omega(1)$ the ratio between optimal welfare subject to $b$-feasibility and optimal utility may be unbounded (i.e., may grow with $n$).

\begin{proposition}\label{XOS<2}
    There exists an algorithm that, for any $1\le b \le 2-\Theta(1)$, outputs a set whose utility is an $O(1)$-approximation of the optimal welfare subject to $b$-feasibility, for any $n$-agent setting with an XOS value function. 
      The algorithm runs in polynomial time in $n$ using value and $a$-demand oracles.
\end{proposition}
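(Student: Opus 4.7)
The plan is to re-run \cref{alg3} (possibly with slightly tuned parameters $m,M$) and adapt the case analysis of \cref{fundamental} to handle a $b$-feasible target set $S^*$, where $b = 2 - \epsilon$ for some fixed constant $\epsilon > 0$. Throughout, I will exploit the fact that since XOS value functions are subadditive, $f(\{j\}) \ge f(j:S^*)$ for every $j \in S^*$, which gives $\rho(\{j\}) \le \rho_{S^*}(j)$ and hence $\sum_{j\in S^*}\rho(\{j\}) \le \rho(S^*) \le b$.

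First, I would fix the cheap/expensive threshold $\delta < \epsilon/2$. The observation above then forces at most one agent in $S^*$ to be $\delta$-expensive, i.e., to satisfy $\rho(\{j\}) > 1-\delta$: two such agents would contribute more than $2(1-\delta) > 2-\epsilon = b$ to $\sum_j \rho(\{j\})$, a contradiction. This is the crucial structural property that fails for $b\ge 2$ and which the statement's bound $b\le 2-\Theta(1)$ is designed to preserve.

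Second, I would redo the case analysis of \cref{fundamental} on $S^*$. If $S^*$ contains no $\delta$-expensive agent, then the original arguments go through essentially unchanged: either $S^*\subseteq D_y$ (for $y$ the estimator of $f(S^*)$) and \cref{lemma:alg2} produces a good set $U\in\mathcal C$, or there is a heavy agent $i\in S^*$ that is $\delta$-cheap, whence $g(\{i\}) \ge \delta f(\{i\})$ is a constant-factor approximation of $w(S^*)$. If $S^*$ contains a (necessarily unique) $\delta$-expensive agent $i$, then
\[
\rho(S^* \setminus \{i\}) \;\le\; b - (1-\delta) \;=\; 1 - \epsilon + \delta \;<\; 1,
\]
so $S^* \setminus \{i\}$ is $1$-feasible and the unmodified proof of \cref{fundamental} applied to $S^* \setminus \{i\}$ yields a set in $\mathcal C$ whose utility is a constant-factor approximation of $w(S^* \setminus \{i\})$. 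Combining this with $g(\{i\})=w(\{i\})$ and the subadditivity of $w$ (inherited from XOS), so that $w(S^*) \le w(S^* \setminus \{i\}) + w(\{i\})$, yields a constant-factor approximation of $w(S^*)$ via the better of the two sets.

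The main obstacle I expect is checking the validity condition $\rho^{(h)}_T(S)\le (2-\tfrac{1}{8m})^2$ of \cref{lemma:alg2} in the first sub-case, since now $\rho(S^*)$ may be as large as $b$ (close to $2$) rather than $1$. Because $(2-\tfrac{1}{8m})^2 \to 4$ as $m$ grows, the same construction of $h$ and $T$ used in the $1$-feasible proof leaves enough slack to absorb an extra factor of $b<2$, provided $m$ is taken sufficiently large (which only reduces the approximation factor by a constant depending on $\epsilon$). Tracking the constants through both branches of the case analysis produces an overall approximation factor of the form $O(1/\epsilon^{O(1)})$, which remains $O(1)$ as long as $\epsilon = 2-b = \Theta(1)$, matching the statement of \cref{XOS<2}.
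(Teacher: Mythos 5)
Your high-level plan is close to the paper's (which simply observes that the two implications of feasibility used in the proof of \cref{fundamental} continue to hold with adjusted constants for $\rho(S)\le 2-\eta$), but there is a genuine gap in the $\delta$-expensive branch.

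The issue is the display $\rho(S^*\setminus\{i\})\le b-(1-\delta)<1$, from which you conclude that $S^*\setminus\{i\}$ is $1$-feasible and then apply \cref{fundamental} ``unmodified.'' Unwinding notation, $\rho(S^*\setminus\{i\})=\rho_{S^*\setminus\{i\}}(S^*\setminus\{i\})=\sum_{j\ne i}c_j/f\bigl(j:S^*\setminus\{i\}\bigr)$, whereas what you can actually bound by $b-(1-\delta)$ is $\rho_{S^*}(S^*\setminus\{i\})=\sum_{j\ne i}c_j/f(j:S^*)$. Passing from the latter to the former requires $f\bigl(j:S^*\setminus\{i\}\bigr)\ge f(j:S^*)$, i.e.\ that removing $i$ does not shrink the marginals of the other agents. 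This is exactly submodularity and fails for general XOS functions; one can cook up small XOS examples (two additive clauses on three items) where removing one agent \emph{decreases} another agent's marginal, so $\rho(S^*\setminus\{i\})$ can exceed $1$ even though $\rho_{S^*}(S^*\setminus\{i\})<1$. (Subadditivity only gives you $\rho(\{j\})\le\rho_T(j)$ for the singleton against any team $T\ni j$; it does not compare two nested non-singleton teams, which is what you need here.) So ``$S^*\setminus\{i\}$ is $1$-feasible'' is not a valid inference, and \cref{fundamental} cannot be invoked as a black box on $S^*\setminus\{i\}$.

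The fix is what the paper does, and your proof would go through with that change: instead of arguing $S^*\setminus\{i\}$ is feasible, verify the validity conditions of \cref{lemma:alg2} directly. Condition (2) explicitly permits a domain extension $h$ of $f_{|S_i}$ to a larger domain $T$, and taking $h=f_{|S^*}$, $T=S^*$ gives $\rho^{(h)}_{S^*}(S_i)=\rho(S^*)-\rho_{S^*}(i)\le b<2$, which is well below $(2-\tfrac{1}{8m})^2$ for \emph{every} $m\ge 1$ (already $(15/8)^2\approx 3.5$ at $m=1$), so your worry in the last paragraph about having to take $m$ large to ``leave enough slack'' is unfounded. The bound on the remaining agents' singleton shares then follows from your correct subadditivity observation $\sum_{j\in S^*}\rho(\{j\})\le\rho(S^*)\le b$, giving $\rho(\{j\})\le b-(1-\delta)=1-(\epsilon-\delta)$ for all $j\ne i$, bounded away from $1$ by a constant once $\delta<\epsilon$. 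With those two substitutions, your decomposition via subadditivity of $w$ is exactly the paper's and the constant tracking is as you describe.
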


The idea behind this extension is the observation that the optimal set $S$ needs only to fulfill the conditions of \cref{lemma:alg2} for the proof to go through. We use the fact that its shares are bounded to argue that if there is an agent that is \enquote{expensive}, then any other agent is \enquote{cheap}, and that a \enquote{cheap} agent's share is bounded away from $1$ by a constant to obtain the approximation bounds. This holds for any set whose share is bounded away from $2$ as well, by restricting the range of our analysis parameter $\delta$. 

\begin{proof}[Proof of \cref{XOS<2}]
    Observe that in the proof of \cref{fundamental}, we do not directly use the fact that $S$ is feasible, but use two implications of its feasibility: (1)~$\rho(S)\le \left(2-\frac{1}{8m}\right)^2$; and (2)~since $\rho(S)\le 1$, if any one of the agents $i\in S$ is $\delta$-expensive, then any other agent $j\in S$ is $\delta$-cheap, so the share of $j$ is bounded away from 1.

    Here we will show that this can be extended to any set $S$ whose share $\rho(S)$ is upper bounded by some $2-\eta$ where $\eta$ is a small constant.
    
    The first implication generalizes to any set with $\rho(S)\le \left(2-\frac{1}{8m}\right)^2$, and in particular to any $b$-feasible set where $b\le 2-\eta$.  
    
    As for the second implication, this generalizes to a weaker guarantee: for any constant $\delta$ such that $0\le \delta\le \eta - \Theta(1)$, if agent $i\in S$ is $\delta$-expensive (i.e. $\rho(i)\ge (1-\delta)$), then the share of any other agent $j\in S$ is bounded above by $1$ minus some constant. Indeed, this is the case as
    $$\rho(j)\le \rho(S)-\rho(i)\le (2-\eta)-(1-\delta)=1-(\eta-\delta),$$ 
    which is bounded away from $1$ by a constant depending on our choice of $\delta$.
\end{proof}

We strengthen this result further in \cref{subsection:value:corollaries}, providing an approximation of utility subject to $b$-feasibility for any $0<b\le 2-\Theta(1)$.

\section{Welfare Approximation for sXOS }
\label{section:sublinear}

The welfare approximation algorithm for XOS in Section~\ref{section:XOS} achieves both a constant-factor approximation and an upper bound on the welfare-utility gap, simultaneously. It is also interesting to ask what a direct approximation algorithm can achieve. In this section we demonstrate that a direct algorithm can get improved guarantees, by focusing on symmetric XOS instances and showing a tight $(2+o(1))$-approximation algorithm for welfare.

\begin{theorem}\label{4}
    There exists an algorithm that obtains a $(2+o(1))$-approximation of the optimal welfare, for any $n$-agent setting $(f, c)$ with an sXOS value function and uniform costs. The algorithm uses $O(\log n)$-many value oracle queries.
\end{theorem}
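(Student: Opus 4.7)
The plan is to design a direct algorithm that uses only $O(\log n)$ value queries and exploits the \emph{concavity property} of sXOS functions: $f(k)/k$ is non-increasing in $k$. A direct consequence is that the welfare-per-agent $w(k)/k = f(k)/k - c$ is also non-increasing, which yields the sublinearity bound $w(k_1) \ge (k_1/k_2)\, w(k_2)$ for any $k_1 \le k_2$. In particular, halving the team size loses at most a factor of $2$ in welfare, which already points at the target $(2+o(1))$ ratio.

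I would have the algorithm query $f$ at the dyadic grid $\{2^j : j = 0, 1, \ldots, \lceil \log n \rceil\}$ together with each immediate predecessor $2^j - 1$, so that the marginal $m(2^j) = f(2^j) - f(2^j - 1)$ and hence the feasibility of $[2^j]$ can be evaluated in $O(\log n)$ queries. Among the feasible probed sizes, the algorithm returns the one of maximum welfare. For the approximation bound, let $k^*$ denote the optimal feasible size, and let $2^{j_0}$ be the largest power of two with $2^{j_0}\le k^*$; then $2^{j_0} \in [k^*/2,\, k^*]$, and the sublinearity above gives $w(2^{j_0}) \ge (2^{j_0}/k^*)\, w(k^*) \ge w(k^*)/2$. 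If $[2^{j_0}]$ is feasible this already yields a $2$-approximation, with the $o(1)$ slack absorbing the discretization of $k^*/2$ to the nearest probed size.

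The main obstacle is that, unlike for submodular value functions, feasibility is not downward-closed for sXOS, so $[2^{j_0}]$ may fail to be feasible even though $[k^*]$ is. To handle this, I would combine the \emph{marginals property} $m(k^*) \le f(k^*)/k^*$ with the feasibility condition $m(k^*) \ge c k^*$ to derive $f(k^*) \ge c(k^*)^2$, and then use the non-increasing $f(k)/k$ to conclude $f(1) \ge c k^*$. These bounds tightly constrain how sparse feasibility can be on the dyadic grid. I anticipate that a careful case analysis—supplemented by a few auxiliary probes guided by a binary search on the monotone function $r(k) = f(k)/k$ (for instance, to locate the largest $K$ with $f(K) \ge c K^2$, which upper-bounds any feasible size)—will show that some probed size is simultaneously feasible and achieves welfare within a $(2+o(1))$-factor of $w(k^*)$. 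Closing this infeasible-dyadic case cleanly while preserving the $O(\log n)$ query budget is the main technical difficulty.
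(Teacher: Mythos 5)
Your proposal correctly identifies the central obstacle --- feasibility is not downward-closed for sXOS, so a dyadic grid can miss every feasible size near $k^*$ --- but it does not close that gap, and the sketch of how to close it (auxiliary probes guided by binary search on $r(k)=f(k)/k$, or locating the largest $K$ with $f(K)\ge cK^2$) would not succeed on its own. Eligibility ($f(k)\ge ck^2$) is necessary but not sufficient for feasibility, so knowing the eligibility threshold $n'$ does not tell you which sizes in $[1,n']$ are actually feasible; and $r(k)=f(k)/k$ is monotone but its value does not certify feasibility of any particular $k$. In the hard instances (see the matching lower bound in the paper, Proposition \ref{2_is_tight}), the set of feasible sizes can be a sparse, adversarially placed subset of $[1,n']$, so no fixed probing pattern such as a dyadic grid will hit a feasible size near $k^*$.

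The paper's proof resolves this with an idea your plan is missing: it introduces the auxiliary function $b(k) := f(n') - \tfrac{c}{2}\bigl(n'(n'+1) - k(k+1)\bigr)$ and binary-searches for a \emph{crossover index} $i$ with $f(i-1) < b(i-1)$ and $f(i)\ge b(i)$. The crucial point is that such a crossover \emph{certifies feasibility}: since $b(i)-b(i-1) = c\,i$, the crossover condition yields $m(i)=f(i)-f(i-1) > c\,i$, i.e., $\rho(i)\le 1$. Simultaneously $f(i)\ge b(i)$ gives the welfare bound, because $b(k^*) \le f(k^*)$ for the largest feasible $k^*$ and $b(k)\ge f(n') - \bigl(\tfrac12 + \tfrac{1}{2n'}\bigr)f(n')$ for every $k$. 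This is what makes the $O(\log n)$ binary search land on a size that is both feasible and near-optimal. Your concavity-based observation that $w(k)/k$ is non-increasing is correct and would give a factor-$2$ loss \emph{if} the halved size were feasible, but without a certificate mechanism analogous to the crossover with $b$, the argument stalls exactly where you flagged the difficulty.
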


Recall that for symmetric functions, sets are described solely by the \emph{number} of agents in the set. 
We refer to a number as \emph{feasible} if it represents a group size that is feasible in the symmetric instance. 

\begin{proof}[Proof of Theorem~\ref{4}] 
    For any sXOS value function and for any feasible number $k$ we have
    \begin{align*}
        1 &\ge \rho(k) = \frac{c k}{m(k)} \ge ck \cdot \frac{k}{f(k)} = c \cdot \frac{k^2}{f(k)},
    \end{align*}
    where we use the marginals property $m(k)\le \frac{1}{k}f(k)$ (see \cref{sXOS:marginals}). This implies 
    \begin{equation}
        \label{eq:eligible}
        f(k) \ge ck^2.
    \end{equation}
    We say that $k$ is \emph{eligible} if it satisfies Equation~\eqref{eq:eligible}. Note that every feasible number is eligible, but there might be infeasible numbers that are eligible.
    By applying the marginals property, we see that
    \[
        f(k) = f(k-1) + m(k) \le f(k-1) + \frac{f(k)}{k},
    \]
    and hence 
    \begin{equation}
        \label{eq:eligible2}
        f(k-1) \ge f(k) \cdot \frac{k-1}{k} \ge c k(k-1) \ge c(k-1)^2\enspace,
    \end{equation}
    i.e., if $k$ is eligible so is $k-1$.    
    
    Let $n'$ denote the largest eligible number. Due to Equation~\eqref{eq:eligible2}, the set of the eligible numbers is $[n'] = \{1,\ldots,n'\}$. Note that $n'$ can be found using binary search. The set $[n']$ contains all feasible numbers. We can test if any feasible number exists by checking eligibility of $f(1) \ge c$, since $1$ is eligible if and only if it is feasible.  For the rest of the proof, we assume that $n' \ge 2$ and that there exists a feasible number.
    
    The value $f(n') \ge f(S^*)$ is an upper bound on the welfare of the optimal feasible set $S^*$. We now show that we can find a feasible set whose welfare approximates $f(n')$. Let $k^* \in [n']$ be the largest feasible number. Since all numbers $k = n', \ldots,k^*+1$ are infeasible, they satisfy $m(k) < ck$. Hence, for the largest feasible number $k^*$ we see that
    \begin{align*}
    f(k^*) &= f(n') - (f(n') - f(k^*))
    \\&   = f(n') - ( m(n') + m(n'-1) + .. +m(k^*+1) )
    \\&   > f(n') - c(n' + (n'-1) + .. + (k^*+1))
    \\&   = f(n') - \frac{c}{2}\cdot (n'(n'+1)-k^*(k^*+1)) \quad := b(k^*).
    \end{align*} 
    More generally, for any feasible number $k$ with $f(k) \ge b(k)$ we observe that
    \[
        \frac{c}{2}\cdot (n'(n'+1)-k(k+1)) \; \le \; \frac{c}{2}\cdot(n'(n'+1)) \; \le \; \left(\frac 12 + \frac{1}{2n'}\right) f(n'), 
    \]
    since $f(n') \ge c\cdot(n')^2$. For the same reason, 
    \[
        w(k) \; = \; f(k) - ck \; \ge \; b(k) - ck \; \ge \; \left( \frac 12 - \frac{1}{2n'}\right) f(n') - cn' \; \ge \; \left( \frac 12 - \frac{1}{2n'} - \frac{1}{n'} \right) f(n'),
    \]
    i.e., every feasible number $k$ with $f(k) \ge b(k)$ achieves an approximation ratio of $2+o(1)$.

    Finally, let us discuss how to find a feasible number $k \in [n']$ with $f(k) \ge b(k)$ using logarithmic-many value queries. Suppose we have a number $i \in [n']$ with $f(i-1) < b(i-1)$ and $f(i) \ge b(i)$. Then the marginal satisfies
    \[
        m(i) = f(i) - f(i-1) \ge b(i) - b(i-1) = c\cdot i,
    \]
    so $i$ is a feasible number. Moreover, since $f(i) \ge b(i)$, the number $i$ achieves a ratio of $2+o(1)$.

    Initially $f(n') \ge b(n') = f(n')$ and $f(0) = 0 < b(0)$ since $n' \ge 2$. For any $i < j$ such that $f(i) < b(i)$ and $f(j) \ge b(j)$, there must at least one number $k$ in the interval $[i+1,j]$ where $f$ grows above $b$, i.e., $f(k-1) < b(k-1)$ and $f(k) \ge b(k)$. This number $k$ is feasible and achieves a ratio of $2+o(1)$. Thus, for any $i < j$ we can apply binary search to find such a number: For $k = \lfloor (i+j)/2 \rfloor$ we check if $f(k) \ge b(k)$. If so, then we proceed in the interval $[i,k]$, otherwise proceed in the interval $[k,j]$. 
\end{proof}

It remains to show that the approximation factor of \cref{4} is essentially tight.

\begin{proposition}\label{2_is_tight}
    There exists no sublinear-time algorithm (in $n$) with access to only a value oracle that computes an approximation of the optimal welfare with a ratio better than $2$ for all multi-agent settings with an sXOS value function and uniform costs.
\end{proposition}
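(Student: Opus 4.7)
The plan is to prove this lower bound via an adversary argument, showing that any deterministic algorithm with $T = o(n)$ value queries can be forced to output a team whose welfare is at most $\mathrm{OPT}/2 + o(\mathrm{OPT})$ on some sXOS instance with uniform cost. For fixed cost $c = 1/n$, I would introduce the family of staircase functions $\{f_K\}_{K \in [n/2, n]}$ with $f_K(k) = k$ for $k \le K$ and $f_K(k) = K$ for $k > K$. Each $f_K$ is concave---hence sXOS and even submodular---its feasibility region coincides with $\{k \le K\}$ because $m(k)=0$ for $k>K$, and $\mathrm{OPT}(w_{f_K}) = K(1-1/n) \approx K$. Across the family, the optimal welfare varies by a factor of $2$, setting up the intended gap.

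Given the algorithm's queries $k_1,\ldots,k_T$, the adversary consistently answers $f(k_i)=k_i$. These answers are compatible with every $K \ge \max_i k_i$, keeping the kink position hidden. To ensure feasibility under every consistent $K$, the algorithm must output $k^{\mathrm{out}} \le \max_i k_i$; otherwise the adversary can set $K$ just below $k^{\mathrm{out}}$ to render the output infeasible (so its welfare is $0$). Once the algorithm commits, the adversary chooses $K=n$, giving $\mathrm{OPT}(w_{f_n}) \approx n$ while $w(k^{\mathrm{out}}; f_n) \approx k^{\mathrm{out}}$, yielding an approximation ratio of at least $n/\max_i k_i$.

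To convert this per-output observation into a genuine $\Omega(n)$-query lower bound for the $<2$ threshold, I would strengthen the construction by randomizing over multiple independent hidden kinks in $[n/2, n]$ and invoke Yao's minimax principle against a uniformly chosen instance. The composed instances remain sXOS (expressible as a maximum over piecewise-linear additive clauses, so the XOS representation is explicit), and by a counting/probabilistic argument any $o(n)$-query algorithm misses a kink location that is material to the welfare computation, forcing the expected approximation ratio above $2 - o(1)$ in expectation and thus on some instance in the support.

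The main obstacle is verifying the multi-kink construction: the sXOS constraints---$f(k)/k$ non-increasing together with the concavity property $a \cdot f(k) \le f(ak)$---interact non-trivially when several staircase-style elements are composed, and one must confirm that the combined function still satisfies these properties while genuinely diffusing the welfare-determining information across $\Omega(n)$ scales. A secondary delicate step is the "safe output" analysis against adaptive algorithms: one must track the algorithm's information state carefully to ensure the adversary can always exhibit an instance where any output consistent with the collected evidence incurs the full factor-$2$ loss, and in particular that randomized strategies on the algorithm's side cannot bypass the bound by guessing.
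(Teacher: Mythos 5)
Your base construction does not work, and the proposed fix is left at the level of a plan. With the single-staircase family $f_K(k)=\min(k,K)$, a value query at $k$ returns $\min(k,K)$ and hence reveals immediately whether $K\ge k$ or pins down $K$ exactly; a single query at $n$ already determines $K$ (since $f_K(n)=K$), and in general binary search identifies $K$ in $O(\log n)$ queries, after which the algorithm outputs $[K]$ and achieves the optimum. Your adversary strategy of "answer $f(k_i)=k_i$" only makes this worse: once the algorithm queries $n$, the adversary is forced to commit $K=n$, so the algorithm wins with one query. The step "the algorithm must output $k^{\mathrm{out}}\le\max_i k_i$" is therefore moot, and the claimed ratio $n/\max_i k_i$ says nothing (the algorithm can simply put $\max_i k_i = n$). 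You acknowledge that a "multi-kink" construction is needed, but you never construct it, and it is exactly here that the content of the proof lies: one must exhibit a family of sXOS functions in which $\Omega(n)$ positions each carry a bit of information that is material to the factor-$2$ gap yet individually undetectable, which is nontrivial given that sXOS forces $f(k)/k$ to be non-increasing and hence tightly couples values across scales.

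For contrast, the paper's proof sidesteps this by choosing a base function with the opposite character: $f(k)=k^2+(1-\varepsilon)k+(n^2+n)$ with uniform cost $c=2$, for which \emph{no} set larger than a singleton is feasible (each share is $\frac{2k}{2k-\varepsilon}>1$), and $w(1)\approx n^2$. The adversary then raises $f(k^\ast)$ by $\varepsilon$ at a single unqueried $k^\ast\in[(1-\delta)n,n]$, which makes $[k^\ast]$ feasible with $w(k^\ast)\approx ((1-\delta)^2+1)n^2 \to 2w(1)$, yet is invisible to any query outside $k^\ast$ because the perturbation is local to a single point. This locality is what the staircase family lacks---its "hidden" parameter $K$ is globally visible from $f(n)$---and is the property your multi-kink construction would have to recreate and verify against the sXOS constraints. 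Until that construction and its sXOS verification are carried out, the proposal does not establish the lower bound.
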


\begin{proof}
    Consider the function $f(k)=k^2+(1-\varepsilon)k+(n^2+n)$.
    The marginals of $f$ for $k>1$ are $m(k)=2k-\varepsilon$, and the shares are $\rho(k)=\frac{c(k)}{2k-\varepsilon}$.
    We set uniform costs $c=2$. Then for every $k>1$ we observe $\rho(k)=\frac{2k}{2k-\varepsilon}=\left(1+\varepsilon\cdot \frac{1}{2k-\varepsilon}\right)>1$, so $[k]$ is infeasible. 
    The only feasible set is $[1]$, and $w(1)=1-\varepsilon+n^2+n$.

    Note that for any constant $\delta > 0$, no sublinear-time algorithm can query all the values in the interval $[(1-\delta)n, n]$.
    Thus, for any sufficiently large $n$, there must be a number $k^*$ in this segment for which the algorithm did not apply a value query. We modify $f(k^*)$ to be $f(k^*)+\varepsilon$, which makes the set $[k^*]$ feasible. Then this set will be a set with optimal welfare.

    Then the ratio between $w(k^*)$ and $w(1)$ becomes 
    \[
    \frac{(k^*)^2+(1-\varepsilon)k^*+(n^2+1)+n}{1-\varepsilon+n^2+n}\ge \frac{((1-\delta)^2+1)n^2+(1-\delta)n}{n^2+n}=\frac{((1-\delta)^2+1)+\frac{1}{n}}{1+\frac{1}{n}}
    \]
    which approaches 2 when (the constant) $\delta\to 0$ and $n\to \infty$.

    Finally, let us argue that $f$ is sXOS, which will conclude the proof.
    Recall that $f$ is sXOS if it satisfies the three sXOS properties (see \cref{sXOS:main:prop}). The only non-trivial one is $\frac{1}{j}f(j)\le \frac{1}{j-1}f(j-1)$ for every $j>1$, which is equivalent to $j\cdot m(j+1)\le f(j)$. This will hold throughout if it holds for $j+1=k^*$, since we increased $f(k^*)$. Note that $m(j+1)=m(k^*)=2k^*$, and thus $j\cdot m(j+1)=2j(j+1)=2j^2+j$, while $f(j)=j^2+(1-\varepsilon) j+(n^2+1) \ge j^2 + j + n^2$. Thus the property holds if $n^2\ge j^2$, which is true for every $j$.

    While the argument above works directly for deterministic algorithms, it can be extended to randomized algorithms by choosing $k^*$ uniformly at random in the range $[(1-\delta)n, n]$. When querying a value $k \neq k^*$, an algorithm does not learn anything about the location of $k^*$ (except, of course, that $k \neq k^*$). By Yao's principle, we can assume $k^*$ is chosen uniformly at random, and then any deterministic algorithm can find $k^*$ only with a probability of $o(1)$. Thus, the overall ratio achieved is at most $2-o(1)$.   
\end{proof}

\section{Beyond Welfare: Value Approximation}
\label{section:value}

In this section we extend our welfare results to the context of \emph{value} maximization subject to feasibility.
Unlike the welfare-utility gap, the value-welfare gap can be infinite even in symmetric instances:

\begin{observation}
\label{Value-Walfare-Gap}
    The value-welfare gap can be unbounded even for a single agent.
\end{observation}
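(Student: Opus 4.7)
The plan is to exhibit a parametric family of one-agent instances in which the only feasible option attains an optimal value arbitrarily larger than its welfare, so that the value-welfare gap diverges. With a single agent the setting is so constrained that all relevant quantities are determined by just the value $v := f(\{1\})$ and the cost $c_1$, so tuning a single parameter suffices.

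Concretely, I would fix a parameter $\varepsilon \in (0,1)$ and take the instance $(f,c)$ on $n=1$ with $f(\{1\}) = 1$ and $c_1 = 1-\varepsilon$. The single-agent share is $\rho(\{1\}) = c_1/f(1:\{1\}) = 1-\varepsilon$, which is at most $1$, so the set $\{1\}$ is feasible. The empty set contributes $0$ to every objective. Hence $OPT(f) = f(\{1\}) = 1$ while $OPT(w) = w(\{1\}) = 1 - (1-\varepsilon) = \varepsilon$, giving a value-welfare gap of $OPT(f)/OPT(w) = 1/\varepsilon$.

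Letting $\varepsilon \to 0^+$ drives the gap to infinity, which establishes the claim. The only point requiring a brief check is that the instance respects the model's standing assumption that each single-agent value is positive and upper bounds its cost; this holds strictly for every $\varepsilon > 0$ since $f(\{1\}) = 1 > 1-\varepsilon = c_1 > 0$. There is no real obstacle to this argument: the whole content of the observation is that a single agent whose cost nearly equals her value is feasible and generates full value but almost no welfare, so the construction is essentially forced.
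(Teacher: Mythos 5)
Your proof is correct and follows essentially the same idea as the paper's, which sets the single agent's cost exactly equal to its value (so welfare is $0$ and the ratio is already infinite in a single instance, consistent with the model's assumption that value upper-bounds cost weakly); you instead keep a margin of $\varepsilon$ and take $\varepsilon \to 0^+$, which is a cosmetic variation rather than a different argument.
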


\begin{proof}
    Consider a single agent whose value equals its cost, and both equal $1$. In this case, $n=1$ and $\rho([n])=1$, i.e., $[n]$ is feasible. Thus, the optimal value is $f([n])=1$, while the optimal welfare (and utility) is $0$. 
\end{proof}

In light of this observation, we show that \cref{alg3} finds a set whose \emph{value} approximates the optimal value. The algorithm works for any $b$-feasibility constraint. 

\begin{proposition}
\label{thm:XOS_value}
There exists an algorithm that, for any finite $b>0$, outputs a $b$-feasible set whose value is an $O(1)$-approximation of the optimal value subject to $b$-feasibility, for any $n$-agent setting with an XOS value function. The algorithm runs in time polynomial in $n$ using value and $a$-demand oracles, and guarantees a $(\frac{3}{A}+o(1))$-approximation where $A=(a/4)^2$.
\end{proposition}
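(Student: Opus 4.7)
The plan is to reduce the $b$-feasibility problem to the standard $1$-feasibility problem via the scaling trick already noted in the introduction: replacing each cost $c_i$ by $c'_i := c_i/b$ rewrites $\rho(S) \le b$ as $\rho'(S) \le 1$ without altering $f$, so the optimal value over $b$-feasible sets of $(f,c)$ equals the optimal value over feasible sets of $(f, c')$. I would then run Algorithm~\ref{alg3} on the scaled instance $(f, c')$ with parameters $m = 1$, $M = 3$, and a small $\gamma > 0$, and output the set $T$ in the returned collection $\mathcal{C}$ that maximizes $f(T)$ subject to $\rho'(T) \le 1$.

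The analysis re-uses the template of Lemma~\ref{fundamental}, but tracks $f(S^*)$ directly in place of $w(S^*)$, which eliminates the cost-subtraction step and collapses the argument to just two cases. Let $S^*$ be the optimal $b$-feasible set and let $y$ be its doubling estimate, so $f(S^*)\cdot 2^{-\gamma} < y \le f(S^*)$. If $S^* \subseteq D_y$, the hypotheses of Lemma~\ref{lemma:alg2} hold on $(f_{|D_y}, c'_{|D_y}, y, 1)$ with witness domain extension $h = f_{|S^*}$: condition~(3) holds by the very definition of $D_y$, which forces $M(f_{|D_y}, y) \ge M = 3$, and condition~(2) holds because $\rho^{(h)}_{S^*}(S^*) = \rho'_{S^*}(S^*) \le 1 \le (2-1/8)^2$. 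Lemma~\ref{lemma:alg2} then returns $U$ with $\rho'(U) \le 1$ and $f(U) \ge (1-1/M)\cdot\tfrac{A}{2m}\cdot y = \tfrac{A}{3}\cdot y$. Otherwise $S^*$ contains a heavy agent $i$, so $f(\{i\}) > \tfrac{A}{m}\cdot y/M = \tfrac{A}{3}\cdot y$, and the singleton $\{i\}$ is already in $\mathcal{C}$ from the initialization step. Either way we obtain a set whose value is at least $\tfrac{A}{3}\cdot 2^{-\gamma}\cdot f(S^*)$, giving the desired $(3/A + o(1))$-approximation as $\gamma \to 0$.

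The step I expect to be the main obstacle is verifying that the singleton identified in the heavy-agent case is actually $b$-feasible in the original instance, since scaling can violate the model's value-dominates-cost assumption when $b < 1$ and there is no a priori reason for an arbitrary singleton to be feasible under the scaled costs. To resolve this, I would exploit the fact that every XOS function is subadditive, hence $f(i:T) \le f(\{i\})$ for every $T \ni i$. Applied to $T = S^*$ and the heavy agent $i \in S^*$, together with $b$-feasibility of $S^*$ in the original instance, this yields
\begin{equation*}
\rho(\{i\}) \;=\; \frac{c_i}{f(\{i\})} \;\le\; \frac{c_i}{f(i:S^*)} \;=\; \rho_{S^*}(i) \;\le\; \rho(S^*) \;\le\; b,
\end{equation*}
so $\{i\}$ is $b$-feasible in the original instance for every $b > 0$. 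Since Lemma~\ref{lemma:alg2} already guarantees $b$-feasibility of $U$ in the other case, the argmax over the $b$-feasible members of $\mathcal{C}$ is well-defined and delivers the claimed approximation in time polynomial in $n$.
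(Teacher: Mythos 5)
Your proof is correct and follows essentially the same route as the paper: cost-scale by $1/b$ to reduce to the $b=1$ case (the paper's \cref{obs:value_invariant}), run \cref{alg3} with $m=1$, $M=3$, and split on whether $S^*\subseteq D_y$ or contains a heavy agent; both branches give a value of at least $\tfrac{A}{3}\cdot 2^{-\gamma}\cdot f(S^*)$. One genuine addition in your write-up: you explicitly verify that the heavy-agent singleton remains $b$-feasible in the original instance via the chain $\rho(\{i\}) = c_i/f(\{i\}) \le c_i/f(i:S^*) = \rho_{S^*}(i) \le \rho(S^*) \le b$ (using subadditivity of $f$). The paper relies on the blanket model assumption that singletons satisfy $c_i \le f(\{i\})$, which guarantees only $1$-feasibility, and its \cref{lem:XOS_value_helper} proof leaves this detail implicit once the costs are rescaled for $b<1$; your observation closes that small gap cleanly without changing the overall structure or the constant.
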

\cref{thm:XOS_value} yields a $(48+o(1))$-approximation given an exact demand oracle---a factor $4$ improvement over our utility and welfare approximation results. 
For submodular value functions, $a=(1-\frac{1}{e})$ so we obtain an approximation of $121+o(1)$ using only value oracle access. 
A similar result extends to any $B$-transfer constraint (see Proposition~\ref{pro:XOS_transfer}).

In \cref{subsection:value:corollaries} we show 
corollaries of \cref{thm:XOS_value}
for approximate welfare and utility under $b$-feasibility for any $0<b\le 2-\Theta(1)$.
In \cref{value:symmatric} we discuss value approximation in the symmetric setting and demonstrate how our welfare results translate to value.

\subsection{Proof of Proposition~\ref{thm:XOS_value}} 
\label{sub:value-approx}

In this section we show that \cref{alg3} satisfies the requirements of \cref{thm:XOS_value}, by finding a feasible set whose value is an $O(1)$-approximation of the optimal value subject to feasibility.
Our main tool is \cref{lem:alg2:value}, which is a special case of \cref{lemma:alg2}.
In order to prove \cref{thm:XOS_value}, we first prove it for the case of $b=1$, then extend the result to any $b>0$ using \cref{obs:value_invariant}.

\paragraph{The case of $b=1$.} 

In this case we show that \cref{alg3} satisfies the requirements of \cref{thm:XOS_value}, by finding a feasible set whose value is an $O(1)$-approximation of the optimal value subject to feasibility.

\begin{lemma}
\label{lem:XOS_value_helper}
\cref{thm:XOS_value} holds for $b=1$.
\end{lemma}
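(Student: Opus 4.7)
The plan is to re-analyze the output of \cref{alg3} against the value benchmark rather than welfare. Let $S^*$ be a feasible set maximizing $f$. Because every singleton is feasible (by the standing assumption $f(\{i\})\ge c_i$), we have $\max_i w(\{i\})\le f(S^*)\le f([n])$, so \cref{alg3} reaches some iteration whose estimate $y$ satisfies $f(S^*)\cdot 2^{-\gamma}<y\le f(S^*)$. Every set \cref{alg3} places in $\mathcal C$ is already feasible---singletons trivially, and outputs of \cref{alg2} because $\rho(U)\le 1/m\le 1$---so returning $\arg\max_{T\in\mathcal C}f(T)$ is a valid strategy. It therefore suffices to show that in the ``good'' iteration, $\mathcal C$ picks up some set whose value is a constant-factor approximation of $f(S^*)$.

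I would mimic the heavy/light case split used in the proof of \cref{fundamental}, but bypass the final subadditivity step since the value objective is in hand directly. In the \emph{light} case $S^*\subseteq D_y$, so I invoke \cref{lemma:alg2} on $(f_{|D_y},c_{|D_y},y,m)$ and check its three validity conditions one by one: (1) is immediate because $f_{|D_y}(S^*)=f(S^*)\ge y$; for (2) I take the trivial extension $h=f_{|S^*}$ on domain $T=S^*$, which gives $\rho^{(h)}_T(S^*)=\rho(S^*)\le 1\le\bigl(2-\tfrac{1}{8m}\bigr)^2$; and (3) follows from the definition of $D_y$, which forces $x(f_{|D_y})\le \tfrac{A}{m}\cdot \tfrac{y}{M}$ and hence $M(f_{|D_y},y)\ge M\ge 1$. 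The lemma then returns a feasible $U$ with $f(U)\ge \bigl(1-\tfrac{1}{M}\bigr)\tfrac{A}{2m}\cdot 2^{-\gamma}f(S^*)$. In the \emph{heavy} case, some $i^*\in S^*$ satisfies $f(\{i^*\})>\tfrac{A}{m}\cdot\tfrac{y}{M}\ge \tfrac{A}{mM}\cdot 2^{-\gamma}f(S^*)$; the singleton $\{i^*\}$ is feasible and already sits in $\mathcal C$ by the initialization step of \cref{alg3}.

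To match the case bounds, I would set $M=3$, which gives $\bigl(1-\tfrac{1}{M}\bigr)\tfrac{1}{2}=\tfrac{1}{M}=\tfrac{1}{3}$, so in either case the winning set has value at least $\tfrac{A}{3m}\cdot 2^{-\gamma}f(S^*)$. Letting $m\to 1$ from above and $\gamma\to 0$ (both admissible by the algorithm's input constraints) yields the claimed $\tfrac{3}{A}+o(1)$ approximation factor. The main delicate point is validity condition~(2) of \cref{lemma:alg2}: since XOS marginals are not monotone in set inclusion, the $\rho$-values of $S^*$ computed inside the restricted function $f_{|D_y}$ need not match those computed inside $f$, and a naive choice of extension could blow them up above the threshold $(2-\tfrac{1}{8m})^2$. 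The trivial choice $h=f_{|S^*}$, $T=S^*$ sidesteps this entirely, collapsing $\rho^{(h)}_T(S^*)$ to the ordinary $\rho(S^*)$, which is bounded by feasibility of $S^*$.
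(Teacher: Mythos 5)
Your proposal is correct and follows essentially the same route as the paper: a heavy/light dichotomy on $S^*$ relative to $D_y$, with \cref{lemma:alg2} handling the light case (the paper packages the condition-checking you do by hand into \cref{lem:alg2:value}) and a singleton handling the heavy case, then $M=3$ to balance the two bounds. Your verification of validity condition (2) via the trivial extension $h=f_{|S^*}$, $T=S^*$ is exactly what the paper's overview of \cref{fundamental} does, and your observation that the subadditivity detour in \cref{fundamental} is unnecessary for the value objective is also the simplification the paper exploits. One minor note: the algorithm accepts $m\ge 1$, so you can set $m=1$ outright as the paper does rather than taking a limit from above; this changes nothing in the final $\frac{3}{A}+o(1)$ bound.
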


To prove \cref{lem:XOS_value_helper} we use \cref{alg2}. The following corollary states useful guarantees of \cref{alg2};
it is a special case of \cref{lemma:alg2} in which the domain of $f$ is assumed to be restricted to $D_y$, and the set $S$ is assumed to be feasible. 

\begin{corollary} 
\label{lem:alg2:value}
    For any $n$-agent setting $(f,c)$ with an XOS value function, a target value $y\in[0, f([n])]$, and a domain 
    $D_y=\{i\in [n]: f(i)\le \frac{A}{m}\cdot \frac{y}{M}\}$ where $M>1$ and $m\ge 1$. 
    There exists a polynomial-time algorithm with value and $a$-demand oracle access to $f$ that guarantees the following. Given an input $(f_{|D_y}, c_{|D_y}, y, m)$, if there exists a feasible set $S\subseteq D_y$ such that $y\le f(S)$, then the algorithm returns a set $U \subseteq D_y$ that satisfies:
    \begin{enumerate}[nosep,noitemsep]
        \item $\frac{A}{m}\cdot y\ge f(U)\ge \left(1-\frac{1}{M}
        \right)\cdot \frac{A}{2m}\cdot y$;
        \item $\rho(U)\le \frac{1}{m}$.
    \end{enumerate}
\end{corollary}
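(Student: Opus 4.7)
The plan is to derive the corollary as a direct specialization of Lemma~\ref{lemma:alg2}: feed \cref{alg2} the restricted instance $(f_{|D_y}, c_{|D_y}, y, m)$, verify each of the three validity hypotheses of the lemma under the corollary's assumptions, and then read off the two output guarantees (simplifying the bound via $M(f_{|D_y}, y) \ge M$). The oracle question is a non-issue by the footnote remark in \cref{section:XOS}: a value/$a$-demand oracle for $f$ yields one for any domain restriction $f_{|D}$ by setting $p_i = \infty$ for $i \notin D$, so the restricted instance can be queried in polynomial time.

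The first validity condition of \cref{lemma:alg2} asks for a set whose value under $f_{|D_y}$ is at least $y$. By hypothesis there is a feasible $S \subseteq D_y$ with $f(S) \ge y$, and $f_{|D_y}(S) = f(S)$, so this is immediate. The third condition demands $M(f_{|D_y}, y) \ge 1$. By definition of $D_y$, every $i \in D_y$ satisfies $f(i) \le (A/m)(y/M)$, so $x(f_{|D_y}) \le (A/m)(y/M)$ and hence
\begin{equation*}
M(f_{|D_y}, y) \;=\; \frac{A}{m} \cdot \frac{y}{x(f_{|D_y})} \;\ge\; M \;>\; 1.
\end{equation*}

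The only condition requiring a brief thought is the second: the existence of an XOS domain extension $h$ of $(f_{|D_y})_{|S} = f_{|S}$ with $\rho_T^{(h)}(S) \le (2 - 1/(8m))^2$. Take $h := f$ itself on $T = [n]$, which is a valid XOS extension of $f_{|S}$. Since $S$ is feasible for the original instance, $\rho_{[n]}^{(f)}(S) = \rho(S) \le 1$. For every $m \ge 1$ we have $2 - 1/(8m) \ge 15/8 > 1$, so $(2 - 1/(8m))^2 > 1 \ge \rho_{[n]}^{(f)}(S)$, and the condition holds.

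With all three hypotheses verified, \cref{lemma:alg2} returns a set $U \subseteq D_y$ such that $\rho(U) \le 1/m$ (which transfers verbatim from the lemma, since shares of subsets of $D_y$ coincide under $f$ and $f_{|D_y}$) and
\begin{equation*}
\frac{A}{m}\cdot y \;\ge\; f(U) \;\ge\; \left(1 - \frac{1}{M(f_{|D_y}, y)}\right) \cdot \frac{A}{2m} \cdot y \;\ge\; \left(1 - \frac{1}{M}\right) \cdot \frac{A}{2m} \cdot y,
\end{equation*}
where the last inequality uses the lower bound $M(f_{|D_y}, y) \ge M$ established above. This yields exactly the two claimed guarantees of the corollary. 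There is no real obstacle here; the only thing to be careful about is matching the bookkeeping — that the domain extension used in condition~(2) is the ambient $f$, and that the cleaner constant $M$ replaces $M(f_{|D_y}, y)$ only because of the monotonicity $1 - 1/M(f_{|D_y}, y) \ge 1 - 1/M$.
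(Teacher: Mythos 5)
Your overall route is the paper's own: the corollary is obtained by specializing \cref{lemma:alg2} to the restricted instance $(f_{|D_y}, c_{|D_y}, y, m)$, checking the three validity conditions, and replacing $M(f_{|D_y},y)$ by $M$ via $x(f_{|D_y})\le \frac{A}{m}\cdot\frac{y}{M}$. Conditions (1) and (3), the oracle remark, and the final monotonicity step are all handled correctly.

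However, your verification of condition (2) contains a genuine error. You choose $h:=f$ on $T=[n]$ and assert $\rho_{[n]}^{(f)}(S)=\rho(S)\le 1$. This misreads the notation: by the paper's definition, $\rho_T^{(h)}$ uses the shares needed to incentivize the set $T$, i.e., $\rho_{[n]}^{(f)}(S)=\sum_{i\in S} c_i/f(i:[n])$, whereas feasibility of $S$ only bounds $\rho(S)=\sum_{i\in S} c_i/f(i:S)$. These need not coincide, and the former can be unboundedly larger than the latter: e.g., for a submodular $f$ with $f(\{1\})=f(\{2\})=1$, $f(\{1,2\})=1+\varepsilon$ and $c_1=\tfrac12$, the set $S=\{1\}$ is feasible ($\rho(S)=\tfrac12$) while $\rho_{[n]}^{(f)}(S)=\tfrac{1}{2\varepsilon}$ exceeds $\left(2-\tfrac{1}{8m}\right)^2$ for small $\varepsilon$. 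So with your witness the hypothesis of \cref{lemma:alg2} is simply not established. The repair is immediate and is exactly what the paper does in the proof of \cref{fundamental}: take $h:=f_{|S}$ with $T=S$ (a function is trivially a domain extension of itself), so that $\rho_T^{(h)}(S)=\sum_{i\in S}c_i/f(i:S)=\rho(S)\le 1<\left(2-\tfrac{1}{8m}\right)^2$. With that substitution the rest of your argument goes through unchanged.
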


\cref{lem:alg2:value} essentially states that if there exists a feasible set $S$ with value at least $y$ containing only agents with ``small'' individual values (relative to $y$), then \cref{alg2} will output a feasible set $U$ whose value is a constant approximation of $y$. 

Recall that \cref{alg3} performs \emph{doubling} over possible values of $y$ with a multiplicative jump of $2^\gamma$ (where $\gamma$ is some small constant). 
For each value $y$, it runs \cref{alg2} and adds the resulting set to a collection that also includes all the singletons of individual agents. 
We claim that the collection generated by this process contains a feasible set whose value is a $48+o(1)$ approximation of optimal value: 

\begin{proof}[Proof of \cref{lem:XOS_value_helper}]
    We set $m=1$ and $M=3$. Let $S$ be the optimal value set, and consider $y$ to be the estimator of $f(S)$; that is, $f(S)\cdot 2^{-\gamma}<y\le f(S)$.
    If $S$ contains only agents with individual values smaller than $\frac{A}{3}y$, then by \cref{lem:alg2:value}, the collection generated by \cref{alg3} will include a set which is an $\frac{3}{A}$-approximation 
    of $y$, implying a $\frac{3}{A}\cdot 2^{\gamma}$ approximation of $f(S)$.
    On the other hand, if $S$ contains an agent $i$ for which $f(i)\ge \frac{A}{3}y$, then it alone provides the desired approximation.
\end{proof}

\paragraph{Remark.} 
In the proof of \cref{lem:XOS_value_helper} we either have a set $T$ whose value is a constant approximation of optimal value such that its share is upper bounded by $\frac{1}{m}$, or we must have a single agent whose value is a constant approximation of optimal value (and its share is not necessarily bounded). By plugging in any $m\ge 1+\Theta(1)$, we see that in any $(f, c)$ instance, either there exists a set whose utility is a constant approximation of the optimal value, or there exists a single agent whose value is very large while its share is very small. This shows that the instance shown in \cref{Value-Walfare-Gap} of a single agent whose value equals its cost is essentially the only instance with an unbounded value-utility ratio.

\paragraph{General $b$.} 

We observe that the problem of maximizing value is invariant to the value of $b$ in the $b$-feasible constraint. This along with \cref{lem:XOS_value_helper} immediately proves \cref{thm:XOS_value}.
\begin{observation}
    \label{obs:value_invariant}
    Let $(f, c)$ be an $n$-agent setting and let $b, b' > 0$ be two positive feasibility constraints.
    A set $S$ is $b$-feasible in $(f, c)$ if and only if $S$ is $b'$-feasible in $(f, \frac{b'}{b} c_i)$.
\end{observation}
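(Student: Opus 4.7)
The plan is to unfold the definition of $b$-feasibility and observe that scaling costs by a constant scales the total share $\rho(S)$ by the same constant, so the feasibility threshold can be absorbed into the costs. Recall that for a set $S$ in an instance $(f,c)$, the minimum incentivizing contract puts share $\rho_S(i)=c_i/f(i:S)$ on each $i\in S$, and we have $\rho(S)=\sum_{i\in S}c_i/f(i:S)$. Crucially, the marginal contribution $f(i:S)$ depends only on $f$ and $S$, not on the costs. Hence if we replace the costs $c$ by $c' = \tfrac{b'}{b}\,c$ (i.e.\ $c'_i = \tfrac{b'}{b}c_i$ for every $i$), the corresponding total share in the new instance becomes
\[
\rho^{(f,c')}(S) \;=\; \sum_{i\in S}\frac{c'_i}{f(i:S)} \;=\; \frac{b'}{b}\sum_{i\in S}\frac{c_i}{f(i:S)} \;=\; \frac{b'}{b}\,\rho^{(f,c)}(S).
\]

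Given this identity, the equivalence is immediate: $S$ is $b$-feasible in $(f,c)$ iff $\rho^{(f,c)}(S)\le b$ iff $\tfrac{b'}{b}\rho^{(f,c)}(S)\le b'$ iff $\rho^{(f,c')}(S)\le b'$ iff $S$ is $b'$-feasible in $(f,c')$. I would then close by noting that the objective $f(S)$ is independent of the costs, so the optimal value under $b$-feasibility in $(f,c)$ equals the optimal value under $b'$-feasibility in $(f,\tfrac{b'}{b}c)$, which is exactly what is needed to reduce the general $b$ case of \cref{thm:XOS_value} to the $b=1$ case established in \cref{lem:XOS_value_helper}. There is no real obstacle here---the only thing to be careful about is that $f(i:S)$ is a function of $f$ alone (and in particular is unchanged by rescaling costs), so the rescaling factor pulls cleanly out of the sum.
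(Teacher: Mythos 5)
Your proof is correct and is essentially the same one-line computation the paper gives: the marginal $f(i:S)$ does not depend on the costs, so scaling $c$ by $b'/b$ scales $\rho(S)$ by $b'/b$, and the chain of equivalences follows by dividing or multiplying both sides of the inequality by $b'/b$. The extra remark about the objective $f(S)$ being cost-independent is accurate context for Proposition~\ref{thm:XOS_value} but is not part of the observation itself.
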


\begin{proof}
\begin{equation*}
    \sum_{i\in S} \frac{c_i}{f(i:S)}\le b \iff \sum_{i\in S} \frac{\frac{b'}{b} c_i}{f(i:S)}\le b'.
\end{equation*}

\end{proof}

\begin{proof}[Proof of \cref{thm:XOS_value}]
    Combining \cref{lem:XOS_value_helper}, which proves the proposition for $b=1$, and \cref{obs:value_invariant}, which shows that the problem of maximizing value is invariant to the value of $b$, the proposition follows.
\end{proof}

\paragraph{$B$-transfer constraint.}

We establish a similar result for the $B$-transfer constraint. We use the fact that any non-empty set $S$ satisfies the $B$-transfer constraint if and only if it is $b$-feasible for $b=\frac{B}{f(S)}$, and the fact that any set $T$ whose value is a constant approximation of $f(S)$ such that $f(T)\le f(S)$ also satisfies the $B$-transfer constraint. 

\begin{proposition}\label{pro:XOS_transfer}
    There exists an algorithm that, for any $b>0$ and $B > 0$, outputs a set satisfying the $b$-feasibility and $B$-transfer constraint, whose value is an $O(1)$-approximation of the optimal value subject to the $b$-feasibility and $B$-transfer constraint, for any $n$-agent setting with an XOS value function.
    The algorithm runs in polynomial time in $n$ using value and $a$-demand oracles.
\end{proposition}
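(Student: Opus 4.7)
The plan is to reduce the $(b,B)$-constrained problem to a sequence of pure $b$-feasibility problems handled by \cref{thm:XOS_value}, via a doubling search over the unknown optimum value. Let $S^\star$ be the optimal set subject to both constraints, and set $v^\star := f(S^\star)$. The key observation, already flagged before the statement, is that $\rho(S^\star)f(S^\star)\le B$ means $S^\star$ is $(B/v^\star)$-feasible, and hence $b'$-feasible for $b' := \min(b, B/v^\star)$; so ``$B$-transfer at value $v^\star$'' is really just $(B/v^\star)$-feasibility in disguise.

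Algorithmically, I would perform a multiplicative doubling over estimates $\hat v$ of $v^\star$ (in a suitable range up to $f([n])$) with ratio $2^\gamma$ for small $\gamma > 0$. For each estimate $\hat v$, set $\hat b := \min(b, B/\hat v)$, run the algorithm of \cref{thm:XOS_value} with feasibility parameter $\hat b$, and add the returned set $U_{\hat v}$ to a collection $\mathcal C$. I would also add every singleton $\{i\}$ that is both $b$-feasible and $B$-transfer feasible (the latter being simply $c_i \le B$), and finally return the set in $\mathcal C$ of largest value among those satisfying both constraints. When $v^\star / 2^\gamma < \hat v \le v^\star$ we have $\hat b \ge b'$, so $S^\star$ is $\hat b$-feasible; by \cref{lem:alg2:value} the produced $U_{\hat v}$ satisfies $\rho(U_{\hat v}) \le \hat b$ and $(A/m)\hat v \ge f(U_{\hat v}) \ge \Omega(v^\star)$. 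The crucial inequality is the upper bound $f(U_{\hat v}) \le \hat v$, which combined with $\hat b \le B/\hat v$ yields $\rho(U_{\hat v})f(U_{\hat v}) \le \hat b \cdot \hat v \le B$, so $U_{\hat v}$ meets the $B$-transfer constraint. The heavy-agent branch in the proof of \cref{lem:XOS_value_helper} is handled by the singleton filter: for $i\in S^\star$ with $f(\{i\})\ge (A/3)\hat v$, subadditivity of XOS gives $c_i/f(\{i\}) \le c_i/f(i:S^\star)\le \rho(S^\star)\le b$, and since each nonnegative summand of $\rho(S^\star)f(S^\star)\le B$ is individually at most $B$ we also get $c_i \le B$.

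The main obstacle, albeit mild, is that the external statement of \cref{thm:XOS_value} returns a single ``best'' set, which could in principle be a high-value singleton violating the upper bound $f(U_{\hat v}) \le \hat v$ on which the $B$-transfer guarantee relies. The remedy is to argue directly at the level of the collection $\mathcal C$ produced inside \cref{alg3}: the set added by the inner call to \cref{alg2} always satisfies $f(U_{\hat v}) \le (A/m)\hat v \le \hat v$ by \cref{lem:alg2:value}, so the post-filtering step retains a set with the right value, $b$-feasibility, and $B$-transfer feasibility. The remaining bookkeeping---polynomial-time doubling and a constant-factor loss from $2^\gamma$---is routine, and the final approximation ratio matches that of \cref{thm:XOS_value} up to the doubling constant.
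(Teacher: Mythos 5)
Your proof is correct and follows the same overall strategy as the paper: reduce the $(b,B)$-constrained problem to $b$-feasibility via the equivalence ``$B$-transfer at value $f(S^\star)$ is $B/f(S^\star)$-feasibility,'' and handle the unknown $b(S^\star)=\min\{b,B/f(S^\star)\}$ by a doubling search. Your one genuine deviation is a modest simplification in the choice of estimator: you derive $\hat b := \min(b, B/\hat v)$ deterministically from the value estimate $\hat v\le f(S^\star)$, making $\hat b$ an \emph{over}-estimate of $b(S^\star)$. After scaling costs by $1/\hat b$ the optimal set has scaled share $\rho(S^\star)/\hat b \le 1$, so the original \cref{lem:alg2:value} applies as is. The paper instead runs a second, independent doubling search for $\hat b$ from below, producing an \emph{under}-estimate whose scaled share can be as large as $2^\mu\le 2$; this forces the paper to invoke a strengthened version of \cref{lem:alg2:value} (valid for $\rho(S)\le 2$, noted to be a special case of \cref{lemma:alg2} with $\rho(S)\le(2-\tfrac{1}{8m})^2$). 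Your over-estimate route avoids both the second doubling loop and the strengthened lemma. The rest matches the paper's reasoning: you correctly argue at the level of the internal collection of \cref{alg3} (rather than the black-box output of \cref{thm:XOS_value}) to preserve the bound $f(U)\le (A/m)\hat v\le\hat v$, and your heavy-agent derivations $\rho(\{i\})\le\rho_{S^\star}(\{i\})\le\rho(S^\star)\le b$ and $c_i\le\rho_{S^\star}(\{i\})f(S^\star)\le B$ are exactly right.
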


In order to prove \cref{pro:XOS_transfer}, we will use a slightly stronger version of \cref{lem:alg2:value}, which weakens the condition that the set $S$ must be feasible ($\pay(S)\le 1$) to the condition $\pay(S)\le 2$. Observe that the strengthened version is still a special case of \cref{lemma:alg2}, which covers any set $S$ with $\pay(S)\le (2-\frac{1}{8m})^2$.

\begin{proof}[Proof of \cref{pro:XOS_transfer}]
    Let $S$ be the optimal value set. Suppose $S \neq \emptyset$ (otherwise, an approximation is trivial).
    Let $b(S)=\min\{b, \frac{B}{f(S)}\}$.
    First, note that if we had known the value of $S$, i.e., had known $f(S)$,
    we would have known $b(S)$. Then, with a similar procedure as in the one from the proof as of \cref{thm:XOS_value} (i.e., cost scaling to $\tilde c = c/b(S)$, and then using \cref{lem:XOS_value_helper}), we would have been able to obtain a $b(S)$-feasible set $T$ whose value is a constant approximation of $f(S)$ and is upper bounded by $f(S)$, making it $b$-feasbile and $B$-transfer. 
    Since we do not know $b$ a priory, we will approximate it with doubling. 
    
    Let $y$ be an estimator of $f(S)$, i.e., $f(S)\cdot 2^{-\gamma}<y\le f(S)$, where $\gamma>0$ is some small constant.
    Let $\hat b$ be an estimator of $b(S)$, i.e., 
    $b(S)\cdot 2^{-\mu} < \hat b \le b(S)$, where $1\ge \mu > 0$. 
    Both estimations can be obtained by repeated doubling.
    
    By cost scaling to $\tilde c = c/\hat b$, we see that the scaled share is
    $\tilde \rho(S) = \rho(S)/\hat b \le b(S)/\hat b< 2^{\mu}\le 2$.
    Then, similarly to \cref{lem:XOS_value_helper}, and by the strengthened \cref{lem:alg2:value}, in the scaled instance we obtain either an agent $i\in S$ such that $f(S)\ge f(i)> \frac{A}{3}\cdot y$ or a feasible set $U$ such that $A\cdot y\ge f(U)\ge \frac{A}{3}\cdot y$.
    In both cases, the set is a $\frac{3}{A}\cdot 2^\gamma$ approximation of $f(S)$ and its value is upper bounded by $f(S)$, meaning that it also satisfies both $b$-feasibility and $B$-transfer constraints.
\end{proof}


\subsection{Approximations of Welfare and Utility Under Generalized Constraints}
\label{subsection:value:corollaries}

In this subsection, we show approximation algorithms for welfare and utility beyond standard feasibility. 
Our starting point is the following observation:

\begin{observation}
\label{obs:all-objs}
    Under a $b$-feasibility constraint with $b\le 1-\Theta(1)$, it holds that value, utility, and welfare of any $b$-feasible set differ by at most a constant factor.
\end{observation}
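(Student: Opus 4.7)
The plan is to establish a chain of inequalities $(1-b)f(S)\le g(S)\le w(S)\le f(S)$ for any $b$-feasible set $S$, and then invoke $b\le 1-\Theta(1)$ to conclude that all three quantities agree up to a constant multiplicative factor.

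First I would unfold the definitions. For a $b$-feasible set $S$ we have $\rho(S)\le b$ by assumption, and hence
\[
g(S) \;=\; \bigl(1-\rho(S)\bigr)\,f(S) \;\ge\; (1-b)\,f(S).
\]
This handles the value-vs-utility comparison from one side. The reverse direction $g(S)\le f(S)$ is immediate since $\rho(S)\ge 0$, and $w(S)\le f(S)$ is immediate since costs are non-negative.

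Next I would verify the sandwiching inequality $w(S)\ge g(S)$. By the discussion in \cref{section:model}, welfare equals the sum of all players' utilities: $w(S)=g(S)+\sum_{i\in S}\mu_i(S,\rho_S)$. Each agent in $S$ is incentivized to work under $\rho_S$, so $\rho_S(i)\ge c_i/f(i:S)$; combined with monotonicity $f(S)\ge f(i:S)$, this yields $\mu_i(S,\rho_S)=\rho_S(i)f(S)-c_i\ge 0$. Summing, we get $w(S)\ge g(S)$.

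Chaining the three bounds gives $(1-b)f(S)\le g(S)\le w(S)\le f(S)$. Since $b\le 1-\Theta(1)$, the factor $1-b$ is bounded below by a positive constant, so any two of $f(S),g(S),w(S)$ are within a constant factor of one another. There is no real obstacle here: all steps are immediate once one writes the definitions, and the only thing to be careful about is recalling that $w(S)\ge g(S)$ follows from the non-negativity of working agents' utilities (an individual-rationality style argument), rather than directly from $\rho(S)\le b$.
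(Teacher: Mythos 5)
Your proof is correct and follows essentially the same approach as the paper, which establishes precisely the chain $(1-b)f(S)\le (1-\rho(S))f(S) = g(S)\le w(S)\le f(S)$ and then invokes $b\le 1-\Theta(1)$. The paper states the chain without the extra justification of $w(S)\ge g(S)$, but your explicit verification via non-negativity of the working agents' utilities is the right reason it holds.
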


\begin{proof}
For any $S$ such that $b$ upper-bounds $\rho(S)$,
    $$
    (1-b)f(S) \le (1-\rho(S))f(S) = g(S) \le w(S) \le f(S).
    $$
\end{proof}

We obtain the following corollary of \cref{thm:XOS_value}:

\begin{corollary}\label{b-welfare}
  There exists an algorithm that, for any $0<b \le 2-\Theta(1)$, outputs a $b$-feasible set, whose utility is an $O(1)$-approximation of the optimal welfare subject to $b$-feasibility, for any $n$-agent setting with an XOS value function. 
  The algorithm runs in polynomial time in $n$ using value and $a$-demand oracles.
\end{corollary}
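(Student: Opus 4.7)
The plan is to combine Proposition~\ref{thm:XOS_value} with (an extended form of) Proposition~\ref{XOS<2} via a case split on $b$. Fix a small constant $\eta>0$ strictly less than the $\Theta(1)$ slack in the hypothesis $b\le 2-\Theta(1)$, and consider the two overlapping ranges $b\in(0,1-\eta]$ and $b\in[1-\eta,\,2-\Theta(1)]$.

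In the first range, I would run the algorithm of Proposition~\ref{thm:XOS_value} with the given parameter $b$ to produce a $b$-feasible set $U$ with $f(U)\ge c\cdot \mathrm{OPT}_b(f)$ for an absolute constant $c$. Since $w(S)\le f(S)$ for every $S$, we have $\mathrm{OPT}_b(f)\ge \mathrm{OPT}_b(w)$; and because $U$ is $b$-feasible, $\rho(U)\le b$, so
$$
g(U) \;=\; (1-\rho(U))\,f(U) \;\ge\; (1-b)\,f(U) \;\ge\; c\,(1-b)\,\mathrm{OPT}_b(w) \;\ge\; c\eta\cdot \mathrm{OPT}_b(w),
$$
an $O(1)$-approximation. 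This is essentially the same trick used in Observation~\ref{obs:all-objs}, applied on top of value-optimality rather than after the fact.

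In the second range, I would apply Proposition~\ref{XOS<2}. Although it is stated for $b\ge 1$, its proof only exploits two consequences of $b$-feasibility: that $\rho(S)\le(2-\tfrac{1}{8m})^2$, which holds for any $b\le 2-\Theta(1)$; and that if some $i\in S$ is $\delta$-expensive then every other $j\in S$ has share at most $b-(1-\delta)$, which is bounded away from $1$ whenever $\delta$ is chosen so that $1-(b-(1-\delta))=2-b-\delta$ is a positive constant. For $b\le 2-\Theta(1)$ this leaves ample room to pick such a $\delta$, and the argument of Proposition~\ref{XOS<2} then extends verbatim. To guarantee that the output of \cref{alg3} is actually $b$-feasible (not merely $1$-feasible), one selects the internal parameter $m$ large enough that $1/m\le b$; since in this case $b\ge 1-\eta$, any fixed $m\ge 1/(1-\eta)$ suffices, keeping all constants bounded.

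The main obstacle I anticipate is bookkeeping rather than conceptual: verifying that the parameters $m,M,\delta$ chosen in the analysis of \cref{fundamental} and of \cref{XOS<2} can be simultaneously adjusted so that (i)~$1/m\le b$ in the second range, (ii)~the expensive/cheap dichotomy still buys a constant gap from $1$, and (iii)~the two cases overlap and together cover $(0,2-\Theta(1)]$. Once these constants are pinned down, the two cases together yield a $b$-feasible output whose utility is an $O(1)$-approximation of $\mathrm{OPT}_b(w)$, establishing the corollary.
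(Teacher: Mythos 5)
Your proposal matches the paper's proof essentially step for step: a case split at around $b\approx 1$, with the small-$b$ range handled by Proposition~\ref{thm:XOS_value} plus the $g(S)\ge(1-b)f(S)$ bound from Observation~\ref{obs:all-objs}, and the large-$b$ range handled by extending Proposition~\ref{XOS<2} while constraining $m$ so that $1/m\le b$. One detail worth adding to match the paper: in the second range, when \cref{alg3} returns a singleton $\{i\}$ with $i\in S$, you should note that $\rho(i)\le\rho(S)\le b$, so the singleton output is also $b$-feasible, not only the set output with $\rho(T)\le 1/m$.
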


\begin{proof}
    We prove this result with two separate cases. 
    
    (1) For the case of $0<b\le 1-\Theta(1)$, by \cref{thm:XOS_value} we can obtain a $b$-feasible set whose value is a $\frac{3}{A}\cdot 2^\gamma$ approximation of the optimal value subject to $b$-feasibility. 
    Then using~\cref{obs:all-objs}, this set's utility is a $(\frac{1}{(1-b)}\cdot \frac{3}{A}\cdot 2^\gamma)$-approximation of the optimal value. 

    (2) Let $S$ be the optimal welfare set subject to $b$-feasibility. For the case of $1\le b<2-\Theta(1)$, we already provided a constant-factor approximation of $w(S)$ in \cref{XOS<2}. This result can be extended to the range of $\Theta(1)\le b\le 1$ as well by restricting the parameter $m$ to be such that $\frac{1}{m}\le b$. 
    With this restriction on $m$, we ensure that \cref{alg3} provides either an agent $i\in S$ whose utility is a constant approximation of $w(S)$ (and $\rho(i)\le \rho(S)\le b$), or a set $T$ whose value is a constant-factor approximation of $f(S)$, and whose share is a constant that is upper-bounded by $\frac{1}{m}\le b$. 
\end{proof}

Notice that for $b\le 1-\Theta(1)$, the approximation in our proof establishes a guarantee directly against
the stronger benchmark of optimal value. Conversely, when $b$ approaches 1, we leverage the approximations from \cref{1} and \cref{XOS<2}. Specifically, for exact demand oracles (i.e. $a=1$), we have 
an approximation of $\frac{48+o(1)}{1-b}$ for optimal value. Additionally, we provide a $188$ approximation of optimal welfare with the parameter $m=1.713$, i.e., for $b\ge \frac{1}{m}\approx 0.58$ (see \cref{fundamental}). This implies a factor $4$ 
increase for small values of $b$ compared to $b=1$.

\begin{corollary}\label{B-welfare}
    There exists an algorithm that, for any $0<b\le 2-\Theta(1)$ and $B > 0$, outputs a $b$-feasible and $B$-transfer set whose utility represents an $O(1)$-approximation of optimal welfare subject to $B$-transfer and $b$-feasibility, for any $n$-agent setting with XOS value function.
    The algorithm runs in polynomial time using value and $a$-demand oracles.
\end{corollary}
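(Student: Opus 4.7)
The plan is to combine the two-case analysis of Corollary~\ref{b-welfare} with the cost-scaling and doubling trick of Proposition~\ref{pro:XOS_transfer}, so that the $B$-transfer constraint is enforced together with $b$-feasibility. The split is again based on whether $b$ is bounded away from $1$ from below or only from below $2$.

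For the easy regime $0 < b \le 1 - \Theta(1)$, I would invoke Proposition~\ref{pro:XOS_transfer} directly to obtain a set $T$ satisfying both constraints with $f(T)$ a constant-factor approximation of $OPT_{b, B}(f)$. Since $\rho(T) \le b \le 1 - \Theta(1)$, Observation~\ref{obs:all-objs} yields $g(T) = (1-\rho(T))\, f(T) \ge (1-b)\, f(T) = \Omega(1) \cdot f(T)$. Because value upper bounds welfare, $f(T)$ is also a constant-factor approximation of $OPT_{b, B}(w)$, and hence so is $g(T)$.

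For the harder regime $1 \le b \le 2 - \Theta(1)$, the simple ``$(1-b)f$'' bound is unavailable, so I would adapt the proof of Corollary~\ref{b-welfare}'s second case, which already extends Algorithm~\ref{alg3}'s analysis to share budgets $2-\Theta(1)$. Let $S^*$ be the optimal welfare set subject to both constraints and set $b(S^*) := \min\{b,\, B/f(S^*)\}$; then $\rho(S^*) \le b(S^*)$, so rescaling costs by $1/\hat b$ for an estimator $\hat b \in (b(S^*)/2,\, b(S^*)]$ (obtained by doubling over $f(S^*)$ and $b(S^*)$, exactly as in Proposition~\ref{pro:XOS_transfer}) produces an instance in which $\tilde \rho(S^*) \le 2$. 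On this scaled instance I would run Algorithm~\ref{alg3} with the parameter $m$ chosen so that $1/m \le b$, as in Corollary~\ref{b-welfare}. By the extended analysis of Lemma~\ref{fundamental} used in Proposition~\ref{XOS<2}, the algorithm returns either a single $\delta$-cheap agent $i \in S^*$ or a set $U$ from Algorithm~\ref{alg2}. In the singleton case, $\rho_{\{i\}}(\{i\}) \le 1-\delta \le b$ gives $b$-feasibility, and $\rho_{\{i\}}(\{i\})\, f(\{i\}) = c_i \le B\cdot f(i:S^*)/f(S^*) \le B$ (a direct consequence of the $B$-transfer constraint on $S^*$) gives $B$-transfer; its utility approximates $w(S^*)$ by Lemma~\ref{fundamental}. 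In the Algorithm~\ref{alg2} case, the scaled bound $\tilde \rho(U) \le 1/m$ translates back to $\rho(U) \le \hat b/m \le b(S^*) \le b$, and $f(U) \le (A/m)\, f(S^*) \le f(S^*)$ then forces $\rho(U)\,f(U) \le b(S^*)\, f(S^*) \le B$; its utility $g(U) = (1-\rho(U))\, f(U)$ is a constant-factor approximation of $f(S^*) \ge w(S^*)$.

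The main obstacle I anticipate is ensuring that the Algorithm~\ref{alg2} output satisfies both budget constraints \emph{simultaneously} in the second case. Scaling by $b$ alone would only enforce $b$-feasibility, and scaling by $B/f(S^*)$ alone only $B$-transfer; the point of rescaling by $b(S^*) = \min\{b,\, B/f(S^*)\}$ is that $\tilde \rho(U) \le 1/m$ translates back to a share bounded by both $b$ and $B/f(S^*)$ at once, which is precisely what the proof of Proposition~\ref{pro:XOS_transfer} already exploits. The doubling search over the two unknowns $f(S^*)$ and $\hat b$ adds only a polylogarithmic number of iterations, preserving the polynomial running time.
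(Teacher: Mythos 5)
Your proposal follows the paper's approach closely: estimate $b(S^*)=\min\{b,\,B/f(S^*)\}$ by doubling, rescale costs by the estimator, run \cref{alg3} with suitably constrained $m$, and verify both constraints directly on the returned set. The easy-case shortcut---invoking \cref{pro:XOS_transfer} to get a $b$-feasible, $B$-transfer set $T$ with $f(T)=\Omega(1)\cdot OPT(f)\ge \Omega(1)\cdot OPT(w)$, then applying \cref{obs:all-objs} to bound $g(T)\ge(1-b)f(T)$---is a cleaner packaging than the paper's in-line re-derivation, and is correct since $OPT_{b,B}(f)\ge OPT_{b,B}(w)$ always.

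Two small fixes are needed. First, your case split $0<b\le 1-\Theta(1)$ versus $1\le b\le 2-\Theta(1)$ leaves the range $1-\Theta(1)<b<1$ uncovered; the paper's second case runs over $\Theta(1)\le b\le 2-\Theta(1)$ (achievable by requiring $1/m\le b$ with $m$ still bounded away from $1$), so your second case should start at some constant threshold rather than at $1$. Second, taking $\hat b\in(b(S^*)/2,\,b(S^*)]$ by \emph{exact} halving only gives $\tilde\rho(S^*)=\rho(S^*)/\hat b<2$, not $\le 2-\Theta(1)$, so \cref{XOS<2} cannot be applied as stated. The paper uses a finer multiplicative step $2^\mu$ with a small constant $\mu<1$, yielding $\tilde\rho(S^*)\le 2^\mu\le 2-\Theta(1)$; you should do the same. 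Everything else---the $B$-transfer bound for a returned singleton via $c_i\le B\cdot f(i:S^*)/f(S^*)\le B$, the translation $\rho(U)\le\hat b/m\le b(S^*)$ combined with $f(U)\le(A/m)f(S^*)\le f(S^*)$ to get $\rho(U)f(U)\le B$---matches the paper's reasoning and is correct.
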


\begin{proof}
    For convenience, for any $S\neq \emptyset$ let $b(S)=\min\{b, \frac{B}{f(S)}\}$. Thus, finding a set that is both $ B$-transfer and $b$-feasible amounts to finding a set $S$ which is $b(S)$-feasible.

    Consider the optimal welfare set $S_w$ subject to $b$-feasibility and $B$-transfer.
    Henceforth, we assume that $S_w$ is not the empty set, since otherwise an approximation is trivial.
    
    An important observation is that for any set $S$ that is $b(S)$-feasible, if a set $T$ satisfies both $f(T)\le f(S)$ and $\rho(T)\le b(S)$, then $T$ is $b(T)$-feasible.  
    This implies that in order to find a set $T$ which is $b(T)$-feasible, it suffices to find a set $T$ that satisfies both $f(T)\le f(S_w)$ and $\rho(T)\le b(S_w)$.
    In particular, any agent $i\in S_w$ is $b(i)$-feasible.

    Note that if we had known the value of $b(S_w)$, then we could have used \cref{b-welfare} directly to obtain our desired approximations. 
    Since we do not know $b(S_w)$ a-priory, similar to \cref{pro:XOS_transfer}, we will estimate $b(S_w)$ by performing a doubling procedure. 
    Let $\hat b$ be the estimator of $b(S_w)$ obtained by this procedure; that is, $b(S_w)\le \hat b < b(S_w)\cdot 2^\mu$ for some small constant $\mu>0$.

    \paragraph{The case where $\hat b\le 1 - \Theta(1)$.}
    In this case we are guaranteed that $b(S_w)\le 1-\Theta(1)$ as well.
    We set $m$ to be such that $\frac{1}{m}= 2^{-\mu}$ and scale the costs by $\frac{1}{\hat b}$.
    Proceeding exactly as in \cref{b-welfare}, we are able to find one of the following:
    \begin{enumerate}
        \item An agent $i\in S_w$ such that $f(i)\ge \frac{3}{A}\cdot 2^{-(\delta+\mu)}\cdot f(S_w)$ and $\rho(i)\le \rho(S_w)\le b(S_w)$.  Thus $g(i)=(1-\rho(i))f(i)=\Theta(1) f(S_w)$, i.e., $g(i)$ is a constant approximation of $f(S_w)$.
        In this case, $i$ alone satisfies all our desired conditions (i.e., its utility is a constant approximation of $f(S_w)$ and it is $b(i)$-feasible, hence $b$-feasible and $B$-transfer).

        \item A set $T$ such that $\frac{A}{3}\cdot 2^{-(\delta+\mu)}\cdot f(S_w)\le f(T)\le f(S_w)$, and whose (scaled) share is $\tilde \rho(T)\le 2^{-\mu}$, meaning that $\rho(T)\le \hat b\cdot 2^{-\mu}<b(S_w)$. Since $T$ satisfies both $f(T)\le f(S_w)$ and $\rho(T)<b(S_w)$, it is $b(T)$-feasible. 
        Thus, in this case, $T$ satisfies our requirements.
    \end{enumerate}

    \paragraph{The case of $\Theta(1)\le \hat b \le 2-\Theta(1)$.}
    In this case we are guaranteed that $b(S_w)\ge \Theta(1)$.
    Similarly to \cref{b-welfare}, we constraining $m>1$ to be such that $\frac{1}{m}\le \hat b\cdot 2^{-\mu}\le b(S_w)$.
    Note that $\frac{1}{m}$ is not constrained to be arbitrarily small since $\hat b\ge \Theta(1)$.    
    With this constraint on $m$ and since $\hat b \le 2-\Theta(1)$ we are guaranteed by \cref{XOS<2} that \cref{alg3} will either: 
    (1) provide an agent $i\in S_w$ whose utility is a constant-factor approximations of $w(S_w)$; or
    (2) provide a set $T$ whose value is a constant approximation of $w(S_w)$, such that $f(T)\le f(S_w)$ and whose share  $\rho(T)\le \frac{1}{m}$ (which is smaller then $b(S_w)$ and also is bounded away from $1$ by a constant).
    In both cases the set satisfies the $b$-feasibility and $B$-transfer constraint as desired.
\end{proof}

The analysis of \cref{B-welfare} can be strengthened by additionally considering estimators of $\rho(S_w)$ and $\rho(S_v)$, where $S_v$ denotes the optimal value set subject to the $b$-feasibility and $B$-transfer constraints.
Similarly to \cref{b-welfare}, for $\rho(S_v) \le 1-\Theta(1)$, the above proof guarantees a direct approximation relative to the stronger benchmark of optimal value.
Furthermore, for $a=1$ (as well as $a=(1-1/e)$), we see a factor $4$ improvement between small values of $b$ and $b=1$.

\subsection{The Symmetric Case}
\label{value:symmatric}

We show that for sXOS value functions the welfare-value gap diminishes with set size for any $b$-feasibility constraint.

\begin{proposition}
    \label{w=f}
   For any $n$-agent setting $(f, c)$ with an sXOS value function, it holds that
   $$f(k)\ge w(k)\ge \left(1-\frac{b}{k}\right)f(k)$$
   for any non-empty, $b$-feasible set $[k]$. 
\end{proposition}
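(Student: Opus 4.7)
The plan is to observe that the first inequality $f(k)\ge w(k)$ is immediate from $w(k)=f(k)-c(k)$ and the non-negativity of costs, so the entire substance lies in the lower bound $w(k)\ge (1-b/k)f(k)$, which is equivalent to the per-team cost estimate $c(k)\le (b/k)f(k)$. The strategy is to derive this cost estimate by chaining two inequalities: one coming from $b$-feasibility (an upper bound on the cost in terms of the marginal) and one coming from the sXOS marginals property (an upper bound on the marginal in terms of the average value).

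First, I would unpack $b$-feasibility in the symmetric, uniform-cost case. Since $\rho(k)=ck/m(k)\le b$, we obtain
\[
m(k)\ \ge\ \frac{ck}{b}.
\]
Next, I would invoke the sXOS marginals property (Fact~\ref{sXOS:marginals} in the paper), which states that $m(k)\le f(k)/k$, i.e., $f(k)\ge k\cdot m(k)$. Combining these two inequalities yields
\[
f(k)\ \ge\ k\cdot m(k)\ \ge\ k\cdot \frac{ck}{b}\ =\ \frac{ck^2}{b}.
\]

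Rearranging gives $c(k)=ck\le (b/k)f(k)$, and substituting into the definition of welfare yields
\[
w(k)\ =\ f(k)-c(k)\ \ge\ f(k)-\frac{b}{k}f(k)\ =\ \left(1-\frac{b}{k}\right)f(k),
\]
completing the proof. There is no real obstacle here; the only subtlety is to notice that the two ``opposite'' inequalities on $m(k)$—the lower bound from feasibility and the upper bound from the sXOS marginals property—combine neatly because $b$-feasibility forces the per-agent marginal to be large relative to the cost, while sXOS forces the marginal to be small relative to the average value, and together these pin down $f(k)$ as large relative to the total cost.
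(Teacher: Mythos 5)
Your proof is correct and follows essentially the same route as the paper's: the paper isolates the key cost bound $c(k)\le (b/k)f(k)$ as a separate observation (\cref{sXOS:prop:welfare:LB}), derived by chaining $b$-feasibility with the sXOS marginals property $m(k)\le f(k)/k$, and then substitutes into $w(k)=f(k)-c(k)$ exactly as you do. The only cosmetic difference is that you rearrange the chain through a lower bound on $m(k)$ rather than an upper bound on $\rho(k)$, which is algebraically identical.
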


To see this, we use the following observation, by which for sXOS and any $b$, the cost of a set diminishes with size:

\begin{observation}
\label{sXOS:prop:welfare:LB}
For any $n$-agent setting $(f, c)$ with an sXOS value function, for any $k\in [n]$, if $[k]$ is $b$-feasible, then 
$$c(k)\le \frac{b}{k}\cdot f(k).$$
\end{observation}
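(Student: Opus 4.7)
The plan is to combine the $b$-feasibility condition with the marginals property of sXOS functions in a direct computation. The key observation is that in the symmetric setting the marginal $f(i:[k])$ equals $m(k)$ for every agent $i\in[k]$, by symmetry: removing any single agent from a team of size $k$ yields a team of size $k-1$ with value $f(k-1)$, so $f(i:[k])=f(k)-f(k-1)=m(k)$ uniformly across $i$. This will collapse the definition of $\rho(k)$ into a clean expression involving $c(k)$ and $m(k)$.

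First I would unfold the $b$-feasibility hypothesis. By definition,
\[
\rho(k)\;=\;\sum_{i=1}^{k}\frac{c_i}{f(i:[k])}\;=\;\sum_{i=1}^{k}\frac{c_i}{m(k)}\;=\;\frac{c(k)}{m(k)}\;\le\; b,
\]
which rearranges to $c(k)\le b\cdot m(k)$. Note this identity uses only symmetry of $f$, so it applies both to fully symmetric instances (uniform costs) and to partially symmetric instances (heterogeneous costs with symmetric $f$).

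Next I would invoke the marginals property for sXOS functions (\cref{sXOS:marginals}), namely $m(k)\le \tfrac{1}{k}f(k)$. Chaining this with the inequality from the previous step gives
\[
c(k)\;\le\; b\cdot m(k)\;\le\; b\cdot \frac{f(k)}{k}\;=\;\frac{b}{k}\cdot f(k),
\]
which is exactly the claim. There is no real obstacle: the proof is essentially a one-line computation once one uses symmetry to rewrite $\rho(k)$ as $c(k)/m(k)$, and then applies the sXOS marginals property already established earlier in the section.
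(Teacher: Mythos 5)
Your proof is correct and follows essentially the same argument as the paper: unfold $\rho(k)=c(k)/m(k)$ using symmetry, apply $b$-feasibility, then apply the sXOS marginals property $m(k)\le f(k)/k$ from \cref{sXOS:marginals}. You simply spell out the intermediate step $c(k)\le b\cdot m(k)$ a bit more explicitly than the paper does.
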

\begin{proof}
    Since $[k]$ is $b$-feasible, and from the marginals property (see \cref{sXOS:marginals}), it holds that
    $$
        b\ge \rho(k)=\frac{c(k)}{m(k)}\ge c(k)\cdot \frac{k}{f(k)}.
    $$
    By rearranging, we obtain $c(k)\le \frac{b}{k}\cdot f(k)$.
\end{proof}

\begin{proof}[Proof of \cref{w=f}]
    The first inequality is immediate from the definition of welfare. Consider the second inequality. For any feasible set $[k]$, by \cref{sXOS:prop:welfare:LB}, we have that $c(k)\le \frac{b}{k}f(k)$. It follows that 
    $$w(k)=f(k)-c(k)\ge \left(1-\frac{b}{k}\right)f(k).$$
\end{proof}

As a result, it is straightforward to translate our optimal welfare approximation results for the symmetric case to optimal value. In turn, since value results are invariant to the $b$-feasibility constraint, and since for any $b\le 1-\Theta(1)$, value, welfare, and utility are a constant factor apart, these results directly imply constant-factor approximation results for welfare and utility subject to $b$-feasibility. 
\section*{Acknowledgments}
We are deeply grateful to Hadar Strauss, whose invaluable contributions were essential in shaping this work.

Gil Aharoni and Inbal Talgam-Cohen were supported by the European Research Council (ERC) under the European Union's Horizon 2020 research and innovation program (grant agreement No.~101077862), by the Israel Science Foundation (grant No.~3331/24), by the NSF-BSF (grant No.~2021680), and by a Google Research Scholar Award.

Martin Hoefer was supported by DFG Research Unit ADYN (project number 411362735) and grant Ho 3831/9-1 (project number 514505843).

We thank Yossi Azar, Shiri Ron and Maya Schlesinger for helpful discussions and feedback. We further thank Shimi Cohen, Ido Mor, and Shaul Rosner for general feedback and advice.

\bibliographystyle{abbrv}
\bibliography{ZZbib.bib}

\appendix
\appendixbreak

\newpage

\section{Standard Definitions} 
\label{Appendix: perliminaries}

In this section, we provide standard definitions for set function classes.
Let $f:2^{[n]}\to \mathbb R_+$ be a value function, where $[n]$ represents the universe of agents, and for every $S\subseteq [n]$ the value $f(S)$ represents the value generated by the set of agents $S$. The following classes of value functions are well-studied:

\begin{enumerate}
    \item \emph{Additive}:
    $f$ is additive if $f(S \cup \{i\}) = f(S) + f(i)$ for every $S\subseteq [n], i \in [n] \setminus S$.
    \item \emph{Submodular}: $f$ is submodular if $f(i : S) \ge f(i : S \cup \{j\})$ for every $S$ and $i, j \in [n]\setminus S$.
    \item \emph{XOS} (or fractionally subadditive): $f$ is XOS if there exists a finite family $I$ of additive functions $\{\al_i\}_{i\in I}$ such that $f(S) = \max_{i\in I}\, \al_i(S)$ for every $S\subseteq [n]$.
    \item \emph{Subadditive}: $f$ is subadditive if $f(S \cup T) \le f(S) + f(T)$ for every $S, T\subseteq[n]$.
    \item \emph{Supermodular}: $f$ is supermodular if $f(i : S) \le f(i : S \cup \{j\})$ for every $S$ and $i, j \in [n]\setminus S$.
\end{enumerate}

It is well-known that these classes of set functions form a hierarchy~\cite{Hierarchy1}: 
\[\text{additive}\subsetneq\text{submodular}\subsetneq{XOS}\subsetneq\text{subadditive}.\]
We note that all strict inclusions apply even when restricting to symmetric set functions.

\section{Properties of XOS and sXOS}
\label{section:sXOS}

In this section, we discuss representations of XOS and sXOS functions and prove useful properties. 
In \cref{sub:sXOS}, we focus on sXOS, respectively, and in \cref{sub:XOS-sXOS}, we show additional properties of both classes.

\subsection{XOS Canonical Representation}
\label{sub:XOS}

\begin{definition}[XOS canonical representation] 
    \label{frac:canon}
    A family of non-negative additive set functions $\{\al_S\}_{S\subseteq [n]}$ is a \emph{canonical representation} 
    if it satisfies the following to properties
    \begin{enumerate}
        \item For every set $S\subseteq [n]$, the value $\al_S(S)=\max_T \al_T(S)$.
        \item For every $S,T\subseteq[n]$, it holds that $\al_S(T)=\al_S(T\cap S)$.
    \end{enumerate}
\end{definition}

\begin{lemma} \label{lem-canon-equiv}
    A function $f$ is an XOS function if and only if it has a canonical representation. 
\end{lemma}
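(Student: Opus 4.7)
The plan is to establish both directions by direct construction.

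For the ``if'' direction, assume $f$ has a canonical representation $\{\alpha_S\}_{S\subseteq[n]}$. Then property~(1) of \cref{frac:canon} immediately yields $f(S)=\alpha_S(S)=\max_T \alpha_T(S)$, so the family $\{\alpha_T\}_{T\subseteq[n]}$ is itself a finite collection of non-negative additive set functions witnessing that $f$ is XOS. No use of property~(2) is needed here.

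For the ``only if'' direction, assume $f$ is XOS with witnessing family $\{a_i\}_{i\in I}$, so $f(S)=\max_{i\in I} a_i(S)$ for every $S\subseteq[n]$. For each $S\subseteq[n]$, fix an index $i(S)\in I$ achieving the maximum, i.e., $f(S)=a_{i(S)}(S)$. I will define $\alpha_S$ to be the ``restriction'' of $a_{i(S)}$ to the ground set $S$: on singletons, set $\alpha_S(\{j\})=a_{i(S)}(\{j\})$ for $j\in S$ and $\alpha_S(\{j\})=0$ for $j\notin S$, and extend additively to all subsets. Then $\alpha_S$ is a non-negative additive set function, and for every $T\subseteq[n]$,
\begin{equation*}
    \alpha_S(T)=\sum_{j\in T}\alpha_S(\{j\})=\sum_{j\in T\cap S} a_{i(S)}(\{j\})=\alpha_S(T\cap S),
\end{equation*}
which is exactly property~(2) of \cref{frac:canon}.

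It remains to verify property~(1), namely $\alpha_S(S)=\max_T \alpha_T(S)$. By construction, $\alpha_S(S)=a_{i(S)}(S)=f(S)$, so it suffices to show $\alpha_T(S)\le f(S)$ for every $T$. Using the calculation above together with non-negativity and additivity of $a_{i(T)}$,
\begin{equation*}
    \alpha_T(S)=a_{i(T)}(S\cap T)\le a_{i(T)}(S)\le \max_{i\in I} a_i(S)=f(S)=\alpha_S(S),
\end{equation*}
which completes the argument. The only potential subtlety is that the extension of $\alpha_S$ to subsets outside $S$ must not accidentally exceed $\alpha_S(S)$; zeroing out coordinates outside $S$ is precisely what makes this work and is also what simultaneously guarantees property~(2). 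I do not expect any serious obstacles—the main point is choosing the ``right'' witness $a_{i(S)}$ per set and restricting its support to $S$.
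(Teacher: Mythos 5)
Your proposal is correct and follows essentially the same approach as the paper: you choose, for each $S$, a witnessing additive function achieving the max (your $i(S)$ is the paper's $\tau(S)$), define $\alpha_S(T)=a_{i(S)}(T\cap S)$ by restricting its support to $S$, and then verify the two canonical-representation properties by the same chain of inequalities.
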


\begin{proof}
The first direction is follows by definition. For the second direction,
recall that every XOS function $f$ is a maximum over a family of additive functions $\{\be_i\}_{i\in I}$ such that $f(S)=\max_{i\in I}\be_i(S)$. 
Let $\tau:2^{[n]}\to I$ be a mapping from a set to its pointwise maximum; that is, $\be_{\tau(S)}(S)\ge \be_i(S)$ for every $i\in I$.  
Consider the set $\{\al_S\}_{S\subseteq[n]}$ such that $\al_S(T)=\be_{\tau(S)}(T\cap S)$.
We claim that $\{\al_S\}_{S\subseteq[n]}$ is a canonical representation of $f$. 
We need to show that $\{\al_S\}_{S\subseteq[n]}$ is a canonical representation and that $\al_S(S)=f(S)$. 
Observe that $\al_S(S)=\be_{\tau(S)}(S)=\max_{i\in I}\be_i(S)=f(S)$ 
$$
\al_S(T\cap S)=\be_{\tau(S)}((T\cap S)\cap S)=\be_{\tau(S)}(T\cap S)=\al_S(T).
$$
This establishes Property 2 of \cref{frac:canon}. It is left to show Property 1, i.e.~that $\al_S(S)=\max_{T\subseteq[n]}\al_T(S)$. This holds since for every pair of sets $S$ and $T$ it holds that
    \begin{align*}
        \al_T(S)=\be_{\tau(T)}(S\cap T)\le \be_{\tau(T)}(S)\le \be_{\tau(S)}(S)=\al_S(S).
    \end{align*}
\end{proof}

\begin{lemma}[Properties of the XOS canonical representation]
    \label{lem:canon-rep-properties}
    Let $\{\al_S\}_{S\subseteq[n]}$ be a canonical representation of an XOS function. Then,
    \begin{enumerate}
        \item $\al_S(S)\ge \al_S(T)$ for every two sets $S, T\subseteq [n]$.
        \item $\al_S(S)\le \al_{S\cup i}(S\cup i)$ for every $S\subseteq [n]$, $i\in [n]$.
        \item For any $S\subseteq [n]$ there exists $i\in S$ for which $\frac{1}{|S|}\cdot\al_S(S)\le \frac{1}{|S\setminus i|}\cdot\al_{S\setminus i}(S\setminus i)$.
    \end{enumerate}
\end{lemma}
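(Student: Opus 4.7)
The plan is to prove all three properties directly from the two defining properties of a canonical representation (pointwise maximality and the support condition $\al_S(T) = \al_S(T \cap S)$), together with the additivity and non-negativity of each $\al_S$.

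For property (1), I would apply Property 2 of \cref{frac:canon} to rewrite $\al_S(T) = \al_S(T \cap S)$. Since $T \cap S \subseteq S$ and $\al_S$ is a non-negative additive set function, monotonicity on subsets gives $\al_S(T \cap S) \le \al_S(S)$, which is what we want.

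For property (2), the idea is to use Property 1 of \cref{frac:canon} applied to the set $S \cup \{i\}$: we have
\[
    \al_{S \cup \{i\}}(S \cup \{i\}) \;=\; \max_{T \subseteq [n]} \al_T(S \cup \{i\}) \;\ge\; \al_S(S \cup \{i\}).
\]
Now Property 2 of \cref{frac:canon} gives $\al_S(S \cup \{i\}) = \al_S((S \cup \{i\}) \cap S) = \al_S(S)$, and chaining the two yields the claim.

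For property (3), which I expect to be the main obstacle since it requires combining averaging with the canonical machinery, the plan is as follows. By additivity of $\al_S$,
\[
    \al_S(S) \;=\; \sum_{j \in S} \al_S(\{j\}),
\]
so by an averaging argument there exists $i \in S$ with $\al_S(\{i\}) \le \frac{1}{|S|}\al_S(S)$. Using additivity once more,
\[
    \al_S(S \setminus \{i\}) \;=\; \al_S(S) - \al_S(\{i\}) \;\ge\; \frac{|S|-1}{|S|} \al_S(S).
\]
The remaining step is to pass from $\al_S(S \setminus \{i\})$ to $\al_{S \setminus \{i\}}(S \setminus \{i\})$: by Property 1 of \cref{frac:canon} applied to the set $S \setminus \{i\}$, we have $\al_{S \setminus \{i\}}(S \setminus \{i\}) \ge \al_S(S \setminus \{i\})$. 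Dividing by $|S| - 1 = |S \setminus \{i\}|$ and combining the two inequalities then yields the desired average comparison. The delicate part is choosing the right $i$ (the one minimizing $\al_S(\{j\})$) and recognizing that the canonical maximality is exactly what lets us swap the outer index from $S$ to $S \setminus \{i\}$ at the end.
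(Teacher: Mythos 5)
Your proof is correct and follows essentially the same route as the paper: properties (1) and (2) are direct consequences of the two defining properties of a canonical representation, and for property (3) you pick the agent $i$ minimizing $\al_S(\{i\})$, apply the averaging bound and additivity, and then pass from $\al_S(S\setminus\{i\})$ to $\al_{S\setminus\{i\}}(S\setminus\{i\})$ via the pointwise-maximality property---exactly the paper's argument. The only difference is that you spell out the (short) verifications of (1) and (2), which the paper dismisses as following from the definition.
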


\begin{proof}
    The first two properties follow from the definition.
    We turn to the third property.
    Let $i=\min_{i\in S}\{\al_S(i)\}$. From minimality, $\al_S(i)\le \frac{1}{|S|}\cdot\al_S(S)$.
    Therefore, it follows that
    \begin{align*}
        \al_S(S)
    = \al_S(S\setminus i) + \al_S(i)
    \le \al_S(S\setminus i)+\frac{1}{|S|}\cdot\al_S(S),
    \end{align*}
    where the equality is by additivity of $\al_S$.
    By reordering and from the first XOS canonical representation property, it holds that $(1-\frac{1}{|S|})\cdot\al_S(S)\le \al_S(S\setminus i)\le \al_{S\setminus i}(S\setminus i)$. Since $(1-\frac{1}{|S|})=\frac{|S|-1}{|S|}=\frac{|S\setminus i|}{|S|}$,
    we obtain that $\frac{1}{|S|}\cdot\al_S(S)\le \frac{1}{|S\setminus i|}\cdot\al_{S\setminus i}(S\setminus i)$, as claimed. 

\end{proof}

\subsection{sXOS: Equivalent Representations}
\label{sub:sXOS}

For the specific case of symmetric XOS (sXOS) functions, we will further define specific, stronger representations, which we term the canonical \emph{set/size/function} representations. We show that every sXOS function has a unique representation of each kind and that every valid representation corresponds to a specific sXOS function.

\begin{definition}[sXOS canonical set representation]
    \label{def:canon-set}
    A family of set functions $\{\al_S\}_{S\subseteq [n]}$ is a \emph{canonical set representation} if it satisfies the following three properties:
    \begin{enumerate}
        \item $\{\al_S\}_{S\subseteq [n]}$ is a canonical representation of an XOS function according to \cref{frac:canon}.
        \item The pointwise maximum of $\{\al_S\}_{S\subseteq [n]}$ is a symmetric function; that is, for every two sets $S,T$ such that $|S|=|T|$, $\al_S(S)=\al_T(T)$.
        \item For every $S \subseteq [n]$, it holds that $\al_S$ is uniform over its support; that is, for every $i\in S$, we have that $\al_S(i)=\frac{1}{|S|}\cdot\al_S(S)$.
    \end{enumerate}
\end{definition}

By Property 1 of the above definition, the XOS canonical representation properties in \cref{lem:canon-rep-properties} hold for canonical set representations as well.

\begin{lemma}
    The class of sXOS functions has a one-to-one correspondence with the set of all canonical set representations.
\end{lemma}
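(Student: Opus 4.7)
The plan is to exhibit an explicit bijection between sXOS functions $f \colon 2^{[n]} \to \mathbb{R}_+$ and canonical set representations. For the forward direction, given $f$, I will construct a family $\{\al_S\}$ directly from the symmetric values $f(|S|)$; for the reverse direction, any canonical set representation yields an sXOS function by taking the pointwise maximum; and uniqueness follows since the three properties of \cref{def:canon-set} pin down $\al_S$ completely.

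\textbf{Forward direction.} Given sXOS $f$, define each $\al_S$ as the additive function supported on $S$ with $\al_S(i) = f(S)/|S|$ for $i \in S$ and $\al_S(i) = 0$ otherwise. Then $\al_S(S) = f(S)$, so symmetry of $f$ gives Property~2 of \cref{def:canon-set}, and the construction gives Property~3. For Property~1, which requires $\{\al_S\}$ to be an XOS canonical representation in the sense of \cref{frac:canon}, the identity $\al_S(T) = \al_S(T \cap S)$ is immediate from the support being $S$. The main technical task is verifying $\al_S(S) \ge \al_T(S)$ for every $T$, which after unfolding becomes
\[
f(|S|) \;\ge\; \frac{|T \cap S|}{|T|}\cdot f(|T|).
\]
This is the main obstacle, and it is exactly where I expect sXOS-ness (rather than mere symmetric subadditivity) to be essential. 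If $|T| \le |S|$, monotonicity of $f$ together with $|T \cap S|/|T| \le 1$ suffices. If $|T| > |S|$, I invoke the sXOS concavity property (\cref{sXOS:concave}), which is equivalent to $k \mapsto f(k)/k$ being non-increasing; this yields $f(|T|)/|T| \le f(|S|)/|S|$, so $(|T\cap S|/|T|)\,f(|T|) \le (|T\cap S|/|S|)\,f(|S|) \le f(|S|)$. This also shows $f$ equals the pointwise maximum of the $\al_S$, since equality is attained at $T = S$.

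\textbf{Reverse direction and uniqueness.} Given any canonical set representation $\{\al_S\}$, define $f(S) := \al_S(S)$. Property~2 makes $f$ symmetric, and by \cref{lem-canon-equiv} Property~1 makes $f$ XOS, so $f$ is sXOS. It remains to verify the two maps are mutually inverse. Starting from $f$ and applying the forward construction yields a family whose diagonal values $\al_S(S) = f(S)$ recover $f$. In the other direction, I check that $\{\al_S\}$ is uniquely determined by the associated $f$: additivity forces $\al_S(\emptyset)=0$; Property~(b) of \cref{frac:canon} applied to a singleton $\{i\}$ with $i \notin S$ yields $\al_S(i) = \al_S(\emptyset) = 0$; and Property~3, combined with $\al_S(S) = f(S)$, forces $\al_S(i) = f(S)/|S|$ for $i \in S$. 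Hence $\al_S$ coincides with the family produced by the forward map, completing the bijection.
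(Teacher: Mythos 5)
Your construction and verification essentially mirror the paper's argument: both build the same uniform family $\al_S(i)=f(S)/|S|$ for $i\in S$ (which the paper names $\be_S$), reduce Property~1 of \cref{frac:canon} to the inequality $f(|S|)\ge \frac{|T\cap S|}{|T|}f(|T|)$, and split on $|T|\le|S|$ versus $|T|>|S|$, using monotonicity in the first case and the non-increasing monotonicity of $k\mapsto f(k)/k$ in the second. Your explicit uniqueness argument at the end is a nice addition that the paper leaves implicit. One caveat worth fixing, though: for the case $|T|>|S|$ you cite \cref{sXOS:concave}, but in the paper's dependency structure that lemma is proved from Property~3 of the canonical \emph{function} representation (\cref{sXOS:main:prop}), whose equivalence to symmetric-XOS rests on \cref{theroem:sXOS}, which in turn invokes the very lemma you are proving; so as literally cited, the argument is circular. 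The underlying fact --- that $k\mapsto f(k)/k$ is non-increasing for any symmetric XOS $f$ --- is true and is available independently of the sXOS machinery: it follows from the sum-of-marginals bound of \cref{L:2.1} taken with $S=T$ (giving $k\,m(k)\le f(k)$, hence $\frac{f(k-1)}{k-1}\ge \frac{f(k)}{k}$), or equivalently from Property~3 of \cref{lem:canon-rep-properties} specialized to a symmetric $f$, which is the route the paper itself takes. Swapping in either of those citations closes the loop and makes your proof airtight.
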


\begin{proof}

    For the first direction, given a canonical set representation $\{\al_S\}_{S\subseteq [n]}$, since it is a canonical representation, the function $f(S)=\max_S\al(S)=\al_S(S)$ is XOS, moreover, since $\al_S(S)=\al_T(T)$ for every $S, T$ of the same size, $f$ is symmetric and thus an sXOS function.
    
    For the other direction: Given some sXOS function $f$, let $\{\al_S\}_{S\subseteq [n]}$ be its XOS canonical representation. We define a new set of uniform additive functions $\{\be_S\}_{S\subseteq [n]}$, where $\be_T(S):=|T\cap S|\cdot\frac{f(|T|)}{|T|}=\frac{|T\cap S|}{|T|}\cdot\al_T(T)$ (the equality holds by symmetry of $f$), and $\be_\emptyset\equiv 0$.
    By construction, every $\be_S$ is uniform over $S$, i.e., $\be_S(i)=\frac{1}{|S|}\cdot\be_S(S)$ for every $i\in S$. We claim that $\{\be_S\}_{S\subseteq [n]}$ is also an XOS canonical representation of $f$, which will conclude the proof. 
    
    Since $\be_S(S)=\al_S(S)$ for every $S$, in order to show that $\be_S(S)= \max_T\be_T(S)$ it is sufficient to show that $\al_S(S)\ge \be_T(S)$ for every $S$ and $T$ (with equality if $S=T$). 
    We proceed by case analysis.
    
    If $|S|\ge|T|$: Let $S_T\subseteq S$ be a subset of size $|T|$. Then it holds that
    \begin{align*}
        \al_S(S)\ge \al_{S_T}(S_T)=\al_T(T)=\be_T(T)\ge \be_T(S),
    \end{align*}
    where the two inequalities follow from the XOS canonical representation properties (see \cref{lem:canon-rep-properties}),
    the first equality is since $f$ is symmetric, and the second equality is from the definition of $\be$. 
    
    If $|S|<|T|$:  
    By Property 3 of the canonical representation of XOS (see \cref{lem:canon-rep-properties}), for every set $T$ there exists an element $i
\in T$ for which 
$$
\frac{\al_T(T)}{|T|}\le \frac{\al_{T\setminus i}(T\setminus i)}{|T\setminus i|}. 
$$ 
We conclude that for every size $k\le |T|$ there exists a set $T_k\subseteq T$ of size $k$ such that 
$$
\frac{\al_T(T)}{|T|}\le \frac{\al_{T_k}(T_k)}{k}.
$$
Therefore we have: 
\begin{align*}
    &\be_T(S) = \frac{|T\cap S|}{|T|}\cdot\al_T(T)
    \le |S|\cdot\frac{\al_T(T)}{|T|}
    \le |S|\cdot\frac{\al_{T_{|S|}}(T_{|S|})}{|T_{|S|}|}
    = \al_S(S)
\end{align*}
where the last equality is since $S$ and $T_{|S|}$ are of the same size.
\end{proof}

We proceed to define two additional representations of sXOS: the canonical \emph{size} and the canonical \emph{function} representations. We show that all three representations are equivalent. Since the canonical function representation is arguably the simplest, we refer to sXOS functions by this representation.

\begin{definition}[sXOS canonical size representation]
    A family of functions $\{\al_k\}_{k\in [n]}$\gcm{,} where $\al_k:[n]\to \mathbb{R}_+$, is a \textit{canonical \emph{size} representation} if it satisfies the following three properties:
    \begin{enumerate}
        \item $\al_k(m)=min\{k, m\}\cdot \al_k(1)$;
        \item For every $k<n$ it holds that $\al_k(k)\le \al_{k+1}(k+1)$;
        \item For every $k>1$ it holds that $\al_k(1)\le \al_{k-1}(1)$.
    \end{enumerate}
\end{definition}

\begin{definition}[sXOS canonical function representation] 
\label{sXOS:main:prop}
    A function $v:[n]\to \mathbb{R}$ is called a \textit{canonical \emph{function} representation} if the following three properties hold:
    \begin{enumerate}
        \item $v(0)=0$;
        \item For every $k<n$ it holds that $v(k)\le v(k+1)$;
        \item For every $k>1$ it holds that $\frac{1}{k}\cdot v(k)\le \frac{1}{k-1}\cdot v(k-1)$.
    \end{enumerate}
\end{definition}

\begin{lemma}[All three sXOS representations are equivalent]
\label{theroem:sXOS}
   The families of canonical set representation, canonical size representation, and canonical function representation have a one-to-one correspondence. 
\end{lemma}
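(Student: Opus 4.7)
The plan is to establish the bijections via a common intermediate object: I will show that each of the three representations is uniquely determined by, and uniquely determines, a single function $v:[n]\to\mathbb{R}_+$ that satisfies the three properties of \cref{sXOS:main:prop}. The cycle goes: canonical set representation $\to$ canonical function representation $\to$ canonical size representation $\to$ canonical set representation.

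\textbf{Set $\to$ function.} Given a canonical set representation $\{\al_S\}_{S\subseteq[n]}$, define $v(k) := \al_S(S)$ for any $S$ with $|S|=k$. This is well-defined by Property 2 of \cref{def:canon-set} (symmetry of the pointwise maximum). Property 1 of \cref{sXOS:main:prop} ($v(0)=0$) is immediate. Property 2 (monotonicity) follows directly from Property 2 of \cref{lem:canon-rep-properties}. Property 3 (the concavity-type inequality $v(k)/k \le v(k-1)/(k-1)$) follows from Property 3 of \cref{lem:canon-rep-properties} together with symmetry: applying the lemma to any $S$ of size $k$ yields an element $i\in S$ with $\al_S(S)/|S|\le \al_{S\setminus i}(S\setminus i)/|S\setminus i|$, which by symmetry means $v(k)/k \le v(k-1)/(k-1)$.

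\textbf{Function $\to$ size and function $\to$ set.} Given $v$ satisfying the canonical function properties, define $\al_k(m) := \min\{k,m\}\cdot v(k)/k$ for $k\ge 1$ (and $\al_0\equiv 0$), and define $\al_S(T) := |S\cap T|\cdot v(|S|)/|S|$ for $S\ne\emptyset$, $\al_\emptyset\equiv 0$. For the size representation, Property 1 is by construction; Property 2 is $v(k)\le v(k+1)$ (monotonicity of $v$); Property 3 is $v(k)/k\le v(k-1)/(k-1)$ (Property 3 of $v$). For the set representation, each $\al_S$ is clearly additive and satisfies $\al_S(T)=\al_S(T\cap S)$ (Property 2 of \cref{frac:canon}) and uniform on $S$ (Property 3 of \cref{def:canon-set}), and $\al_S(S)=v(|S|)$ is symmetric (Property 2 of \cref{def:canon-set}). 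The main step is verifying Property 1 of \cref{frac:canon}, namely $\al_S(S)\ge \al_T(S)$ for all $T$: with $k=|S|$, $l=|T|$, $j=|S\cap T|$, we must show $v(k)\ge j\cdot v(l)/l$. When $l\ge k$, Property 3 of $v$ gives $v(l)/l\le v(k)/k$ and $j\le k$, so $j\cdot v(l)/l \le v(k)$. When $l<k$, $j\le l$ implies $j\cdot v(l)/l\le v(l)\le v(k)$ by monotonicity.

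\textbf{Size $\to$ function and round-trips.} Given a canonical size representation $\{\al_k\}$, define $v(k):=\al_k(k)$; the three function properties fall out directly from the three size properties (using $\al_k(k)=k\cdot\al_k(1)$ to translate Property 3 of size into Property 3 of function). All round-trips are straightforward: starting from $v$, going to the set representation and back recovers $\al_S(S)=v(|S|)$; starting from a canonical set representation, the uniformity property (Property 3 of \cref{def:canon-set}) forces $\al_S$ to be exactly $|S\cap T|\cdot v(|S|)/|S|$ after using Property 2 of \cref{frac:canon}, so the reconstruction is faithful; size round-trips are analogous.

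\textbf{Main obstacle.} The only nontrivial verification is Property 1 of \cref{frac:canon} when reconstructing $\{\al_S\}$ from $v$: one must show that the clause associated with $S$ genuinely dominates all other clauses on $S$. This is precisely where the concavity property (Property 3 of \cref{sXOS:main:prop}) is essential, and it is the step that makes the sXOS structure ``rigid enough'' for the size and function representations to carry the same information as the set representation. The rest of the proof is routine bookkeeping across the three property lists.
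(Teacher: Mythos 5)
Your proposal is correct and takes essentially the same approach as the paper. The paper goes through the cycle set $\to$ size $\to$ function $\to$ set, while you use the canonical function representation $v$ as a hub (set $\to$ function, function $\to$ size, function $\to$ set, size $\to$ function), but the content is the same: both hinge on the identical key step of verifying Property~1 of the XOS canonical representation when reconstructing $\{\al_S\}$ from $v$, handled by the same case analysis ($|T|\ge|S|$ via the concavity property, $|T|<|S|$ via monotonicity). Your explicit attention to round-trips (showing the constructions are mutual inverses) is slightly more careful than the paper, which leaves this implicit, but this is routine bookkeeping as you note.
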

\begin{proof}
    We show how to construct a \emph{size} representation given a \emph{set} representation, a \emph{function} representation given a \emph{size} representation, and finally a \emph{set} representation from a \emph{function} representation. Since we already showed that the family of \emph{set} representations are equivalent to the family of sXOS functions (see \cref{lem-canon-equiv}), we conclude that these are all equivalent.
    
    \begin{itemize}
    \item \textbf{\emph{Set} representation to \emph{size} representation:}
    Given a set representation $\{\al_S\}_{S\subseteq[n]}$ we construct the size representation $\{\be_k\}_{k\in [n]}$ as follows:
    $\be_k(m)=\al_S(T)$ for some two set $S, T$ of sized $|S|=k, |T|=m$ where $S\subseteq T$ if $k\le m$ and $T\subseteq S$ otherwise.
    Properties 1 and 2 are clearly satisfied. As for the third property, let $|S|=k$ and $i\in S$. Then,
    \begin{align*}
        \al_k(1)&=\frac{1}{k}\cdot \al_k(k)=\frac{1}{|S|}\cdot \be_S(S)
        \\&\le \frac{1}{|S|-1}\cdot \be_{S\setminus i}(S\setminus i)
        \\&=\frac{1}{k-1}\cdot \be_{k-1}(k-1)
        =\frac{k-1}{k-1}\cdot \be_{k-1}(1)=\frac{1}{k-1}\cdot \be_{k-1}(1)
    \end{align*}
    
    \item \textbf{\emph{Size} representation to \emph{function} representation:}
    Given a size representation $\{\be_k\}_{k\in [n]}$, let $v$ be a function such that $v(k)=\be_k(k)$ for every $k\in [n]$.
    Properties 1 and 2 are again immediate. For the third property, we have
    \begin{align*}
        \frac{1}{k}\cdot v(k)&=\frac{1}{k}\cdot \be_k(k)=\be_k(1)
        \\&\le \be_{k-1}(1)
        \\&=\frac{1}{k-1}\cdot \be_{k-1}(k-1)
        =\frac{1}{k-1}\cdot v(k-1) 
    \end{align*}

    \item \textbf{\emph{Function} representation to \emph{set} representation:}
    Given a function representation $v:[n]\to\mathbb{R}_+$ let us construct the set representation $\{\al_T\}_{T\subseteq [n]}$ such that $\al_T(S)=\frac{|T\cap S|}{|T|}v(|T|)$.
    We need to show that $\{\al_T\}_{T\subseteq[n]}$ is canonical, uniform, and symmetric.
    Showing uniformity is straightforward, and symmetry is by definition. We show that $\{\al_T\}_{T\subseteq[n]}$ is canonical. Specifically, we show the first property, that $\max_T \al_T(S)=\al_S(S)$.
    
    If $|S|\ge |T|$:
    \begin{align*}
        \al_T(S)&=\frac{|T\cap S|}{|T|}\cdot v(|T|)
        \\&\le \frac{|T|}{|T|}\cdot v(|T|)=v(|T|)
        \\&\le v(|T|+1)\le v(|T|+2)\le  ...\le v(|S|)
        \\&=\frac{|S|}{|S|}\cdot v(|S|)=\frac{|S\cap S|}{|S|}\cdot v(|S|)=\al_S(S).
    \end{align*}
    
    If $|S|\le |T|$:
    \begin{align*}
        \al_T(S)&=\frac{|T\cap S|}{|T|}\cdot v(|T|)
        \\&\le \frac{|S|}{|T|}\cdot v(|T|) = |S|\cdot \frac{v(T)}{|T|}
        \\&\le |S|\cdot \frac{v(|T|-1)}{|T|-1} \le |S| \cdot \frac{v(T-2)}{|T|-2}\le ... \le |S|\cdot \frac{v(|S|)}{|S|}
        \\&=\frac{|S\cap S|}{|S|}\cdot v(|S|)=\al_S(S).
    \end{align*}
    \end{itemize}
\end{proof}

\subsection{XOS and sXOS: Additional Properties}
\label{sub:XOS-sXOS}

In this subsection, we prove some additional properties of sXOS functions. 
These properties are used throughout the paper.

\begin{observation}[Marginals property] \label{sXOS:marginals}
    For any sXOS function $f$ and any positive number $k$, it holds that $m(k)\le \frac{1}{k}f(k)$.
\end{observation}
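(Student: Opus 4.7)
The plan is to reduce the claim directly to Property 3 of the canonical function representation of sXOS functions, established in \cref{theroem:sXOS}. Since $f$ is sXOS and symmetric, we may identify $f$ with its canonical function representation $v:[n]\to\mathbb{R}_+$ so that $f(k)=v(k)$, and $v$ satisfies $v(0)=0$, monotonicity, and the key inequality $\tfrac{1}{k}v(k)\le \tfrac{1}{k-1}v(k-1)$ for every $k>1$.

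For the base case $k=1$, we have $m(1)=f(1)-f(0)=f(1)=\tfrac{1}{1}f(1)$, so the inequality holds with equality. For $k>1$, I would multiply Property 3 through by $(k-1)$ to obtain $v(k-1)\ge \tfrac{k-1}{k}v(k)$, and then rearrange:
\[
m(k) \;=\; v(k)-v(k-1) \;\le\; v(k)-\tfrac{k-1}{k}v(k) \;=\; \tfrac{1}{k}v(k) \;=\; \tfrac{1}{k}f(k).
\]
This completes the argument for all positive $k$.

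Essentially no obstacle arises here: once the canonical function representation is in hand, the marginals property is algebraically equivalent to Property 3. The only care needed is to make sure the base case $k=1$ is handled (since Property 3 is only stated for $k>1$), which follows trivially from $f(\emptyset)=0$. One could alternatively derive the property directly from the canonical set representation by picking a maximizing additive clause $\alpha_{[k]}$ for $f(k)$ and noting that $\alpha_{[k]}$ assigns each agent value $\tfrac{1}{k}f(k)$, but routing through the function representation is cleanest.
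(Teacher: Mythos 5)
Your proof is correct and follows essentially the same route as the paper's: both invoke Property 3 of the sXOS canonical function representation, rewrite it as $f(k-1)\ge\frac{k-1}{k}f(k)$, and subtract to bound the marginal. Your explicit handling of the $k=1$ base case is a minor addition the paper elides.
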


\begin{proof}
    From the third property of sXOS, we have that
    $$
    f(k-1)\ge \frac{k-1}{k}\cdot f(k).
    $$
    Thus, it holds that 
    $$
    m(k)=f(k)-f(k-1)\le f(k)-\frac{k-1}{k}\cdot f(k)=\frac{1}{k}\cdot f(k).
    $$
\end{proof}

\begin{lemma}[Concavity property]\label{sXOS:concave}
        For every $a\in [0,1]$ 
        for which $a k\in [k]$ it holds that $a \cdot f(k)\le f(a\cdot k)$.
\end{lemma}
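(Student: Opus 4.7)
The plan is to derive the concavity property by iterating Property 3 of the sXOS canonical function representation from \cref{sXOS:main:prop}, which states that $\frac{1}{k} f(k) \le \frac{1}{k-1} f(k-1)$ for every $k > 1$ (recall that by \cref{theroem:sXOS}, an sXOS function can be identified with its canonical function representation, so $f$ satisfies this inequality).

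Concretely, set $j := ak$, which by hypothesis is a positive integer with $1 \le j \le k$. I would apply Property 3 repeatedly along the chain $k, k-1, \ldots, j+1, j$ to obtain
\[
\frac{f(k)}{k} \;\le\; \frac{f(k-1)}{k-1} \;\le\; \cdots \;\le\; \frac{f(j+1)}{j+1} \;\le\; \frac{f(j)}{j}.
\]
Multiplying both sides of the resulting inequality $\frac{f(k)}{k} \le \frac{f(j)}{j}$ by $j = ak$ gives
\[
a \cdot f(k) \;=\; \frac{j}{k}\, f(k) \;\le\; f(j) \;=\; f(ak),
\]
which is exactly the claim. Formally the telescoping step is a trivial induction on $k - j$, with base case $j = k$ (trivial) and inductive step supplied by Property 3.

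There is no real obstacle here: the property is essentially a restatement of the monotonicity of $k \mapsto f(k)/k$, which is immediate from the canonical function representation of sXOS. The only care needed is to remember the interpretation of ``$ak \in [k]$'' as requiring $ak$ to be a positive integer, so that $f(ak)$ is well-defined and the telescoping sum terminates at the integer $j = ak$.
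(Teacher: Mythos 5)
Your proof is correct and follows essentially the same route as the paper: both iterate Property 3 of the sXOS canonical function representation (the monotonicity of $k \mapsto f(k)/k$) along the chain from $k$ down to $ak$, and then multiply through by $ak$. The only cosmetic difference is that the paper parameterizes the chain length by $m = k - ak$ while you set $j = ak$ directly.
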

\begin{proof}
    Let $m\in [k]$ such that $k-m=a\cdot k$. We have that
    \begin{alignat*}{2}
        \frac{1}{k}\cdot f(k)&\le \frac{1}{k-1}\cdot f(k-1)\le \frac{1}{k-2}\cdot f(k-2)\le \cdots &&\le \frac{1}{k-m}\cdot f(k-m)
        = \frac{1}{\aL \cdot k}\cdot f(\aL \cdot k).
    \end{alignat*}
\end{proof}

\paragraph{Extending these properties to general XOS.}
\label{extending:sXOS:prop}

Observe that the properties we have shown for sXOS can be weakly extended to the general XOS function.
Recall that the third property of canonical XOS is that for every set $S$, letting $k=|S|$, there exists a set $i\in S$ such that $\frac{1}{k}f(k)\le \frac{1}{k-1}f(S\setminus i)$. 
This is equivalent to $f(i:S)\le \frac{1}{|S\setminus i|}f(S\setminus i)$ and by the same argument as the sXOS marginal property, for this \emph{specific} agent $i$, it holds that $f(i:S)\le \frac{1}{|S|}f(S)$.
Similarly, again with a similar argument as shown for sXOS, we have that for every $\aL\in [0,1]$ such that $\aL \cdot k\in [n]$ there exists a subset $T\subseteq S$ of size $\aL$ such that $\aL \cdot f(T)\le f(S)$.

\section[Proof of Lemma~\ref{lemma:alg2}]{Proof of \cref{lemma:alg2}}
\label{section:XOS:part2}
\label{appendix:XOS}

In this section, we prove \cref{lemma:alg2}, which asserts the existence and properties of \cref{alg2} used in our XOS welfare and value approximation algorithms. The following notation will be used throughout the proof: 
\begin{equation}
    \psi(f, y):=\left(1-\frac{1}{M(f, y)}\right)\cdot \frac{A}{m}\cdot y=\frac{A}{m}\cdot y-x(f).\label{eq:psi}
\end{equation}

The proof of \cref{lemma:alg2} relies on two lemmas: The first is Lemma~\ref{alg1:usage} (see Section~\ref{appx:alg1-analysis}), which establishes the guarantees of \cref{alg1} utilized by \cref{alg2}. The second is Lemma~\ref{L:3.3} and Corollary~\ref{cor:y-kappa-bounded} (see Section~\ref{appx:k-bound}), where the latter establishes a $\kappa$-boundedness property guaranteed by the conditions of \cref{lemma:alg2}. In Section~\ref{appx:alg2-and-proof} we present \cref{alg2} itself and put everything together to complete the proof of \cref{lemma:alg2}.
Section~\ref{appendix:alg1} appears for completeness and contains \cref{alg1} from~\cite{multi-agents} and two useful results of~\cite{multi-agents} (Lemmas~\ref{lem:prop-of-XOS}-\ref{L:2.1}).

\subsection{Lemma~\ref{alg1:usage}: Guarantees of Algorithm~\ref{alg1}}
\label{appx:alg1-analysis}

\cref{alg1} returns a subset $U \subseteq [n]$ with value in the range of a given target value $\Psi$ while guaranteeing that the marginals for $U$ are not too small. 
The following \cref{alg1:usage} shows how to use \cref{alg1} to find a set $U$ satisfying the guarantees of \cref{lemma:alg2}. However, the validity requirements in \cref{alg1:usage} are not the same as the ones of \cref{lemma:alg2}. In Section~\ref{appx:alg2-and-proof} we show how the validity requirements in \cref{lemma:alg2} imply the validity requirements of \cref{alg1:usage}. 

To state the lemma we use the following definition:

\begin{definition}[$h$-minimal and welfare minimal set]\label{def:welfare-mininal}
    For any set function $h$ with domain $2^{[n]}$, set $S$ is \emph{$h$-minimal} if $h(S)\ge h([n])$, and $h(i:S)> 0$ for every $i\in S$. 
    Set $S$ is \emph{welfare minimal} if it is a $w$-minimal set for the welfare function~$w$.
    Set $S$ is an \emph{$h$-minimal subset of $T$} if $S$ is an $h_{|T}$-minimal set.
\end{definition}

Note that an $h$-minimal set can be computed in $O(n^2)$ time by iteratively removing agents with non-positive marginals.

\begin{lemma}
\label{alg1:usage}
    Let $f:2^{[n]}\to\mathbb{R}_+$ be an XOS function, and $c:2^{[n]}\to\mathbb{R}_+$ an additive cost function. 
    Let  $y\ge 0$, $m\ge 1$, $a\in (0, 1]$, and for every $i$, let $\tilde c_i=2\sqrt{c_i Ay}$. 
    Let $Z$ be a welfare minimal set with respect to $f$ and $\tilde c$. 
    If 
    $\psi(f,y) < f(Z)$, then \cref{alg1} called with $f_{|Z}$ and $\psi(f,y)$ will return a set $U$ that satisfies:
    \begin{enumerate}[nosep,noitemsep]
        \item $\frac{A}{m}\cdot y\ge f(U)\ge \left(1-\frac{1}{M(f,y)}\right)\cdot \frac{A}{2m}\cdot y$;
        \item $\rho(U)\le \frac{1}{m}$.
    \end{enumerate}
\end{lemma}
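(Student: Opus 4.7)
The plan is to invoke \cref{alg1} on the restricted instance $(f_{|Z}, \psi(f,y))$ and deduce both conclusions of the lemma from its output. The call is valid because the hypothesis $\psi(f,y) < f(Z) = f_{|Z}(Z)$ is precisely the feasibility condition on the target value for \cref{alg1} (and if $\psi(f,y)<0$ the conclusion is trivial with $U=\emptyset$). Throughout I will use two guarantees on the output $U\subseteq Z$: a value-window bound, and a marginal-preservation bound of the form $f(i:U)\ge \alpha\cdot f(i:Z)$ for every $i\in U$, where the constant $\alpha$ is inherited from the $a$-demand oracle and tied to $A=(a/4)^2$.

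For Property~(1) I unpack constants. The value window delivered by \cref{alg1} has the form $\bigl[\tfrac12\psi(f,y),\;\psi(f,y)+x(f_{|Z})\bigr]$. Using $\psi(f,y)=\tfrac{A}{m}y-x(f)$ together with the identity $\tfrac{1}{M(f,y)}\cdot\tfrac{A}{m}y = x(f)$ (and the trivial bound $x(f_{|Z})\le x(f)$), this window simplifies to exactly $\bigl[(1-\tfrac{1}{M(f,y)})\cdot\tfrac{A}{2m}y,\;\tfrac{A}{m}y\bigr]$, which is Property~(1).

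For Property~(2) I combine welfare-minimality of $Z$ with the marginal guarantee. Comparing $Z$ to $Z\setminus\{i\}$ under the welfare function $f-\tilde c$ and using additivity of $\tilde c$ yields $f(i:Z)\ge \tilde c_i = 2\sqrt{c_i A y}$ for every $i\in Z$, so $f(i:U)\ge \alpha\tilde c_i$ for every $i\in U$. Squaring and rearranging, using $c_i = \tilde c_i^2/(4Ay)$, gives
\[
\frac{c_i}{f(i:U)} \;\le\; \frac{f(i:U)}{4\alpha^2 A y}.
\]
Summing over $i\in U$ and applying the XOS marginal-sum inequality $\sum_{i\in U} f(i:U)\le f(U)$ (an immediate consequence of the canonical representation of \cref{lem:canon-rep-properties} applied to $f_{|Z}$), together with the upper bound $f(U)\le\tfrac{A}{m}y$ already established in Property~(1), yields $\rho(U)\le 1/(4m\alpha^2)$, which is at most $1/m$ provided $\alpha\ge 1/2$.

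The main obstacle is therefore verifying that the marginal-preservation constant $\alpha$ from \cref{alg1} satisfies $\alpha\ge 1/2$ under the instantiation of the $a$-demand oracle and the cost-scaling $\tilde c_i = 2\sqrt{c_iAy}$. Everything else is constant-chasing through the guarantees of \cref{alg1} already catalogued in \cref{lem:prop-of-XOS} and Lemma~\ref{L:2.1}; in particular, the cost-scaling is engineered precisely so that welfare-minimality of $Z$ converts, via a single AM-GM-style squaring step, into the required $\rho(U)$ bound without further structural input.
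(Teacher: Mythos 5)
Your proof is correct and follows essentially the same route as the paper's: invoke \cref{alg1} on $(f_{|Z},\psi(f,y))$, read off the value-window and marginal-preservation guarantees, then combine welfare-minimality of $Z$ (which gives $f(i:Z)\ge\tilde c_i$) with a squaring step and the XOS sum-of-marginals inequality to bound $\rho(U)$. Your Property~(1) unpacking and your chain $\frac{c_i}{f(i:U)}\le\frac{f(i:U)}{4\alpha^2Ay}$ summed against $f(U)\le\frac{A}{m}y$ is algebraically equivalent to the paper's $f(i:U)\ge\sqrt{c_imf(U)}$ route.

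One point worth correcting: you present the step ``verify $\alpha\ge1/2$'' as the main remaining obstacle and claim $\alpha$ is ``inherited from the $a$-demand oracle and tied to $A=(a/4)^2$.'' That is a misattribution. The marginal-preservation constant of \cref{alg1} is exactly $\tfrac12$ and is a fixed, oracle-independent guarantee stated in \cref{lem:prop-of-XOS} ($f(i:U)\ge\tfrac12 f(i:[n])$, applied here to the ground set $Z$ of $f_{|Z}$); it has nothing to do with the $a$-demand oracle, which only enters through how $Z$ is constructed in \cref{alg2}. The constant $A=(a/4)^2$ enters solely through the cost-scaling $\tilde c_i=2\sqrt{c_iAy}$ and the definition of $\psi(f,y)$. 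So there is no ``obstacle'' to clear: $\alpha=\tfrac12$ is simply given, and your bound $\rho(U)\le\frac{1}{4m\alpha^2}=\frac1m$ holds with equality in the $\alpha$-factor. The rest of your argument is sound.
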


\begin{proof}
Note that $x(f_{|Z})\le x(f)$. By definition of $\psi(f,y)$, the set $U\subseteq Z$ returned by Algorithm 1 has the following three properties:
\begin{enumerate}[nosep,noitemsep]
    \item $f(U)\ge \frac{1}{2}\psi(f, y) = (1-\frac{1}{M(f, y)})\cdot\frac{A}{2m}\cdot y$
    \item $f(U)\le \psi(f,y) + x(f_{|Z}) = \frac{A}{m}\cdot y-x(f) + x(f_{|Z}) \le \frac{A}{m}\cdot y$
    \item $f(i:U)\ge \frac{1}{2}f(i:Z)$ for every agent $i\in U$
\end{enumerate}
The first guarantee follows from the first and second property.
As for the second property, notice that 
\[
    \tilde c_i \; = \; 2\sqrt{c_i Ay} \; = \; 2\sqrt{c_im\frac{A}{m}y} \; \ge \; 2\sqrt{c_im f(U)}
\]
where the last inequality is from the second property. Thus
\[
    f(i:U) \; \ge \; \frac{1}{2}f(i:Z) 
    \; \ge \; \frac{1}{2}\tilde{c}_i 
    \; \ge \; \sqrt{c_i m f(U)}
\]
where the first inequality is the third property of $U$. The second inequality holds since $Z$ is a welfare minimal set of $f$ and $\tilde c$, and since $\tilde c$ is additive. By squaring both sides and rearranging, we obtain
\[
    \frac{f(i:U)}{m f(U)} \; \ge \; \frac{c_i}{f(i:U)}\enspace.
\]
Using the sum of marginals property, this leads to
\[
    \rho(U) 
    \; = \; \sum_{i\in U}\frac{c_i}{f(i:U)}
    \; \le \; \frac{1}{m f(U)}\sum_{i\in U}f(i:U)
    \; \le \; \frac{1}{m}\enspace. \qedhere
\]
\end{proof}

\subsection[Corollary~\ref{cor:y-kappa-bounded}: Kappa-Boundedness]{Corollary~\ref{cor:y-kappa-bounded}: $\kappa$-Boundedness}
\label{appx:k-bound}

In this section we define and discuss \emph{$\kappa$-boundedness}. Our motivation is as follows.
In order to use \cref{alg1:usage} to obtain a set $U$ with the desired properties, we need to choose $y$ and $Z$ such that $\left(f_{|Z}, \psi(f, y)\right)$ is a valid input for Algorithm~1, i.e., such that 
$0\le \psi(f, y)\le f(Z)$.
The non-negativity of $\psi(f, y)$ holds, since for this it is sufficient that $M(f, y) \ge 1$ (see Equation~\eqref{eq:psi}), which holds by Condition~(3) of Lemma~\ref{lemma:alg2}. We thus seek sufficient conditions for the following inequality:
\begin{equation}
    \psi(f, y)\le f(Z).\label{eq:innocent}
\end{equation}
We develop such conditions in \cref{N:3.6} (Section~\ref{appx:alg2-and-proof}), which rely on $\kappa$-boundedness.

\begin{definition}[$\kappa$-boundedness]
\label{def:k-bounded}
    Consider an $n$-agent contract setting $(f,c)$.
    For a given $\kappa \ge 0$, a value $y$ is $\kappa$-bounded with respect to $(f,c)$ if there exists a set $S\subseteq [n]$ such that (1)~$y\le f(S)$, and (2)~$\,\frac{\sum_{i\in S}\sqrt{c_i} }{\sqrt {f(S)}}\le \kappa$. 
    In this case we say that $y$ is $\kappa$-bounded by $S$.
\end{definition}

The next lemma shows a sufficient condition for a set to fulfill the second condition of $\kappa$-boundedness:
there exists an XOS function $h$ and a superset $T$ of $S$ for which $S$'s share in $\rho_T^{(h)}$ is upper bounded by $\kappa^2$. 
Note that $T$ can even be a superset of $[n]$ since the cost of any agent $i\notin S$ is not used.

\begin{lemma} \label{L:3.3}
    Let $(f,c)$ be an $n$-agent contract setting with an XOS value function and let $S\subseteq[n]$.
    If there exists an XOS domain extension $h$ of $f_{|S}$ with a domain $2^T$ such that $\rho_T^{(h)}(S)\le \kappa^2$, 
    then $\frac{\sum_{i\in S}\sqrt{c_i}}{\sqrt{f(S)}}\le \kappa$.
\end{lemma}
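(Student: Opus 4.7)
\textbf{Proof plan for Lemma~\ref{L:3.3}.} The plan is to combine the Cauchy--Schwarz inequality with a bound on the sum of marginals $\sum_{i\in S} h(i:T)$ obtained from the XOS canonical representation of $h$. Writing
\[
\sum_{i\in S}\sqrt{c_i} \;=\; \sum_{i\in S}\sqrt{\tfrac{c_i}{h(i:T)}}\cdot \sqrt{h(i:T)},
\]
Cauchy--Schwarz yields
\[
\Bigl(\sum_{i\in S}\sqrt{c_i}\Bigr)^{2} \;\le\; \Bigl(\sum_{i\in S}\tfrac{c_i}{h(i:T)}\Bigr) \cdot \Bigl(\sum_{i\in S} h(i:T)\Bigr) \;=\; \rho_T^{(h)}(S) \cdot \sum_{i\in S} h(i:T).
\]
Since $\rho_T^{(h)}(S)\le \kappa^2$ by assumption, it suffices to establish the inequality $\sum_{i\in S} h(i:T) \le f(S)$, at which point taking square roots gives the conclusion.

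The key step, then, is showing $\sum_{i\in S} h(i:T) \le f(S)$. I would invoke the XOS canonical representation of $h$ on $2^T$ (\cref{frac:canon}): let $\{\alpha_R\}_{R\subseteq T}$ be the canonical family, so that $h(T)=\alpha_T(T)=\sum_{j\in T}\alpha_T(j)$. For any $i\in T$, additivity of $\alpha_T$ combined with Property~1 of \cref{frac:canon} gives
\[
h(T\setminus i) \;\ge\; \alpha_T(T\setminus i) \;=\; \alpha_T(T)-\alpha_T(i) \;=\; h(T)-\alpha_T(i),
\]
so $h(i:T)=h(T)-h(T\setminus i)\le \alpha_T(i)$. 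Summing over $i\in S$ and using additivity again,
\[
\sum_{i\in S} h(i:T) \;\le\; \sum_{i\in S}\alpha_T(i) \;=\; \alpha_T(S) \;\le\; h(S),
\]
where the last inequality follows from Property~1 of \cref{frac:canon} applied at $S$ (every $\alpha_R$ lies pointwise below the pointwise maximum $h$). Finally, because $h$ is a domain extension of $f_{|S}$, we have $h(S)=f(S)$, which closes the chain.

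Putting these two observations together yields $(\sum_{i\in S}\sqrt{c_i})^{2}\le \kappa^{2} f(S)$ and therefore the desired bound. I expect the main subtlety to be the marginal bound $h(i:T)\le \alpha_T(i)$: one must be careful to use the canonical representative $\alpha_T$ at the \emph{superset} $T$ (not at $T\setminus i$), and to exploit that $h$ dominates every additive function in the representation in order to pull the bound from $\alpha_T(S)$ down to $h(S)=f(S)$. Everything else is a mechanical application of Cauchy--Schwarz.
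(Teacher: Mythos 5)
Your proof is correct and follows essentially the same route as the paper: a Cauchy--Schwarz split of $\sum_{i\in S}\sqrt{c_i}$ against the shares $\rho_T^{(h)}(i)=c_i/h(i:T)$, followed by the XOS ``sum of marginals'' bound $\sum_{i\in S}h(i:T)\le h(S)=f(S)$. The only cosmetic difference is that the paper cites this last bound as a standalone lemma (\cref{L:2.1}), whereas you re-derive it inline from the canonical XOS representation --- and your inline derivation matches the paper's own proof of that lemma.
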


\begin{proof}
\begin{align*}
    \left(\; \sum_{i\in S}\sqrt{c_i} \;\right)^2
    &=\left(\; \sum_{i\in S}\sqrt{\frac{c_i}{f(i:T)}}\cdot\sqrt{f(i:T)} \;\right)^2 
    = \left(\; \sum_{i\in S}\sqrt{\rho_T^{(h)}(i)}\cdot\sqrt{f(i:T)} \;\right)^2
    \\&\le \left(\; {\sum_{i\in S}\rho^{(h)}_T(i)} \right)\left(\; \sum_{i\in S}f(i:T) \;\right) 
    = \rho^{(h)}_T(S)\cdot\sum_{i\in S}f(i:T) 
    \\&\le \rho_T^{(h)}(S)\cdot f(S),
\end{align*}
where the first inequality follows from Cauchy–Schwartz, and the second inequality is from the sum of marginals property of XOS functions (see \cref{L:2.1}).
\end{proof}

\begin{corollary} 
    \label{cor:y-kappa-bounded}
    Let $(f,c,y,m)$ be an input which is \emph{valid} according to the conditions in Lemma~\ref{lemma:alg2}. Then $y$ is $\kappa$-bounded with respect to $(f,c)$ for $\kappa=2-\frac{1}{8m}$.
\end{corollary}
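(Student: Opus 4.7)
The plan is to prove Corollary~\ref{cor:y-kappa-bounded} by directly combining the validity hypotheses with Lemma~\ref{L:3.3}. Set $\kappa := 2 - \frac{1}{8m}$, so that the bound $\rho_T^{(h)}(S) \le \left(2 - \frac{1}{8m}\right)^2$ from Condition~(2) of Lemma~\ref{lemma:alg2} becomes exactly $\rho_T^{(h)}(S) \le \kappa^2$. This alignment between the squared share bound and the square in the statement of Lemma~\ref{L:3.3} is the whole point.

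The actual argument is a one-line unpacking. By Condition~(1) of the validity requirements, there is a set $S \subseteq [n]$ with $y \le f(S)$; this already gives the first clause of $\kappa$-boundedness. By Condition~(2), there is an XOS domain extension $h$ of $f_{|S}$ to some domain $2^T$ with $\rho_T^{(h)}(S) \le \kappa^2$, so the hypothesis of Lemma~\ref{L:3.3} is satisfied. Applying Lemma~\ref{L:3.3} with this $h$, $T$, and $\kappa$ yields
\[
    \frac{\sum_{i \in S} \sqrt{c_i}}{\sqrt{f(S)}} \le \kappa,
\]
which is precisely the second clause of Definition~\ref{def:k-bounded} for the same witness $S$. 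Hence $y$ is $\kappa$-bounded by $S$ with $\kappa = 2 - \frac{1}{8m}$, as claimed.

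There is essentially no obstacle here; Condition~(3) ($M(f,y) \ge 1$) is not needed for this corollary (it is used elsewhere to ensure $\psi(f,y) \ge 0$ in the call to Algorithm~\ref{alg1}). The only thing worth flagging is that the witness $S$ used for $\kappa$-boundedness is the same $S$ appearing in both validity Conditions~(1) and~(2); this is implicit in the statement of Lemma~\ref{lemma:alg2}, where ``the set $S$'' in Condition~(2) refers back to the set from Condition~(1). With that identification the proof is immediate.
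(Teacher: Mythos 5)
Your proof is correct and matches the paper's argument: both apply Condition~(2) together with Lemma~\ref{L:3.3} to obtain the share-ratio bound for the witness set $S$, and then invoke Condition~(1) for $y \le f(S)$. The extra remarks you add (that Condition~(3) is unused here, and that the $S$ in Conditions~(1) and~(2) is the same set) are accurate observations but do not change the substance of the argument.
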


\begin{proof}
    By Condition~(2) on the input (see Lemma~\ref{lemma:alg2}), and by applying \cref{L:3.3}, there exists a set $S$ such that the second requirement of $\kappa$-boundedness (see Definition~\ref{def:k-bounded}) holds for $S$ and $\kappa=2-\frac{1}{8m}$. Since by Condition~(1) on the input, $y\le f(S)$, the first requirement of $\kappa$-boundedness holds as well, and we conclude that $y$ is $\kappa$-bounded.
\end{proof}

\subsection{Algorithm~\ref{alg2} and Putting Everything Together}
\label{appx:alg2-and-proof}

\setcounter{algocf}{1}

\begin{algorithm}[ht]
\DontPrintSemicolon
\caption{Routine to attempt a valid call to Algorithm~1}
\label{alg2}
\KwInput{An XOS function $f:2^{[n]}\to\mathbb{R}_+$, an additive cost function $c:2^{[n]}\to\mathbb{R}_+$, a value $y>0$, $m\ge 1$, and an $a$-demand oracle.}
\KwOutput{Either $U = \emptyset$, or $U$ satisfying the properties of \cref{alg1:usage}}
\medskip
Let $\tilde{c}$ be an additive function such that $\tilde{c}_i = \sqrt{c_i m A y}$\;
Let $D$ be an $a$-demand set for $(f,\tilde c)$, and $Z$ be a welfare minimal subset of $D$.\;
\bfIf\ \emph{$0\le \psi(f, y)< f(Z)$} \bfThen\ \Return output of Algorithm 1 on $\left(f_{|Z}, \psi(f,y)\right)$ \bfElse\ \Return\ $\emptyset$
\end{algorithm}

Consider~\cref{alg2}.
Observe that if a set $S$ is an $a$-demand set, then so is any of its welfare-minimal subsets:
By definition, a set $S$ is $a$-demand if for any other set $T$ it holds that $w(S)\ge a\cdot f(T)-c(T)$. Let $S'$ be a minimal welfare set of $S$, then $w(S')\ge w(S)$; thus, it is also an $a$-demand set.
For this reason we get that $Z$ is an $a$-demand set with respect to $f$ and $\tilde c$.

The next lemma shows that since $Z$ is an $a$-demand set, then if $y$ is $\kappa$-bounded (with $\kappa$ sufficiently small), Equation \eqref{eq:innocent} holds and thus the guarantees of \cref{alg1:usage} apply to the output of \cref{alg1} called from \cref{alg2}. This lemma is then sufficient to complete the proof of \cref{lemma:alg2}, by applying~\cref{cor:y-kappa-bounded} to show that $y$ is $\kappa$-bounded.

\begin{lemma} \label{N:3.6}
    Let f be an XOS function and $\tilde c$ be as defined in \cref{alg1:usage}.
    Let $Z$ be an $a$-demand set of $f$ and $\tilde c$.
    If $y$ is $\kappa$-bounded with respect to $f$ and $c$ with $\kappa\le \left(2-\frac{1}{8m}\right)$, then $\psi(f, y)\le f(Z)$.
\end{lemma}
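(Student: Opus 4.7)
The strategy is to exploit the $a$-demand optimality of $Z$ against the specific set $S$ that witnesses $\kappa$-boundedness of $y$. Since $Z$ is an $a$-demand set for $(f, \tilde c)$, applying the demand inequality with the single alternative set $T = S$ and using $\tilde c(Z) \ge 0$ yields
\[
    f(Z) \;\ge\; f(Z) - \tilde c(Z) \;\ge\; a\, f(S) - \tilde c(S).
\]
Recalling $\tilde c_i = 2\sqrt{c_i A y}$ from \cref{alg1:usage}, we expand $\tilde c(S) = 2\sqrt{Ay}\sum_{i\in S}\sqrt{c_i}$ and invoke $\kappa$-boundedness (the very bound made explicit in \cref{L:3.3}), which gives $\sum_{i\in S}\sqrt{c_i} \le \kappa\sqrt{f(S)}$. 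Combining these inequalities,
\[
    f(Z) \;\ge\; a\, f(S) \;-\; 2\kappa\sqrt{A y\cdot f(S)}.
\]

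The second step is to minimize the right-hand side over the feasible range $f(S) \ge y$. Writing $X := \sqrt{f(S)}$ turns the bound into the quadratic $q(X) = a X^2 - 2\kappa\sqrt{Ay}\, X$, whose unconstrained minimizer sits at $X^\star = \kappa\sqrt{Ay}/a = (\kappa/4)\sqrt{y}$ after substituting $\sqrt A = a/4$. Since $\kappa \le 2 - 1/(8m) < 4$, the minimizer $X^\star$ is strictly below $\sqrt{y}$, so on the range $X \ge \sqrt{y}$ the quadratic $q$ is monotone increasing; hence its minimum is attained at $X = \sqrt{y}$. This yields the clean lower bound
\[
    f(Z) \;\ge\; a y \;-\; 2\kappa y\sqrt{A} \;=\; a y\left(1 - \tfrac{\kappa}{2}\right),
\]
using $2\sqrt{A} = a/2$.

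The final step is pure arithmetic: verify $ay(1 - \kappa/2) \ge \psi(f,y) = (A/m)y - x(f)$. The assumed bound $\kappa \le 2 - 1/(8m)$ is calibrated precisely so that $1 - \kappa/2 \ge 1/(16m)$, giving $ay(1 - \kappa/2) \ge ay/(16m)$. Since $a \in (0,1]$ we have $a \ge a^2$, so $ay/(16m) \ge a^2 y/(16m) = (A/m) y$, and combining with $x(f) \ge 0$ produces the chain
\[
    f(Z) \;\ge\; a y\left(1 - \tfrac{\kappa}{2}\right) \;\ge\; \tfrac{ay}{16m} \;\ge\; \tfrac{A}{m} y \;\ge\; \tfrac{A}{m} y - x(f) \;=\; \psi(f,y),
\]
which closes the proof. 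The main obstacle is exactly this tight coupling in the last step: the $1/(8m)$ slack in the assumed bound on $\kappa$ is precisely what absorbs the $A/m$ term in $\psi(f,y)$, so constants must be tracked carefully, but no additional structural ingredient beyond the $a$-demand inequality, \cref{L:3.3}, and the single-variable quadratic minimization is needed.
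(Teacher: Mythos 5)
Your proof is correct and follows essentially the same route as the paper: apply the $a$-demand inequality of $Z$ against the witness set $S$ of $\kappa$-boundedness, invoke the $\sum_{i\in S}\sqrt{c_i}\le\kappa\sqrt{f(S)}$ bound, then use $f(S)\ge y$ and the constraint $\kappa\le 2-\frac{1}{8m}$ for the final arithmetic. The only cosmetic difference is that you package the step $\sqrt{y\,f(S)}\le f(S)$ (and the subsequent replacement of $f(S)$ by $y$) as a one-shot minimization of the quadratic in $X=\sqrt{f(S)}$ over $X\ge\sqrt{y}$, whereas the paper applies the two inequalities directly in sequence; both are sound and lead to the same bound $f(Z)\ge a\left(1-\tfrac{\kappa}{2}\right)y\ge\tfrac{a}{16m}\,y\ge\tfrac{A}{m}\,y-x(f)=\psi(f,y)$.
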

\begin{proof}
   Recall that $A=(a/4)^2$, and thus $\tilde c_i=2\sqrt{c_iAy}=\frac{a}{2}\sqrt{c_i y}$. 
    Let $S$ be such that $y$ is $\kappa$-bounded by $S$, then

    \begin{align*}
        f(T)&\ge f(T)-\tilde{c}(T) 
        \\&\ge \aL f(S)-\tilde{c}(S)
          = \aL f(S) - \frac{\aL}{2}\sqrt{y}\cdot\sum_{i\in S} \sqrt{c_i} && \text{($T$ is an $a$-demand set for $\tilde{c}$)}
        \\&\ge \aL f(S) - \frac{\aL}{2}\sqrt{y}\cdot \kappa \cdot \sqrt{f(S)} && \text{($\kappa$-boundedness)}
        \\&\ge \aL f(S)-\aL \cdot \frac{\kappa}{2}\cdot f(S) = \aL \left(1-\frac{\kappa}{2}\right)f(S)
        \\&\ge \; \frac{\aL}{16m} \cdot f(S) \;
          \ge \; \frac{\aL}{16m}\cdot y \; > \; \Psi && \text{($\kappa \le \left(2- \frac{1}{8m}\right)$, and $\aL \le 1$)} \qedhere
    \end{align*}
    
\end{proof}

We are ready to prove \cref{lemma:alg2}.

\begin{proof}[Proof of \cref{lemma:alg2}]  
    From \cref{cor:y-kappa-bounded}, we know that $y$ is $\kappa$-bounded with $\kappa\le \left(2-\frac{1}{8m}\right)$.
    Thus from \cref{N:3.6}, we know that since $Z$ is an $a$-demand set, Equation~\eqref{eq:innocent} holds.
    Since it also holds that $M(f, y)\ge 1$, we have that $\psi(f, y)\ge 0$. This shows that $(f_{|Z}, \psi(f, y))$ is a valid input to \cref{alg1}.
    Thus, by \cref{alg1:usage}, when running \cref{alg1} with the input $(f_{|Z}, \psi(f, y))$, the output $U$ satisfies $\rho(U)\le \frac{1}{m}$ and $\frac{A}{m}\cdot y \ge f(U)\ge \left(1-\frac{1}{M(f, y)}\right)\cdot\frac{A}{2m}\cdot y$, as claimed.
\end{proof}

\subsection{For Completeness: Algorithm \ref{alg1} and 
Lemmas~\ref{lem:prop-of-XOS}-\ref{L:2.1}}
\label{appendix:alg1}
 
In this section we state for completeness two useful properties of XOS functions shown in \cite{multi-agents}.

\begin{lemma}[Scaling sets property {\cite[Lem.~3.5]{multi-agents}}]
\label{lem:prop-of-XOS}
    For every XOS function $f:2^{[n]}\to \mathbb R_+$ and target value $\Psi \in [0, f([n])]$, there exists a set $U$ which satisfies:
    \begin{itemize}[nosep,noitemsep]
        \item $\frac{1}{2}\Psi \leq f(U) \leq \Psi + x(f)$;
        \item $f(i : U) \geq \frac{1}{2} f(i : [n]) \;\, \text{for all } i \in U$.
    \end{itemize}
    Additionally, there exists an efficient algorithm that finds the set $U$ given value oracle access to $f$.
\end{lemma}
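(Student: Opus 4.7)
The plan is to reduce the problem to an additive approximation of $f$ using the XOS canonical representation (\cref{lem-canon-equiv}), and then carry out a greedy prefix construction combined with a cleanup pass. First, I would fix the canonical representation $\{\alpha_S\}_{S\subseteq[n]}$ and set $\alpha := \alpha_{[n]}$, so that $\alpha$ is additive with $\alpha([n])=f([n])$ and $\alpha(T)\le f(T)$ for every $T\subseteq[n]$. The single-agent weights $\alpha_i := \alpha(\{i\})$ then satisfy $f(i:[n])\le \alpha_i\le f(\{i\})\le x(f)$: the first inequality uses additivity of $\alpha$ together with $\alpha([n]\setminus\{i\})\le f([n]\setminus\{i\})$, and the second comes from the canonical property $f(\{i\})=\alpha_{\{i\}}(\{i\})\ge \alpha_{[n]}(\{i\})=\alpha_i$.

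Next, I would sort agents so that $\alpha_1\ge \alpha_2\ge\cdots\ge \alpha_n$ and consider the prefixes $U_k=\{1,\ldots,k\}$. The sequence $0=f(U_0),f(U_1),\ldots,f(U_n)=f([n])\ge \Psi$ is monotone, and by subadditivity of $f$ (inherited from XOS) it increases by at most $f(\{k{+}1\})\le x(f)$ between consecutive prefixes. Letting $k^\ast$ be the smallest index with $f(U_{k^\ast})\ge \tfrac12\Psi$, minimality combined with the bounded jump yields
\[
\tfrac12\Psi \;\le\; f(U_{k^\ast}) \;<\; \tfrac12\Psi + x(f) \;\le\; \Psi + x(f),
\]
which establishes the first bullet of the lemma.

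For the marginals property I would add a cleanup step: whenever some $i\in U$ violates $f(i:U)\ge \tfrac12 f(i:[n])$, remove $i$ from $U$ and repeat. Each such removal drops $f(U)$ by less than $\tfrac12 f(i:[n])\le \tfrac12\alpha_i$. Since the removed agents' $\alpha$-weights sum to at most $\alpha([n])=f([n])$, the cumulative drop across all cleanup steps is bounded by $\tfrac12 f([n])$, which I expect to charge against the additive slack $\alpha(U_{k^\ast})-f(U_{k^\ast})$ so as to preserve $f(U)\ge \tfrac12\Psi$. To make both parts of the lemma hold simultaneously, I would alternate: run the greedy prefix construction, then run cleanup; if cleanup removes agents and $f(U)$ slips below $\tfrac12\Psi$, extend $U$ by the next prefix elements and repeat. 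The main obstacle will be showing that this loop terminates while both the value bound and the marginals bound are maintained—this is where the careful amortization against $\sum \alpha_i \le f([n])$ is needed, and where the sort order by $\alpha_i$ (rather than by $f(i:[n])$) buys us the needed monotonicity.

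Finally, for the algorithmic claim, I would note that every quantity queried in the analysis is directly computable from the value oracle: the individual marginals $f(i:[n])=f([n])-f([n]\setminus\{i\})$ suffice to drive both the sort order and the cleanup check, while the abstract weights $\alpha_i$ are never explicitly needed—they serve only as an analytic device. Sorting takes $O(n)$ value queries, the greedy prefix uses $O(n)$ more, and the cleanup pass uses at most $O(n^2)$ queries overall, giving an efficient algorithm as required.
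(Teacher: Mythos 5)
You correctly flag the cleanup amortization as the main obstacle, and it is in fact where the argument breaks down as written. The bound ``cumulative drop at most $\tfrac12 f([n])$'' does not preserve $f(U)\ge \tfrac12\Psi$: $f([n])$ can dwarf $\Psi$, so the drop can wipe out everything. The quantity $\alpha(U_{k^\ast})-f(U_{k^\ast})$ you hope to charge against is not a slack at all; by the canonical-representation property $\alpha_{[n]}(T)\le \alpha_T(T)=f(T)$ for every $T$, so it is non-positive. And the proposed alternating loop (re-extend, re-clean) carries no termination or value guarantee.

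The skeleton is repairable, and more simply than your alternating scheme. Target $\Psi$ rather than $\tfrac12\Psi$ in the prefix phase: let $k^\ast$ be the smallest index with $f(U_{k^\ast})\ge \Psi$. Minimality plus subadditivity give $\Psi\le f(U_{k^\ast})< \Psi + x(f)$, with no sorting needed. Then run a single cleanup pass: while some $i\in U$ has $f(i:U)<\tfrac12 f(i:[n])$, remove it. Each removal drops $f$ by strictly less than $\tfrac12 f(i:[n])$, so if $R\subseteq U_{k^\ast}$ is the set of removed agents you maintain $f(U) > f(U_{k^\ast}) - \tfrac12\sum_{i\in R}f(i:[n])$. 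The tool you actually need is the XOS sum-of-marginals inequality (Lemma~\ref{L:2.1}): since $R\subseteq U_{k^\ast}\subseteq[n]$,
\begin{equation*}
\sum_{i\in R}f(i:[n]) \;\le\; \sum_{i\in U_{k^\ast}} f(i:[n]) \;\le\; f(U_{k^\ast}),
\end{equation*}
hence $f(U) > \tfrac12 f(U_{k^\ast}) \ge \tfrac12\Psi$, while the upper bound is preserved because cleanup only decreases $f$. The pass terminates in at most $|U_{k^\ast}|$ steps, and the canonical weights $\alpha_i$ are never needed. Note that the paper itself does not prove this lemma; it cites \cite{multi-agents} and recalls Algorithm~\ref{alg1}, which goes in the opposite direction---top-down removal from an $f$-minimal superset by smallest current-to-initial marginal ratio, followed by selecting a prefix of that removal order. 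Your bottom-up greedy with a cleanup pass, once corrected as above, is a genuinely different and somewhat more elementary route to the same guarantee.
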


We refer to the algorithm guaranteed by \cref{lem:prop-of-XOS} as \cref{alg1}, and include its description for completeness. 

\setcounter{algocf}{0}
\begin{algorithm}[ht]
\caption{Adapted from \cite[Alg.~1]{multi-agents}}
\label{alg1}
\DontPrintSemicolon
\KwInput{Value oracle access to an XOS function $f : 2^{[n]} \rightarrow \mathbb{R}_+$, a parameter $0 \leq \Psi \le f([n])$}
\KwOutput{A set $U$ which satisfies the conditions of \cref{alg1:usage}}
\medskip

\bfIf{ $\Psi=f([n])$ \Return [n]}

\textbf{let} $T_0$ be an $f$-minimal set (Definition~\ref{def:welfare-mininal})

\For{$t = 1,\ldots,|T_0|$}{
$i_t \gets \arg \min_{i \in T_{t-1}} f(i:\,T_{t-1})/f(i:T_0)$ \;
$T_t \gets (T_{t-1}\setminus \{i_t\})$\;
$\delta_t \gets f(i_t:T_t)/f(i_t:T_0)$\;
}
$i^{(-)} \gets \min\{t : f(T_t) \leq \Psi\}$\;
$i^{(+)} \gets \min\{t : f(T_t) \leq \frac{1}{2}f(T_{i^{(-)}-1})\}$\;
$i^{\star} \gets \arg\max\{\delta_t: t = i^{(-)}, \ldots, i^{(+)}\}$\;

\Return $U = T_{i^{\star}-1}$
\end{algorithm}

\begin{lemma}[XOS sum of marginals {\cite[Lem~2.1]{multi-agents}}]\label{L:2.1}
    For any XOS function $f$ and any two sets $S\subseteq T$, it holds that $\sum_{i\in S}f(i:T)\le f(S)$.
\end{lemma}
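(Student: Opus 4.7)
The plan is to leverage the XOS representation $f(R) = \max_{j \in I} \beta_j(R)$ with additive clauses $\{\beta_j\}_{j \in I}$, or equivalently the canonical representation $\{\alpha_R\}_{R \subseteq [n]}$ of Definition~\ref{frac:canon}. Fix a clause $\beta^\star$ attaining the maximum at $T$, i.e., $f(T) = \beta^\star(T)$ (using the canonical representation one can simply take $\beta^\star = \alpha_T$). Since $S \subseteq T$, every $i \in S$ lies in $T$, so $f(i:T) = f(T) - f(T \setminus \{i\})$.

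The crux is the per-agent upper bound $f(i:T) \le \beta^\star(i)$. This follows because $\beta^\star$ is a clause of the XOS family, giving $f(T \setminus \{i\}) \ge \beta^\star(T \setminus \{i\})$, and by additivity of $\beta^\star$ we have $\beta^\star(T \setminus \{i\}) = \beta^\star(T) - \beta^\star(i) = f(T) - \beta^\star(i)$. Subtracting yields the claimed inequality.

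Summing over $i \in S$ and using additivity of $\beta^\star$ once more gives
\[
    \sum_{i \in S} f(i:T) \;\le\; \sum_{i \in S} \beta^\star(i) \;=\; \beta^\star(S).
\]
To close the argument, apply the XOS representation at the set $S$: since $\beta^\star$ is one of the clauses, $\beta^\star(S) \le \max_{j \in I} \beta_j(S) = f(S)$.

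There is no real obstacle here; the proof is essentially a double use of the XOS inequality $f(R) \ge \beta^\star(R)$, once at $R = T \setminus \{i\}$ to bound each marginal from above, and once at $R = S$ to convert the accumulated single-agent weights into $f(S)$. The only mild care is to use the \emph{same} maximizing clause $\beta^\star$ (chosen at $T$) throughout, which is what makes the per-agent bounds telescope cleanly via additivity.
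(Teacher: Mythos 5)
Your proof is correct and follows essentially the same route as the paper's: you fix a maximizing clause $\beta^\star$ at $T$ (the paper uses $\alpha_T$ from the canonical representation), bound each marginal by $f(i:T) \le \beta^\star(\{i\})$ via $f(T\setminus\{i\}) \ge \beta^\star(T\setminus\{i\})$, sum to get $\beta^\star(S)$, and finish with $\beta^\star(S) \le f(S)$. The only difference is that you work with a generic XOS clause rather than explicitly invoking the canonical-representation machinery, which is a purely presentational choice.
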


The following proof of \cref{L:2.1} uses the properties of the canonical representation for XOS functions (see \cref{frac:canon}).

\begin{proof}[Proof of \cref{L:2.1}]
    \begin{align*}
        \sum_{i\in S}f(i:T) &
        = \sum_{i\in S} (\al_T(T)-\al_{T\setminus \{i\}}(T\setminus \{i\}))
        \\ &\le \sum_{i\in S} (\al_T(T)-\al_T(T\setminus \{i\}))
        \\&= \sum_{i\in S} \al_T(i) 
        = \al_T(S)
        \le \al_S(S) = f(S),
    \end{align*}
    where the last inequality is since for every two sets $S$ and $T$ we have that $\al_S(S)\ge \al_T(S)$.
\end{proof}

\section{Unconstrained Welfare}
\label{appendix:subadditive:ub}

In this section, we analyze unconstrained welfare, i.e., $\max_{S\subseteq [n]} w(S)$ and site relation to utility. We show a simple $n$-approximation algorithm for utility with respect to the benchmark of unconstrained welfare for any subadditive function. We show that the approximation factor matches the gap between unconstrained welfare and utility even for additive functions, by establishing a tight lower bound of $n$ on this gap.

We first present a simple algorithm for subadditive functions $f$ that finds a set whose utility is an $n$-approximation of unconstrained welfare. Consider the agent $i$ which has the largest individual utility value $g(i)$ 
(note that $g(i) = w(i)$ for every single agent $i$). This single agent represents an $n$-approximation of $\max_{S\subseteq [n]} w(S)$, since by subadditivity, for the set $T$ maximizing $\max_{S\subseteq [n]} w(S)$ it holds that
    \begin{align*}
        w(T)\le \sum_{j\in T}w(j)\le |S|\cdot g(i)\le n \cdot g(i).
    \end{align*}
As a result, the ratio between optimal welfare and optimal utility is upper bounded by $n$. For symmetric subadditive instances with uniform costs, it even suffices to query a single value of $f(1)$ (to evaluate whether a single agent is feasible) to obtain an $n$-approximation. 

\begin{observation} \label{subadditive n-UB}
    For any subadditive value function, the ratio between unconstrained welfare and optimal utility is upper bounded by $n$.
    Moreover, for any instance such that the unconstrained welfare is upper bounded by $M\cdot OPT(w)$ for some $M\ge 1$, the ratio between the optimal welfare and optimal utility is upper bounded by $\frac{n}{M}$.
\end{observation}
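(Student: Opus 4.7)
The plan is to exhibit a single agent whose utility already captures a $1/n$ fraction of the unconstrained welfare, which yields the first bound directly, and then to obtain the moreover part as an immediate division. Fix $i^\star \in \arg\max_{i\in[n]} g(i)$. Because the model assumes $c_i \le f(i)$ for every agent, every singleton is feasible (since $\rho(i)=c_i/f(i)\le 1$), and its principal utility simplifies to $g(i)=(1-\rho(i))f(i)=f(i)-c_i=w(i)$. Hence $OPT(g)\ge g(i^\star)=\max_{i\in[n]} w(i)$.

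Next, I will invoke subadditivity of $w$, which follows from subadditivity of $f$ combined with additivity of $c$. For any $T\subseteq [n]$, the displayed computation in the excerpt gives
\[
w(T) \;\le\; \sum_{j\in T} w(j) \;\le\; |T|\cdot g(i^\star) \;\le\; n\cdot OPT(g).
\]
Taking the supremum of $w(T)$ over all $T\subseteq[n]$ shows that the unconstrained welfare is at most $n\cdot OPT(g)$, establishing the first claim. The moreover part follows by substituting the hypothesis relating the unconstrained welfare and $M\cdot OPT(w)$ into the inequality $n\cdot OPT(g)\ge \max_{T\subseteq[n]} w(T)$, which rearranges to $OPT(w)/OPT(g)\le n/M$.

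There is no real technical obstacle here: the entire argument reduces to a one-line subadditivity computation together with the identity $g(i)=w(i)$ for singletons. The only point requiring care is the appeal to the model's standing assumption $c_i\le f(i)$, which guarantees that the singleton $\{i^\star\}$ is actually feasible and thus that $g(i^\star)$ lower-bounds $OPT(g)$; without this assumption the best single-agent utility could be zero and the approximation argument would collapse.
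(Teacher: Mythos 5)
Your argument matches the paper's essentially step for step: pick the best singleton $i^\star$, note that $g(i^\star)=w(i^\star)$ and that singletons are always feasible under the standing assumption $c_i\le f(i)$, then decompose $w(T)$ into singleton welfares via subadditivity of $f$ (combined with additivity of $c$, which makes $w$ subadditive over disjoint pieces such as singletons) to get the factor-$|T|\le n$ bound. One caveat on the \enquote{moreover} clause: the chain of inequalities that produces $n/M$ is $M\cdot OPT(w)\le \max_{T} w(T)\le n\cdot OPT(g)$, which needs the hypothesis that the unconstrained welfare is \emph{at least} $M\cdot OPT(w)$; the statement's phrase \enquote{upper bounded by $M\cdot OPT(w)$} points the inequality the wrong way, and your phrase \enquote{substituting the hypothesis\ldots\ which rearranges} quietly elides the direction, but your intent matches the inequality that actually closes the argument.
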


\begin{observation}\label{subadd:approx:logUB}
    For any $n$-agent setting with a symmetric subadditive value function and uniform costs, there exists a constant-time algorithm with access to a value oracle that computes a set whose utility is an $n$-approximation of unconstrained welfare.
\end{observation}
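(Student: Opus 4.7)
The plan is to show that a single value query to $f(1)$ suffices: outputting the singleton set $\{1\}$ (or any singleton, by symmetry) already gives the desired $n$-approximation. The algorithm is therefore trivial, and the content lies in the approximation analysis, which proceeds by combining subadditivity with symmetry and uniform costs.

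First, I would observe that by the standing model assumption that each single agent has value at least its cost, the singleton $\{1\}$ is feasible, so $\rho(\{1\}) = c/f(1) \le 1$ and hence $g(\{1\}) = (1-\rho(\{1\}))f(1) = f(1) - c = w(\{1\})$. Thus the utility of the output coincides with its welfare, which removes the need to reason separately about the principal's share.

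Next, for an arbitrary set $S\subseteq[n]$ with $|S|=k$, subadditivity combined with symmetry gives
\[
f(S) \;=\; f(k) \;\le\; \sum_{i=1}^{k} f(\{i\}) \;=\; k\cdot f(1),
\]
so with the uniform cost $c$,
\[
w(S) \;=\; f(k) - ck \;\le\; k\bigl(f(1)-c\bigr) \;=\; k\cdot w(\{1\}) \;\le\; n\cdot w(\{1\}).
\]
Taking the maximum over $S\subseteq[n]$ yields $\max_{S\subseteq[n]} w(S) \le n\cdot g(\{1\})$, which is exactly the claimed $n$-approximation of unconstrained welfare.

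Finally, I would note that the algorithm needs only the single value query $f(1)$ (the cost $c$ is part of the input specification), so it runs in constant time as claimed. There is no real obstacle here beyond carefully invoking the model's assumption that singleton agents are individually feasible, which ensures $g(\{1\}) = w(\{1\})$ rather than just a bound in one direction; the rest is the standard one-line subadditivity argument specialized to the symmetric uniform-cost case.
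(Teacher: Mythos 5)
Your proof is correct and follows essentially the same route as the paper: both decompose the welfare of an arbitrary set via subadditivity into a sum of singleton welfares, then use symmetry and uniform costs to collapse all singletons to $w(\{1\}) = g(\{1\})$, yielding the $n$-approximation with a single query to $f(1)$. The only cosmetic difference is that the paper states the bound first for general (non-symmetric) subadditive instances by selecting the agent maximizing $g(i)$, and then remarks that a single query suffices in the symmetric uniform-cost case, whereas you specialize directly; the underlying estimate $w(S) \le \sum_{i\in S} w(\{i\}) \le n\, w(\{1\})$ is the same.
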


We will now show an example with $n$ agents and a symmetric additive value function $f$. In this example, only singleton teams are feasible, meaning that the optimal utility and optimal (feasible) welfare coincide but are bounded away from the unconstrained welfare by a factor of $n$.

\begin{observation}\label{demand_set:tight}
    Even for symmetric additive value functions, the gap between unconstrained welfare and utility is lower bounded by $n$.
\end{observation}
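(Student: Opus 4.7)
The plan is to exhibit an explicit symmetric additive instance in which only singleton teams are feasible, while the unconstrained welfare grows linearly in $n$. Since $f$ is additive, its value per additional agent is constant, so feasibility of larger teams is easy to control through the marginal cost ratio, and the singleton utility can be computed directly.

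Concretely, I would fix a parameter $0<\delta<1/2$ and define $f(k)=k$ for every $k\in\{0,\ldots,n\}$ together with the uniform cost $c=\tfrac12+\delta$. Then $f$ is symmetric additive (and in particular monotone). The marginal contribution function is $m(k)=1$ for all $k\ge 1$, so the share of the team $[k]$ is $\rho(k)=kc=k(\tfrac12+\delta)$. Hence $\rho(1)=\tfrac12+\delta<1$, so the singleton $[1]$ is feasible, whereas for every $k\ge 2$ we have $\rho(k)\ge 1+2\delta>1$, so no team of two or more agents is feasible.

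Next I would compute the two benchmarks. On the one hand, $w(k)=k-kc=k(\tfrac12-\delta)$ is strictly increasing in $k$, so the unconstrained welfare $\max_{S\subseteq[n]}w(S)$ equals $w(n)=n(\tfrac12-\delta)$. On the other hand, since only singletons are feasible, $OPT(g)=g(1)=(1-\rho(1))f(1)=\tfrac12-\delta$. Taking the ratio yields
\[
\frac{\max_{S\subseteq[n]}w(S)}{OPT(g)} \;=\; \frac{n(\tfrac12-\delta)}{\tfrac12-\delta} \;=\; n,
\]
which matches the upper bound of \cref{subadditive n-UB} and establishes the claim.

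There is essentially no obstacle: the construction is elementary, additivity makes both the share and welfare computations exact, and the choice $c$ slightly above $\tfrac12$ is the unique tuning needed to ensure that exactly the singletons are feasible. The only small care is to keep $\delta>0$ so that $c<1$ and hence $g(1)>0$ (otherwise the ratio would be $0/0$); any $\delta\in(0,\tfrac12)$ works and gives the ratio exactly $n$.
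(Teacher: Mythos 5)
Your proposal is correct and uses essentially the same construction as the paper: a symmetric additive value function with a uniform cost tuned so that only singletons are feasible, making the singleton utility the best achievable while the unconstrained welfare grows linearly. The paper picks $f(k)=kn$, $c=n-1$ (so $g(1)=w(1)=1$ and $w(n)=n$), while you pick $f(k)=k$, $c=\tfrac12+\delta$; both give the ratio exactly $n$.
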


\begin{proof}
    Consider the following symmetric additive function such that $f(k)=kn$ for every $i$ and $c(k)=k(n-1)$, and let $n\ge 3$.
    Notice that $\rho(k)=k\frac{n-1}{n}$ which is feasible only for $k=1$.
    Thus $OPT(w)=OPT(g)=w(1)=g(1)=1$.
    On the other hand, $w(n)=f(n)-c(n)=n^2-n(n-1)=n$. As a result, we obtain a ratio of $n$. 
\end{proof}

\section{Further Results for Submodular}

\subsection{Welfare Maximization for Symmetric Submodular}
We show that for symmetric submodular a form of binary search is sufficient to find the set maximizing.

\begin{proposition}
\label{s-submod:binary}
    There exists an algorithm that obtains a set with optimal welfare for any $n$-agent setting with submodular value function. The algorithm runs in logarithmic time using value oracles.
\end{proposition}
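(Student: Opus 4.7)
The plan is to reduce the problem to a one-dimensional monotone threshold search and solve it by binary search on the team size.

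First, I would observe that since $f$ is symmetric and submodular, the marginal function $m(k)=f(k)-f(k-1)$ is non-increasing in $k$. This is the sole structural fact that the argument relies on, and it is immediate from submodularity applied to the nested chain $[k-1]\subseteq[k]$.

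Next, I would show two monotonicity consequences on the domain $\{1,\dots,n\}$:
\begin{enumerate}
\item The share $\rho(k)=ck/m(k)$ is non-decreasing in $k$. Indeed $(k-1)m(k)\le (k-1)m(k-1)\le k\,m(k-1)$, which rearranges to $\rho(k-1)\le\rho(k)$. In particular, feasibility is downward-closed: there is a threshold $k^{\star}\in\{0,1,\dots,n\}$ such that $[k]$ is feasible iff $k\le k^{\star}$.
\item The welfare $w(k)=f(k)-ck$ is non-decreasing on the feasible range. For any feasible $k\ge 1$ we have $m(k)\ge ck\ge c$, hence $w(k)-w(k-1)=m(k)-c\ge 0$.
\end{enumerate}
Combining the two, the largest feasible size $k^{\star}$ simultaneously maximizes welfare over all feasible teams, so it suffices to return $[k^{\star}]$.

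Finally, to find $k^{\star}$ I would run standard binary search on $k\in\{0,1,\dots,n\}$, at each candidate $k$ testing feasibility by computing $\rho(k)=ck/(f(k)-f(k-1))$ via two value queries. Monotonicity of $\rho$ guarantees correctness of the binary search, and the total number of value queries is $O(\log n)$, as required. There is no real obstacle; the only subtlety is verifying the monotonicity of $\rho$, and checking the boundary $k=0$ (trivially feasible with $w(0)=0$) to make the search well-defined.
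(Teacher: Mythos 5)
Your proof is correct, and it actually takes a cleaner route than the paper's own argument, so the comparison is worth spelling out. The paper establishes two separate structural facts and then combines them: (i) feasibility is downward-closed, i.e., there is a threshold $k^{\star}$ such that $[k]$ is feasible iff $k \le k^{\star}$ (proved via the general submodular monotonicity $\rho_S(S)\le \rho_T(T)$ for $S\subseteq T$); and (ii) the welfare sequence $w(1),\dots,w(n)$ is unimodal, shown by an inductive argument on the sign of $w(k+1)-w(k)=m(k+1)-c_{k+1}$. One then binary-searches for $k^{\star}$ and for the unimodal peak $k_0$, and returns $\min(k_0,k^{\star})$. Your proof shortcuts this: you observe that for any \emph{feasible} $k\ge 1$ one has $m(k)\ge ck\ge c$, hence $w(k)-w(k-1)=m(k)-c\ge 0$, so welfare is non-decreasing on the entire feasible prefix $\{0,\dots,k^{\star}\}$ and the answer is simply $[k^{\star}]$. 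This collapses the two-search structure into a single binary search for the feasibility threshold and makes the optimality claim transparent (it shows $k_0\ge k^{\star}$ always holds, which the paper's unimodality argument does not make explicit).

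Two small remarks. First, you wrote the argument for uniform costs $c(k)=ck$, whereas the paper states its two lemmas for the partially symmetric setting $c_1\le\cdots\le c_n$. Your key step generalizes without difficulty: for feasible $[k]$ one has $m(k)\ge c([k])\ge c_k$, and thus $w(k)-w(k-1)=m(k)-c_k\ge 0$; likewise $\rho(k-1)=c([k-1])/m(k-1)\le c([k])/m(k)=\rho(k)$ since $c([k-1])\le c([k])$ and $m(k-1)\ge m(k)$. It would be worth stating the heterogeneous-cost version so the result matches the generality of the paper's lemmas. Second, the monotonicity of $\rho$ is the only fact needed for the binary search to be correct; the monotonicity of $w$ on the feasible range is what lets you return $[k^{\star}]$ directly rather than searching for a welfare peak, and you rightly identify it as the substantive observation.
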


    This follows from the fact that there is a clear order for both the welfare $w(k)$ and the shares allocation function $\rho$ as shown in the following two lemmas.

    \begin{lemma}
        For every symmetric submodular function $f$ and $c_1\le ...\le c_n$, there exists some $k$ for which every smaller size is feasible, and every larger size is infeasible.
    \end{lemma}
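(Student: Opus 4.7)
The plan is to establish a single monotonicity statement: $\rho([k])$, viewed as a function of the size $k$, is non-decreasing. Once this is in hand, the lemma follows immediately by taking $k^\ast := \max\{k \in \{0,1,\ldots,n\} : \rho([k]) \le 1\}$ (with the convention $\rho(\emptyset) = 0$); every smaller size is then feasible by monotonicity, and every larger size is infeasible.

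The first step is to compute $\rho([k])$ explicitly using symmetry. Because $f$ depends only on cardinality, for every $i \in [k]$ we have $f(i : [k]) = f(k) - f(k-1) = m(k)$. Hence
\[
\rho([k]) \;=\; \sum_{i=1}^{k} \frac{c_i}{f(i:[k])} \;=\; \frac{c([k])}{m(k)},
\]
where $c([k]) = \sum_{i=1}^k c_i$. Note that at this point the assumed cost ordering $c_1 \le \dots \le c_n$ is already being used implicitly: a \emph{size} $k$ is declared feasible exactly when the cheapest $k$-team $[k]$ is feasible, which is the smallest-share team of that size.

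The second step is to verify the two required monotonicities. The numerator $c([k])$ is strictly increasing in $k$ since all costs are strictly positive. For the denominator, submodularity of $f$ (applied to the chain $[k-1] \subseteq [k]$ together with an extra element, using symmetry to strip identities) gives the standard diminishing-returns property $m(k) \le m(k-1)$. Combining these, $\rho([k]) = c([k])/m(k)$ is non-decreasing (in fact strictly increasing) in $k$.

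The third and final step is the threshold argument described above: by monotonicity, $\{k : \rho([k]) \le 1\}$ is an initial segment of $\{0,1,\ldots,n\}$, and its maximum $k^\ast$ is the desired threshold. The only substantive ingredient is the diminishing-marginals fact for symmetric submodular $f$; there is no real obstacle, and the whole argument is a short calculation.
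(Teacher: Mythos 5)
Your proof is correct and takes essentially the same approach as the paper's: both establish that $\rho([k])$ is non-decreasing in $k$ and then argue that the feasible sizes form an initial segment. The paper phrases this via the general submodular fact that $\rho_S(S)\le\rho_T(T)$ whenever $S\subseteq T$, whereas you specialize to symmetry to get the clean identity $\rho([k]) = c([k])/m(k)$ and then read off monotonicity from the increasing numerator and non-increasing denominator---an equivalent, if more explicit, presentation of the same idea.
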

    \begin{proof}
        This follows directly from the fact that for submodular functions for every two sets $S\subseteq T$, we have $\rho_S(S)\le \rho_T(T)$.
    \end{proof}
    \begin{lemma}
        For every symmetric submodular function $f$ and $c_1\le ...\le c_n$ there exists some $k$ for which $w(1)\le ...\le w(k)\ge ...\ge w(n)$
    \end{lemma}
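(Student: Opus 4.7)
The plan is to show that the finite sequence $w(1), w(2), \ldots, w(n)$ is unimodal by proving that its sequence of first differences is monotonically non-increasing. Once this monotonicity of differences is established, the existence of the required index $k$ follows immediately: let $k$ be the largest index such that $w(k) \ge w(k-1)$ (or $k=1$ if no such index exists), and then $w$ is non-decreasing on $\{1,\ldots,k\}$ and non-increasing on $\{k,\ldots,n\}$.

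Concretely, I would first write
\[
w(\ell) - w(\ell-1) \;=\; \bigl(f(\ell)-f(\ell-1)\bigr) - c_\ell \;=\; m(\ell) - c_\ell,
\]
using that the team considered at size $\ell$ is $[\ell]$ (the $\ell$ cheapest agents, optimal for size $\ell$ because $f$ is symmetric), and that $c$ is additive so $c([\ell]) - c([\ell-1]) = c_\ell$. Then I would argue that the map $\ell \mapsto m(\ell) - c_\ell$ is non-increasing in $\ell$, which decomposes into two independent monotonicity facts.

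The first fact is that $m(\ell)$ is non-increasing in $\ell$. This is a standard consequence of submodularity combined with symmetry of $f$: by symmetry, $m(\ell) = f(\ell) - f(\ell-1)$ equals $f(i:S)$ for any $S$ with $|S|=\ell-1$ and any $i\notin S$; by submodularity, this marginal contribution weakly decreases when the base set grows, so $m(\ell) \ge m(\ell+1)$. The second fact is that $c_\ell$ is non-decreasing in $\ell$, which is simply the hypothesis $c_1 \le c_2 \le \cdots \le c_n$. Combining the two, $m(\ell) - c_\ell$ is non-increasing in $\ell$, as needed.

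The argument is elementary and I do not expect a substantive obstacle; the only subtlety worth spelling out is the reduction of the marginal of the submodular set function to a one-parameter marginal via symmetry, which is the step that requires both of the hypotheses on $f$ to be invoked simultaneously. Once the differences are shown to be non-increasing, the unimodality conclusion is immediate, and $k$ can be defined as $k = \min\{\ell : m(\ell+1) - c_{\ell+1} \le 0\}$ (with $k=n$ if the set is empty).
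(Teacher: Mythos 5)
Your proposal is correct and takes essentially the same approach as the paper: both reduce to the identity $w(\ell)-w(\ell-1)=m(\ell)-c_\ell$ and then observe that $m(\ell)$ is non-increasing (symmetry plus submodularity) while $c_\ell$ is non-decreasing by hypothesis. The paper packages this as two inductive ``propagation'' implications ($w(k)\ge w(k+1)\Rightarrow w(k+1)\ge w(k+2)$ and its mirror), while you state directly that the sequence of first differences is non-increasing (i.e.\ $w$ is discretely concave), which yields unimodality immediately; this is the cleaner phrasing of the same argument.
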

    \begin{proof}
        We show two claims which together imply the statement of the lemma (by induction) 
    \begin{enumerate}
        \item $w(k)\ge w(k+1)\Longrightarrow w(k+1)\ge w(k+2)$
        \item $w(k)\ge w(k-1)\Longrightarrow w(k-1)\ge w(k-2)$
    \end{enumerate}
    
    First, notice that $w(k)-w(k-1)=m(k)-c_k$ since
    $$w(k)-w(k-1)=(f(k)-c(k))-(f(k-1)-c(k-1))=m(k)-c_k.$$
    
    The first claim holds since
    \begin{align*}
        &w(k+2)-w(k+1)=m(k+1)-c_{k+1}
        \le m(k)-c_k=w(w+1)-w(k)\le 0,
    \end{align*}
    where the first inequality is due to $m(k+1)\le m(k)$ and $c_{k+1}\ge c_k$.

    The second claim holds since
    \begin{align*}
        &w(k-1)-w(k-2)=m(k-1)-c_{k-1}
        \ge 
        m(k)-c_k=w(k)-w(k-1)\ge 0,
    \end{align*}
    where the first inequality is due to $m(k-1)\ge m(k)$ and $c_{k-1}\le c_k$.
    \end{proof}

\subsection{Lower Bound for Welfare Approximation}
\label{gen:submod:LB}

\begin{proposition}\label{e/e-1}
    Any algorithm that uses only a polynomial number of value queries for $n$-agent settings with submodular value function cannot guarantee an approximation ratio for welfare better than $\frac{e}{e-1}$ unless $\classP = \classNP$.
\end{proposition}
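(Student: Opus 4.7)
The plan is to reduce from the NP-hardness of approximating maximum $k$-cover within a factor better than $\frac{e}{e-1}$ (Feige 1998), adapting the reduction of \cite{inapproximability} that was given for the principal's utility so that it applies to the welfare objective. Concretely, given a max-$k$-cover instance with universe $U$ and sets $\mathcal{F} = \{F_1,\dots,F_n\}$, the plan is to build an $n$-agent contract instance with one agent per $F_i$, whose value function $f$ is a mildly modified version of the coverage function $g(S) = |\bigcup_{i\in S} F_i|$. Since $g$ is monotone submodular, so is any small perturbation of it we use, and $f$ admits a polynomial-time value oracle.

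The first main step is to force the feasibility constraint $\rho(S) \le 1$ to coincide, on relevant sets, with the cardinality restriction $|S| \le k$ used in max-cover. To this end I would pad the instance by attaching a small unique private element $e_i$ to each $F_i$, giving $\tilde F_i = F_i \cup \{e_i\}$; this keeps the value function monotone submodular and guarantees strictly positive marginals $f(i:S) \ge 1$ for every $i\in S\subseteq[n]$. Then, with uniform costs $c_i = c$ chosen so that $ck$ is barely at most $1$, sets of size at most $k$ satisfy $\rho(S) \le c|S| \le ck \le 1$, while sets of size larger than $k$ can be made infeasible by an additional scaling/padding trick that bounds marginals uniformly from above (for instance, by duplicating each element of $U$ so that marginals on the coverage portion are essentially identical across sets $S$ of interest).

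The second main step is to argue that on feasible sets the welfare tracks the coverage value tightly. Because costs are set to $c = \Theta(1/k)$ while $f(S) = \Omega(|U|)$ for any set $S$ that covers a constant fraction of the universe, the cost term $c(S) \le ck = O(1)$ is negligible next to $f(S)$ once $|U|$ is taken sufficiently large (we can scale $|U|$ polynomially by amplifying the max-cover instance). Therefore $w(S) = f(S) - c(S) = (1-o(1)) f(S)$ on all relevant feasible $S$, and any $\alpha$-approximation algorithm for optimal welfare yields an $\alpha(1+o(1))$-approximation for max-$k$-cover, contradicting Feige's bound when $\alpha < \frac{e}{e-1}$ unless $\classP = \classNP$.

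The main obstacle I foresee is the joint calibration of costs and padding: costs must be large enough that $\rho(S) > 1$ for every $S$ with $|S| > k$ (so feasibility truly enforces the cardinality constraint), yet small enough that $c(S)$ is negligible relative to $f(S)$ on the optimal $k$-cover set (so welfare approximation collapses onto coverage approximation). The difficulty is that in an adversarial max-cover instance, marginals $f(i:S)$ can vary considerably across agents and sizes, so simply choosing a single constant $c$ need not make feasibility equivalent to $|S| \le k$. Handling this will rely on the padding tricks above—attaching private elements to regularize marginals from below, and duplicating universe elements to regularize them from above—so that the feasibility curve has a sharp threshold at $|S| = k$, matching the structure needed to transport Feige's hardness to welfare.
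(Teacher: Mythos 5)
Your high-level plan (reduce from the Feige-style hardness of max-$k$-cover, choose costs so that the feasibility constraint $\rho(S)\le 1$ essentially enforces $|S|\le k$, and then argue welfare tracks coverage) is the right one, and it is the route the paper takes. But the specific padding machinery you propose does not carry it out, and the tension you yourself flag in the last paragraph is a genuine gap, not a calibration detail.

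The private-element trick forces $f(i:S)\ge 1$ for all $i\in S$. That gives an \emph{upper} bound $\rho(S)\le c|S|$ — useful for showing small sets are feasible, but exactly the wrong direction for the other half of the argument. To make sets of size $>k$ infeasible you need a \emph{lower} bound $\rho(S)>1$, i.e.\ an upper bound $f(i:S)\le \beta$ uniform over $i$ and $S$, with $c/\beta$ large enough. The private elements push the marginals up, away from that goal, and the second trick (duplicating universe elements) makes things worse: duplication multiplies the coverage portion of every marginal by the same factor, so it preserves the ratios between marginals rather than equalizing them, and it only increases $f(i:S)$ further. The two tricks do not jointly produce a sharp threshold at $|S|=k$; as the coverage contribution grows, the constraint $\rho(S)\le 1$ decouples from the cardinality constraint, and as the private element dominates, the function becomes essentially additive and the coverage hardness evaporates. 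You anticipate the obstacle but do not exhibit a choice of parameters that resolves it, and I do not think one exists within this construction.

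The paper avoids all of this by working directly with a normalized coverage function $f$ satisfying $f(\{i\})=1/k$ for every $i$ (the hard instance family from \cite{inapproximability}). Submodularity then gives the clean upper bound $f(i:S)\le f(\{i\})=1/k$ for every $S$ and $i$, so that with uniform cost $c=1/k^2$ one has $\rho(S)=\sum_{i\in S}c/f(i:S)\ge |S|/k$, hence $|S|>k$ is automatically infeasible — no padding needed. In the yes-case, the $k$ covering sets are pairwise disjoint (since $f(S')=1$ meets the subadditive bound $\sum_i f(\{i\})$ with equality), so $f(i:S')=1/k$ exactly, $\rho(S')=1$, and $w(S')=1-1/k\to 1$. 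In the no-case, any feasible $S$ has $|S|\le k$ and $w(S)\le f(S)\le 1+\varepsilon-e^{-1}$. The ratio converges to $e/(e-1)$. If you replace your padding construction with this normalization (and invoke submodularity for the marginal upper bound), your proof sketch becomes the paper's proof.
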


The proof relies on the following statement about a promise problem.
\begin{fact}[see Proposition 3.2 in \cite{inapproximability} following \cite{inapproximability:prop}]
For any $M>1$, for $\varepsilon<e^{-1}$, and $k\in \mathbb{N}$ let $f$ normalized unweighted coverage function such that $f(i)=\frac{1}{k}$ it is \classNP-hard to distinguish between the two scenarios:
\begin{enumerate}
    \item There exists a set $S'$ of size $k$ for which $f(S')=1$.
    \item For every $\aL < M$ and every set $S$ of size $\aL k$, we have $f(S)\le 1 +\varepsilon - e^{-\aL}$.
\end{enumerate}
\end{fact}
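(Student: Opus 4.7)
The plan is to follow Feige's classical reduction (\cite{inapproximability:prop}, recast in \cite{inapproximability}) from the gap version of Label Cover, which is \classNP-hard via the PCP theorem combined with Raz's parallel repetition. Given a Label Cover instance, I would produce in polynomial time a normalized unweighted coverage function $f$ with $f(i)=1/k$ for each ground element, so that Case~(1) of the Fact holds precisely when the Label Cover instance is satisfiable, and Case~(2) holds when it is far from satisfiable.

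The combinatorial engine is a \emph{partition system}: for parameters $k,L,\varepsilon$, a universe $U$ of size $\mathrm{poly}(k,L,1/\varepsilon)$ equipped with $L$ partitions of $U$ into $k$ parts each, such that any collection of $\alpha k$ parts chosen from distinct partitions covers at most a $(1-e^{-\alpha}+\varepsilon)$-fraction of $U$. This is built by the Nisan--Wigderson probabilistic method: give each point of $U$ an independent uniform label in $[k]$ for every partition, so that any fixed $s$ distinct-partition parts miss a given point with probability $(1-1/k)^s \approx e^{-s/k}$; Chernoff concentration then yields the uniform covering bound once $|U|$ is polynomially large.

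The reduction itself proceeds as follows. Take a Label Cover instance on a variable set $V$ with $|V|=k$, label set $[L]$, and projection constraints $\pi_{u,w}$ on each edge. Attach an independent copy $U_w$ of the partition-system universe to every $w\in V$; the ground set of $f$ is $\bigsqcup_{w\in V} U_w$, and the coverage function's sets are indexed by (variable, label) pairs $e_{u,\ell}$, where $e_{u,\ell}$ covers, in each neighbor $U_w$, the part indexed by $\pi_{u,w}(\ell)$. After balancing the sizes of the $U_w$'s and normalizing so that $f$ assigns measure $1$ to the union of all parts, one verifies that $f(e_{u,\ell}) = 1/k$ for every element, making $f$ a normalized unweighted coverage function of the required form.

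Completeness is immediate: a satisfying labeling $\sigma:V\to[L]$ yields the $k$-element collection $\{e_{u,\sigma(u)}\}_{u\in V}$, which covers each $U_w$ in full because the $L$ partitions of $U_w$ get saturated by the consistent labels of $w$'s neighbors, so $f(S')=1$. The main obstacle will be \emph{soundness}: assuming no labeling satisfies more than a tiny fraction of Label Cover constraints, I must show that for every $\alpha<M$ and every $\alpha k$-element set $S$, the coverage in each $U_w$ is at most $1-e^{-\alpha}+\varepsilon$, yielding $f(S)\le 1+\varepsilon-e^{-\alpha}$ after averaging over $w$. This step requires an averaging argument showing that the parts hit by $S$ inside any fixed $U_w$ behave essentially as if drawn from \emph{distinct} partitions; a sufficiently small Label Cover soundness parameter (obtained by enough parallel repetitions) rules out the alternative, since two elements of $S$ projecting to the same partition at many $w$'s would assemble into a consistent labeling satisfying too many constraints. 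Tuning the soundness of the Label Cover below $\varepsilon$ absorbs all slack into the $+\varepsilon$ term in the Fact.
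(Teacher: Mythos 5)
You have set yourself a much harder task than the paper does: the paper never proves this statement. It is imported as a black-box Fact, cited to Proposition~3.2 of \cite{inapproximability}, which in turn follows Feige's max-$k$-cover threshold \cite{inapproximability:prop}; the paper's own contribution is only the downstream reduction in \cref{e/e-1}. So there is no ``paper proof'' to match -- the relevant comparison is whether your reconstruction of Feige's argument is itself sound, and as written it is not.

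The central flaw is in how the partition-system gadget is wired into the reduction. Your completeness step covers each universe $U_w$ by taking, from \emph{distinct} partitions (one per neighbor $u$ of $w$), the parts carrying the common index $\sigma(w)$. But parts chosen from distinct partitions are exactly the configuration that the gadget's soundness property says can cover at most a $1-e^{-\alpha}+\varepsilon$ fraction -- so if your YES case covered $U_w$ fully this way, the gadget you constructed could not have the property you need for the NO case. In Feige's construction the roles are reversed: within each gadget the partitions are indexed by the projected answer and the $k$ parts of a partition are indexed by the $k$ provers, so a consistent strategy selects all $k$ parts of a \emph{single} partition (covering the gadget exactly, since it is a partition), while a cover that avoids reusing a partition is forced into the $1-e^{-\alpha}$ regime, and two selected parts from the same partition in different prover slots are precisely what the decoding step turns into a consistent answer pair. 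Second, the normalization $f(i)=\frac{1}{k}$ does not follow from \enquote{balancing and normalizing}: in your layout a set $e_{u,\ell}$ touches only the universes of $u$'s neighbors, so its measure is on the order of $\deg(u)/(k\,|V|)$, not $1/k$; in Feige's $k$-prover construction the identity falls out because each set has measure roughly $\frac{1}{Q}\cdot\frac{1}{k}$ while the perfect cover has size $kQ$ -- the \enquote{$k$} of the Fact is that product, not the number of Label-Cover variables, which you cannot simply set equal to $k$. Third, the soundness argument -- converting coverage above $1+\varepsilon-e^{-\alpha}$ for some $\alpha<M$ into a randomized prover strategy beating the parallel-repetition soundness -- is the technical heart of the theorem and is only asserted in your final paragraph. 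In short, the proposal is an outline of the folklore approach with its key mechanisms mis-assembled; for the purposes of this paper, citing the result, as the authors do, is the correct move.
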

\begin{proof}[Proof of \cref{e/e-1}]
    We want to show that for every feasible set $S$ in case 2 of the promise problem, the ratio $w(S')/w(S)$ is lower bounded by a constant.

    Notice that for every set $S$ we have $f(i:S)\le f(i)=\frac{1}{k}$, thus for the set $S'$ in case (1) we have $f(i:S')=f(S')-f(S'\setminus i)\ge f(S')-\frac{k-1}{k}\ge \frac{1}{k}$ hence $f(i:S')=\frac{1}{k}$.
    
    We choose a uniform cost function $c$ for which in case 1, $\rho(S')=1$. Recall that $|S'|=k$, thus $\rho(S')=\sum_{i\in S'}\frac{c}{f(i:S')}=k^2c=1$ iff $c=\frac{1}{k^2}$. So $w(S')=f(S')-kc=1-\frac{1}{k}$.
    
    Now since for every set $S$ we have $f(i:S)\le \frac{1}{k}$, in case 2, every $S$ of size larger than $k$ will not be feasible. We are left to analyze the case where $|S|\le k$. In this case $|S|=\aL k$ for some $\aL \le 1$. thus $f(S)\le (1+\varepsilon) -e^{-\aL}\le (1+\varepsilon)-e^{-1}$ which serves as an upper bound for $w(S)$ as well.
    
    We get a ratio $w(S')/w(S)\ge \frac{1-\frac{1}{k}}{(1+\varepsilon)-e^{-1}}$. Taking $k$ to infinity and $\varepsilon$ to zero we approach a lower bound of $\frac{1}{1-e^{-1}}=\frac{e}{e-1}$. 
\end{proof}

\end{document}